\pdfoutput=1
\documentclass[11pt]{article}


\topmargin=-0.45in      %
\evensidemargin=0in     %
\oddsidemargin=0in      %
\textwidth=6.5in        %
\textheight=9.0in       %
\headsep=0.25in         %

\usepackage{authblk}
\usepackage{breakcites}
\usepackage{amsmath,amsfonts,amssymb,bbm,amsthm}
\usepackage{mathtools}
\usepackage{graphicx}
\usepackage{framed}

\newcommand{\Infer}{\Longrightarrow}
\newcommand{\ud}{\mathrm{d}}
\newcommand{\ab}{\allowbreak}
\newcommand{\Reduce}{\Longleftarrow}

\DeclareMathOperator*{\E}{E}

\newcommand{\dV}{\mathcal{V}}
\newcommand{\dX}{\mathcal{X}}
\newcommand{\Val}{\mathrm{Val}}
\newcommand{\op}{\overline{\phi}}
\newcommand{\up}{\underline{\phi}}
\newcommand{\sgm}{\sigma}
\newcommand{\sgp}{\sigma}
\newcommand{\osg}{{\overline{\sigma}}}
\newcommand{\ovp}{\overline{\varphi}}
\newcommand{\uvp}{\underline{\varphi}}
\newcommand{\Rev}{\textsc{Rev}}
\newcommand{\Utl}{\textsc{Utl}}
\newcommand{\Eff}{\textsc{Eff}}
\newcommand{\Real}{\mathbb{R}}
\newcommand{\One}{\mathbf{1}}
\newcommand{\Zero}{\mathbf{0}}
\newcommand{\frd}{\mathfrak{d}}
\newcommand{\frs}{\mathfrak{s}}
\newcommand{\frT}{\mathfrak{T}}
\newcommand{\Prog}{\mathsf{Prog}}
\newcommand{\val}{\mathsf{value}}

\newcommand{\F}{\mathcal{F}}

\newcommand{\I}{\mathcal{I}}
\newcommand{\dH}{\mathcal{H}}

\newcommand{\bal}{\mathsf{bal}}

\newcommand{\hu}{\hat{u}}
\newcommand{\bcdot}{\boldsymbol{\cdot}}
\newcommand{\SM}{\mathsf{S}}
\newcommand{\PM}{\mathsf{P}}
\newcommand{\SMA}{M^\SM}

\newcommand{\OPT}{\mathsf{OPT}}

\newtheorem{theorem}{Theorem}[section]
\newtheorem{definition}[theorem]{Definition}

\newtheorem{lemma}[theorem]{Lemma}
\newtheorem{corollary}[theorem]{Corollary}
\newtheorem{mechanism}{Mechanism}
\newtheorem{observation}[theorem]{Observation}
\newtheorem{example}[theorem]{Example}

\pdfinfo{
/Title(Optimal dynamic mechanisms with ex-post IR via bank accounts)
/Author(Vahab Mirrokni, Renato Paes Leme, Pingzhong Tang, Song Zuo)
}

\title{Optimal dynamic mechanisms with ex-post IR via bank
       accounts\thanks{Contacts: \texttt{\{mirrokni, renatoppl\}@google.com},
       \texttt{\{kenshinping, songzuo.z\}@gmail.com}.}}

\author[$\dag$]{Vahab Mirrokni}
\author[$\dag$]{Renato Paes Leme}
\author[$\ddag$]{Pingzhong Tang}
\author[$\ddag$]{Song Zuo}

\affil[$\dag$]{Google Research}
\affil[$\ddag$]{Institute for Interdisciplinary Information Sciences,
                Tsinghua University}

\begin{document}

\maketitle



\begin{abstract}
Lately, the problem of designing multi-stage dynamic mechanisms has been shown
to be both theoretically challenging and practically important. In this paper,
we consider the problem of designing revenue optimal dynamic mechanism for a
setting where an auctioneer sells a set of items to a buyer in multiple stages.
At each stage, there could be multiple items for sale but each item can only
appear in one stage. The type of the buyer at each stage is thus a
multi-dimensional vector characterizing the buyer's valuations of the items at
that stage and is assumed to be stage-wise independent. Compared with the
single-shot multi-dimensional revenue maximization, designing optimal dynamic
mechanisms for this setting brings additional difficulties in that mechanisms
can now branch on different realizations of previous random events and secure
extra revenue.

In particular, we propose a novel class of mechanisms called {\em bank account}
mechanisms. Roughly, a bank account mechanism is no different from any
stage-wise individual mechanism except for an augmented structure called bank
account, a real number for each node that summarizes the history so far. This
structure allows the auctioneer to store deficits or surpluses of buyer utility
from each individual mechanism and even them out on the long run. While such
simplicity allows for easy implementation in practice, it turns out it is also
sufficient to guarantee optimal revenue. We first establish that the optimal
revenue from any dynamic mechanism in this setting can be achieved by a
bank account mechanism, and we provide a simple characterization of the set of
incentive compatible and ex-post individually rational bank account mechanisms.
Based on these characterizations, we then investigate the problem of finding the
(approximately) optimal bank account mechanisms. We prove that there exists a
simple, randomized bank account mechanism that approximates optimal revenue up to a
constant factor. Our result is general and can accommodate previous
approximation results in single-shot multi-dimensional mechanism design. Based
on the previous mechanism, we further show that there exists a deterministic
bank account mechanism that achieves constant-factor approximation as well. Finally, we
consider the problem of computing optimal mechanisms when the type space is
discrete and provide an FPTAS via linear and dynamic programming.
\end{abstract}

\newpage



%

\section{Introduction}
  A wide range of internet advertising applications including sponsored search
and advertising exchanges employ a sequence of repeated auctions to sell their
inventory of page-views or impressions. In a variety of settings, optimizing
revenue for the seller (e.g., the search engine, publishers, or advertising
exchange) boils down to designing auctions that allocate the most effective ads
for users, and at the same time extract enough revenue and result in a stable
market without incurring too much cost to enter the market.

In order to achieve this goal, the seller should provide good incentives for
advertisers to reveal their true values. To cope with this goal, in many
settings, advertising exchanges and search engines enforce incentive
compatibility or truthfulness at the level of each individual auction, and as a
result, end up running repeated 2nd-price auctions or variants of repeated
Myerson optimal auctions~\cite{ostrovsky2011reserve}.

In these cases, we say that the seller runs a {\em static auction} optimizing
the social welfare and/or revenue of each individual auction. Advertisers, in
turn, declare their bids for a sequence of such static auctions and not for one
auction. On the other hand, more traditional online advertising campaigns employ
long-term contracts between the seller and buyers which can result in higher
revenue per impression for sellers.

One reason for lower per-impression revenue from the auction-based advertising
is that the resulting individual auctions may become too
thin~\cite{celis2014buy}, and therefore, they may lack enough competition
to incur higher prices. More traditional contract-based advertising takes care
of this problem by bundling their impressions together, and achieving higher
average revenue per impression (while keeping the surplus of advertisers
positive). Traditional contract-based advertising, on the other hand, does not
have several advantages of the auction-based advertising, e.g., they incur a
high cost for entering the market for smaller advertisers, and may not allow for
real-time fine-tune targeting. A desirable goal is to design a solution that
combines the advantages of auction-based and contract-based advertising. One way
to achieve this goal is {\em dynamic mechanisms} which allow for combining
several auctions, and optimize seller's revenue across all auctions.

Motivated by the above discussion, the area of dynamic mechanism design has
attracted a large body of research work in the last decade
\cite{cavallo2008efficiency,pai2008optimal,pavan2008dynamic,cavallo2009efficient,gershkov2009dynamic,bergemann2010dynamic,gershkov2010efficient,athey2013efficient,kakade2013optimal,papadimitriou2014complexity,pavan2014dynamic}
(see~\cite{bergemann2011dynamic} for a survey). In these settings, unlike
repeated auctions, the pricing and allocation of items in each auction may
depend on the history of the bids, prices, and allocations in the previous
auctions. While such dynamic auctions can produce more revenue for the sellers,
a couple of issues has slowed down their adoption in the market. Main issues
with a majority of results in dynamic mechanism design are two-fold:

{\bf \noindent Complexity:} The first issue is the high complexity and
impracticality of implementing such auctions: If the seller is allowed to use
the full power of dynamic mechanism design and the price and allocation of each
auction depend on the outcome of all previous auctions, the complexity of
computing the optimal mechanism or even describing a policy could be high. In
fact, it has been recently shown that in some settings computing the optimal
dynamic mechanism is computationally hard~\cite{papadimitriou2014complexity}.

{\bf \noindent Future commitment and lack of individual rationality.} The
second issue is that they may require big initial commitments from buyers and
sellers prior to the auction~\cite{kakade2013optimal}. Although there have
been recent movements in adopting such new type of contracts~\cite{mirrokni2015deals},
this issue makes it much harder for buyers and sellers to employ these
mechanisms. In order to get around this issue, it would be desirable to design
dynamic mechanisms that satisfy a strong notion of individual rationality,
i.e., ex-post individual rationality.

In this paper, we aim to design simple dynamic mechanisms that achieve ex-post
individual rationality and at the same time, approximate the optimal revenue of
the full-force dynamic mechanism. Toward this goal, we study a family of dynamic
mechanisms by applying the idea of {\em bank accounts}, in which single-shot
auctions are augmented with a structure we call {\em bank account}, a real
number for each node that summarizes the history so far. This structure allows
the seller to store deficits or surpluses of buyer utility from each individual
auction and even them out on the long run. We prove a number of interesting
properties for these simple mechanisms.

In this first part of the paper (Section \ref{sec:bam}), we prove a number of
characterization results. These results ensure that, to maximize revenue in
dynamic mechanisms, one only needs to restrict attention to a subset of
bank account mechanisms.

Theorem \ref{thm:rep} states that for any dynamic mechanism, $M$, there is a
constructive bank account mechanism (coined {\em core BAM}) that: (1) generates
the same expected utility as $M$, (2) generates weakly higher expected revenue
for the auctioneer than $M$, (3) and is deterministic if $M$ is deterministic.
Theorem \ref{thm:corebam} provides sufficient and necessary conditions for such
construction (Definition \ref{def:corebam}) to be an IC-IR bank account
mechanism.

Based on the characterization results, in the second part of the paper
(Sections \ref{sec:3app}-\ref{sec:fptas}), we investigate the problem of how to
find the optimal bank account mechanisms.

Theorem \ref{thm:3app} shows a simple bank account mechanism $B$ that guarantees
a constant approximation to the optimal dynamic mechanism. Then Theorem
\ref{thm:dbam} further derandomizes the construction to get a deterministic bank
account mechanism $B^\ud$ that achieves a constant approximation as well. We
remark that both of $B$ and $B^\ud$ are constructed based on a given
(approximately) optimal (deterministic) static mechanism.\footnote{The gap
between the optimal dynamic mechanism and static mechanism can be arbitrarily
large (see Example \ref{example:hard}).} Particularly, for the one item per
stage case, the optimal static mechanism is known and deterministic
\cite{myerson1981optimal}, and $B$ and $B^\ud$ are $3-$ and $5-$approximation of
the optimal revenue, respectively. Finally, for the problem of computing
optimal mechanisms when the type space is discrete, we provide an
FPTAS\footnote{Fully Polynomial Time Approximation Scheme.} in Theorem
\ref{thm:dp} via linear and dynamic programming.

In a parallel work \cite{mirrokni2016dynamic}, we consider the bank account
mechanisms under {\em interim IR} constraints. We also prove a characterization
result concerning sufficiency of bank account mechanism in that setting and
establish trade-offs between the maximum limit of bank account balance and
revenue.

Ashlagi et al.~\cite{ashlagi2016sequential} independently study the same
problem as ours except that, in their setting, they assume there is only one
item for sale at each stage. They independently propose a dynamic program to
compute the optimal mechanism, in the same spirit as ours in Section
\ref{sec:fptas} but different in implementation details. They also propose a
mechanism that $2$-approximates the optimal revenue for the one-item-per-stage
setting, in contrast to our $3$-approximation in that specific setting. The
approaches used to prove these results, however, are quite different. They
finally extend their dynamic program algorithm to compute the optimal mechanism
for the multi-bidder case and study the case with discounted infinite horizon.
We note that our framework can naturally generalize to the multi-bidder case
with one bank account for each of the bidders and we have performed stationary
analysis for the infinite horizon case in our {\em interim IR}
paper~\cite{mirrokni2016dynamic}.


\section{Setting}\label{sec:setting}

  Consider the problem of selling a sequence of items to a buyer who has
  additive valuations for the items. At each stage, there can be any number of
  items for sale, but each item only appears in one stage. The order in which
  the items arrive are fixed and commonly known, and the items cannot be sold
  in stages after they have arrived. At each stage $t \in [T]$, a type
  (valuation) $v_t \in \dV_t = \Real^{k_t}_+$ of the buyer ($k_t$ is the number
  of items for sale at stage $t$) is privately drawn from a public distribution
  $\F_t$, i.e., $v_t \sim \F_t$. We assume that the prior distributions
  $\F_t$'s are independent stage-wise, while within each stage, the
  distribution can be correlated over different items. Moreover, we assume that
  all valuations have finite expectations ($\E[\One \cdot v_t] <
  +\infty$)\footnote{We use $\One$ to denote a vector in which all entries are
  $1$. Similarly for $\Zero$.}. This assumption is to rule out the cases where
  the buyer faces two different strategies that both lead to infinite
  utilities.

  According to the definition above, $v_t$ is, in general, a multidimensional
  vector. None of our proofs, except the proof of Theorem \ref{thm:dp}, makes
  use of the assumption that $v_t$ is single dimensional. For the proof of
  Theorem \ref{thm:dp}, it still extends to the case where $v_t$ is
  multidimensional, by using the technique from \cite{cai2012algorithmic}.

  We also remark that all of our results extend to the setting with discounted
  utilities over time.

  The stage outcome at stage $t$ is specified by a pair $(x_t, p_t)$, where
  $x_t \in \dX_t = [0,1]^{k_t}$ denotes the allocation vector at stage $t$ and
  $p_t \in \Real$ denotes the stage payment. The buyer utility from this stage
  is
  \begin{align*}
    u_t(\bcdot; v_t) = x_t(\bcdot) \cdot v_t - p_t(\bcdot).
  \end{align*}

  The seller's objective is to design a {\em mechanism} that maximizes the
  total revenue $\sum_{\tau = 1}^T p_t$. For convenience, we use $v_{(\tau,
  \tau')} = v_\tau, \ldots, v_{\tau'}$ to denote the vector of buyer's types
  from stage $\tau$ to $\tau'$. Similarly, $x_{(\tau, \tau')} = x_\tau, \ldots,
  x_{\tau'}$, $p_{(\tau, \tau')} = p_\tau, \ldots, p_{\tau'}$. We use
  $\dV_{(\tau, \tau')} = \dV_\tau \times \cdots \times \dV_{\tau'}$ to denote
  the buyer's type spaces from stage $\tau$ to $\tau'$.

  \subsection{Dynamic Mechanisms}

    In this paper, we consider {\em direct mechanisms}, where the agent reports
    its private type to the mechanism in each stage.



    \begin{definition}[Direct Mechanism]\label{def:mech}
      A direct mechanism is a pair of allocation rule and payment rule. An
      allocation rule is a mapping $x_t : \dV_{(1, t)} \rightarrow \dX_t$ that
      specifies at any given stage the probability of allocation of an item to
      the agent conditioned on the history, i.e., the sequence of types
      reported so far. A payment rule $p_t : \dV_{(1, t)} \rightarrow \Real$
      specifies the payment the agent should make to the auctioneer at stage
      $t$, also conditioned on the sequence of types reported so far.
    \end{definition}
    We are interested in mechanism satisfying two properties: incentive
    compatibility (IC) and individual rationality (IR) constraints. In dynamic
    mechanism design, there are various flavors of those notions. The flavors
    we will be interested in this work are as follows. For IC, we will consider
    incentive compatible in perfect Bayesian equilibrium, which means that at
    any given history, truth-telling is the best response for current and
    future stages in expectation over the randomness of future types. For IR,
    we will be interested in Ex-post individually rationality, which means that
    for any type realization, the buyer's overall utility is non-negative.
    Notice that this is stronger than requiring individual rationality in
    expectation, where the buyer's utility is non-negative in expectation over
    his type.

    To make those notions formal, let a bidding strategy $b_{(t + 1, T)} =
    \langle b_t \rangle_{(t + 1, T)}$ be a family of (possibly randomized)
    maps $b_t : \dV_{(1, t)} \rightarrow \dV_t$ specifying which type to report
    at each stage given the history of types observed so far. For a fixed
    bidding strategy $b_{(t + 1, T)}$, let $U^{b_{(t + 1, T)}}_t(v_{(1, t)})$
    be the total utility that the buyer can obtain in expectation over his type
    from stages $t + 1$ to $T$ of the mechanism given that types $v_{(1, t)}$
    were reported in the stages $1$ to $t$ and the bidding strategy is employed
    from that point on. Formally:
    %
    %
    \begin{align}
      U^{b_{(t + 1, T)}}_t(v_{(1, t)}) = \E_{v_{(t + 1, T)}}\bigg[
        \sum_{\tau = t + 1}^T u_{\tau}
          \big(v_{(1, t)}, b_{t + 1}(v_{(1, t + 1)}), \ldots,
               b_{\tau}(v_{(1, \tau)}); v_{\tau}\big)\bigg].
      \label{eq:efu}
    \end{align}
    We use $U_t$ without superscript to denote the expected utility yielded by
    truthful bidding. This term is often called the promised utility
    \cite{li2013dynamic}, since it is the utility that the mechanism promises
    that the buyer will obtain in expectation if he/she behaves truthfully.

    Now we can define IC and IR precisely. We say that the mechanism satisfies
    IC if at any given stage his present utility plus expected future utility
    is maximized by bidding truthfully in the current and future stages. In
    other words, for each time $t \in [T]$, every type history $v_{(1, T)} \in
    \dV_{(1, T)}$, any deviating report at this timestep $v'_t \in \dV_t$ and
    any bidding strategy for the following timesteps  $b_{(t + 1, T)}$, we
    have:
    \begin{align}\label{mech:defic}
      \textbf{IC:} \quad
        u_t(v_{(1, t)}; v_t) + U_t(v_{(1, t)})
          \geq u_t(v_{(1, t - 1)}, v'_t; v_t)
                + U^{b_{(t + 1, T)}}_t(v_{(1, t - 1)}, v'_t);
    \end{align}

    We say that a mechanism satisfies individual rationality if for \emph{every
    history of types} the utility is non-negative:
    \begin{align}\label{mech:defir}
      \textbf{IR:} \quad
        \sum_{\tau = 1}^T u_\tau(v_{(1, \tau)}; v_\tau) \geq 0.
    \end{align}

    It will be convenient to define a simplified notion of IC, called stage-wise
    IC, which we will show to be equivalent to the previous notion. We say that
    a mechanism is stage-wise IC if at any given stage, reporting truthfully at
    the current and future steps dominates deviating in the current step and
    reporting truthfully from that point on:
    \begin{align}\label{mech:ic}
      \textbf{Stage-wise IC:} \quad
        u_t(v_{(1, t)}; v_t) + U_t(v_{(1, t)})
          \geq u_t(v_{(1, t - 1)}, v'_t; v_t) + U_t(v_{(1, t - 1)}, v'_t);
    \end{align}

    \begin{lemma}\label{lem:swic}
      IC (\ref{mech:defic}) is equivalent to stage-wise IC (\ref{mech:ic}).
    \end{lemma}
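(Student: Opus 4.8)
The plan is to prove the two implications separately. The direction from IC (\ref{mech:defic}) to stage-wise IC (\ref{mech:ic}) is immediate: stage-wise IC is nothing but the instance of IC in which the continuation strategy $b_{(t+1,T)}$ is taken to be truthful, so that $U^{b_{(t+1,T)}}_t(v_{(1,t-1)}, v'_t) = U_t(v_{(1,t-1)}, v'_t)$. Hence all the work is in the converse.

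For the converse, I would first observe that, comparing (\ref{mech:defic}) with (\ref{mech:ic}), it is enough to prove the following ``one-step deviation'' claim: assuming stage-wise IC, for every stage $t$, every reported history $w_{(1,t)} \in \dV_{(1,t)}$ and every continuation strategy $b_{(t+1,T)}$ we have $U_t(w_{(1,t)}) \ge U^{b_{(t+1,T)}}_t(w_{(1,t)})$; that is, truthful continuation weakly dominates every other continuation. Granting this claim at the (possibly untruthful) history $(v_{(1,t-1)}, v'_t)$ and chaining it with stage-wise IC immediately yields IC. I would prove the claim by backward induction on $t$: for $t = T$ both sides are $0$, and for the inductive step I would peel the stage-$(t+1)$ term off the definition (\ref{eq:efu}), writing $\hat v_{t+1}$ for the first report that $b_{(t+1,T)}$ prescribes given the realized true type $v_{t+1}$, to obtain
\begin{align*}
  U^{b_{(t+1,T)}}_t(w_{(1,t)})
    = \E_{v_{t+1}}\!\left[\, u_{t+1}\big(w_{(1,t)}, \hat v_{t+1}; v_{t+1}\big)
        + U^{b'}_{t+1}\big(w_{(1,t)}, \hat v_{t+1}\big)\right],
\end{align*}
where $b'$ is the strategy that $b_{(t+1,T)}$ induces on the remaining stages. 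Applying the inductive hypothesis to $b'$ gives $U^{b'}_{t+1}(w_{(1,t)}, \hat v_{t+1}) \le U_{t+1}(w_{(1,t)}, \hat v_{t+1})$, and then stage-wise IC (\ref{mech:ic}) at stage $t+1$ with true type $v_{t+1}$, history $w_{(1,t)}$ and deviating report $\hat v_{t+1}$ gives
\begin{align*}
  u_{t+1}\big(w_{(1,t)}, \hat v_{t+1}; v_{t+1}\big) + U_{t+1}\big(w_{(1,t)}, \hat v_{t+1}\big)
    \le u_{t+1}\big(w_{(1,t)}, v_{t+1}; v_{t+1}\big) + U_{t+1}\big(w_{(1,t)}, v_{t+1}\big).
\end{align*}
Taking expectations over $v_{t+1}\sim\F_{t+1}$, the right-hand side is exactly $U_t(w_{(1,t)})$ (the same peeling applied to the truthful promised utility), which closes the induction.

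The main obstacle I anticipate is not conceptual but a matter of careful bookkeeping in the inductive step: I need to rewrite $U^{b_{(t+1,T)}}_t$ as a single-stage utility plus a continuation value keyed to the reported history $(w_{(1,t)}, \hat v_{t+1})$ rather than the true type history, and I must verify that the continuation object to which the inductive hypothesis is applied is a genuine bidding strategy (if need be, one lets it depend on the realized $v_{t+1}$ as an innocuous side parameter, since the inductive hypothesis holds for every strategy). The finite-expectation assumption on valuations ensures that every utility and promised-utility term above is finite, so all the expectations and additive decompositions are legitimate and no $\infty-\infty$ issues arise.
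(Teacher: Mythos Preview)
Your proposal is correct and follows essentially the same backward-induction (one-shot deviation) argument as the paper: both establish that truthful continuation dominates any continuation strategy by peeling off one stage at a time and invoking stage-wise IC, then chain this with stage-wise IC at stage $t$ to obtain full IC. The only difference is cosmetic---the paper phrases the iteration as successively lengthening the deviation window $b_{(t+1)}, b_{(t+1,t+2)}, \ldots$, whereas you phrase it as backward induction on the stage index---but the content is identical.
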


    The advantage to work with stage-wise IC is that we can focus on the
    current stage and ignore deviations from this point on. Using backwards
    induction, we will argue that IC and stage-wise IC are equivalent.

    It is also convenient to define stage-wise IR, which means that at any
    given stage $t$, the utility is non-negative.
    \begin{align}\label{mech:ir}
      \textbf{Stage-wise IR:} \quad u_t(v_{(1, t)}; v_t) \geq 0.
    \end{align}

    \begin{lemma}\label{lem:swir}
      IR (\ref{mech:defir}) is implied by stage-wise IR (\ref{mech:ir}).
    \end{lemma}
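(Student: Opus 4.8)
The plan is immediate from the definitions, so the proof is essentially a one-liner. Stage-wise IR~(\ref{mech:ir}) asserts that $u_t(v_{(1, t)}; v_t) \geq 0$ holds for \emph{every} stage $t \in [T]$ and \emph{every} type history $v_{(1, T)} \in \dV_{(1, T)}$, whereas IR~(\ref{mech:defir}) only demands the weaker conclusion that the total utility $\sum_{\tau = 1}^T u_\tau(v_{(1, \tau)}; v_\tau)$ be non-negative for every history. So I would fix an arbitrary history $v_{(1, T)}$, invoke stage-wise IR once for each $\tau \in \{1, \dots, T\}$ to get $u_\tau(v_{(1, \tau)}; v_\tau) \geq 0$, and then add these $T$ inequalities together; since a finite sum of non-negative reals is non-negative, this yields exactly~(\ref{mech:defir}). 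As the history was arbitrary, the claim follows.

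There is no real obstacle here — the content is purely that ``non-negative in each stage'' is a stronger requirement than ``non-negative in aggregate.'' The only point worth flagging in the write-up is that the implication is strict: the converse is false, since a mechanism may run a utility deficit in one stage and compensate for it in a later stage while still satisfying~(\ref{mech:defir}). That asymmetry is exactly the slack that the bank account construction will later exploit, so it is natural to state the lemma in this one-directional form rather than as an equivalence (in contrast to Lemma~\ref{lem:swic}).
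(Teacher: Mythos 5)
Your proof is correct and is essentially identical to the paper's: both simply sum the stage-wise IR inequalities $u_\tau(v_{(1,\tau)};v_\tau)\geq 0$ over $\tau = 1,\dots,T$ for an arbitrary history to obtain the aggregate IR constraint. The additional remark about the converse failing matches the motivation given in the surrounding text.
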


    Clearly stage-wise IR implies IR, but not the other way round. What we will
    show in the next lemma, however, is that if there is a mechanism that is IC
    and IR, we can construct a mechanism that is IC and stage-wise IR with the
    same allocation rule and the same revenue. The idea is that if a mechanism
    is ex-post IR, one can anticipate the payments to earlier stages in such a
    way the buyer is never required to pay more than his actual value. An
    extreme case of this reduction is the one where the buyer pays his bid in
    every round except the last, where the buyer is paid the difference between
    all the bids charged during the mechanism and his payment in the original
    mechanism.

    \begin{lemma}\label{lem:swirconst}
      For any IR direct mechanism $M = \langle x_{(1, T)}, p_{(1, T)} \rangle$,
      there is another direct mechanism $M' = \langle x'_{(1, T)}, p'_{(1, T)}
      \rangle$, such that
      \begin{enumerate}
        \item \label{lem:swicir:p1}
              $\forall t \in [T],~x_t \equiv x'_t$,
        \item \label{lem:swicir:p2}
              $\forall v_{(1, T)} \in \dV_{(1, T)},~
               \sum_{\tau = 1}^T p_\tau(v_{(1, \tau)}) =
               \sum_{\tau = 1}^T p'_\tau(v_{(1, \tau)})$,
        \item \label{lem:swicir:p3}
              $M'$ is stage-wise IR,
        \item \label{lem:swicir:p4}
              and $M'$ is IC, if and only if $M$ is IC.
      \end{enumerate}
    \end{lemma}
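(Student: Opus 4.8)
The plan is to define $M'$ by shifting payments as early as possible: keep the allocation unchanged, $x'_t \equiv x_t$; at every stage $t < T$ charge the buyer exactly the reported value of what is allocated that stage, $p'_t(v_{(1,t)}) = x_t(v_{(1,t)}) \cdot v_t$; and at the last stage absorb whatever remains, $p'_T(v_{(1,T)}) = \sum_{\tau=1}^{T} p_\tau(v_{(1,\tau)}) - \sum_{\tau=1}^{T-1} x_\tau(v_{(1,\tau)}) \cdot v_\tau$, so that the cumulative payment of $M'$ coincides with that of $M$ once the mechanism ends. (The arguments here are reports; $p'_T$ is allowed to be negative, i.e.\ the seller may pay the buyer.) This is exactly the ``extreme'' reduction described informally before the statement.

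First I would dispose of the three easy items. Items \ref{lem:swicir:p1} and \ref{lem:swicir:p2} hold by construction, the formula for $p'_T$ being chosen precisely so that \ref{lem:swicir:p2} holds. For \ref{lem:swicir:p3}: under truthful reporting $u'_t(v_{(1,t)};v_t) = x_t(v_{(1,t)})\cdot v_t - p'_t(v_{(1,t)}) = 0$ for every $t<T$, while at the last stage telescoping the definition of $p'_T$ and using \ref{lem:swicir:p2} gives $u'_T(v_{(1,T)};v_T) = \sum_{\tau=1}^{T} u_\tau(v_{(1,\tau)};v_\tau) \geq 0$ because $M$ is IR; hence $M'$ is stage-wise IR. (IR of $M$ is used only at this point.)

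The substantive item is \ref{lem:swicir:p4}. By Lemma \ref{lem:swic} it suffices to prove that $M'$ is stage-wise IC iff $M$ is. Put $P_t(v_{(1,t)}) = \sum_{\tau=1}^{t} p_\tau(v_{(1,\tau)})$ and define $P'_t$ analogously; each depends only on the first $t$ reports, and $P'_T \equiv P_T$ by \ref{lem:swicir:p2}. For any reported history $\hat v_{(1,t)}$ and any true stage-$t$ type $v_t$, compare $u_t(\hat v_{(1,t)};v_t) + U_t(\hat v_{(1,t)})$ with $u'_t(\hat v_{(1,t)};v_t) + U'_t(\hat v_{(1,t)})$, both taken with truthful continuation from stage $t+1$ on. Since $x'\equiv x$ their allocation contributions agree, so they differ only in the expected total payment collected from stage $t$ onward; writing $\sum_{\tau=t+1}^{T} p'_\tau = P'_T - P'_t$ (and likewise for $p$), taking the expectation over $v_{(t+1,T)}$ with $\hat v_{(1,t)}$ fixed, and invoking $P'_T \equiv P_T$, the random future payments cancel and what is left is
\begin{align*}
  u'_t(\hat v_{(1,t)};v_t) + U'_t(\hat v_{(1,t)})
    = u_t(\hat v_{(1,t)};v_t) + U_t(\hat v_{(1,t)})
      - \big(P_{t-1}(\hat v_{(1,t-1)}) - P'_{t-1}(\hat v_{(1,t-1)})\big) ,
\end{align*}
whose correction term depends only on the first $t-1$ reports. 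Consequently, after fixing a stage $t$, a history $v_{(1,t-1)}$, a true type $v_t$ and a deviation $v'_t$, the two sides of the stage-wise IC inequality (\ref{mech:ic}) for $M'$ are obtained from those for $M$ by subtracting one and the same number $P_{t-1}(v_{(1,t-1)}) - P'_{t-1}(v_{(1,t-1)})$ (the first $t-1$ reports are common to both sides of that inequality), so the inequality holds for $M'$ exactly when it holds for $M$. Hence $M'$ is stage-wise IC iff $M$ is, and Lemma \ref{lem:swic} lifts this to IC.

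I expect the displayed identity to be the only genuine obstacle: one must keep straight that both $U_t$ and $U'_t$ refer to truthful continuation, verify that the expectation over future types $v_{(t+1,T)}$ is really annihilated by the cumulative-payment identity \ref{lem:swicir:p2}, and check that the residual correction is a function of the first $t-1$ reports alone — which is precisely what makes it cancel between the truthful and the deviating side of the IC comparison. The rest is direct substitution.
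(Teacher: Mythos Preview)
Your proposal is correct and takes essentially the same approach as the paper: the same construction of $M'$ (charge $x_t\cdot v_t$ at stages $t<T$, absorb the remainder at stage $T$), the same verification of items \ref{lem:swicir:p1}--\ref{lem:swicir:p3}, and the same idea for \ref{lem:swicir:p4} that properties \ref{lem:swicir:p1} and \ref{lem:swicir:p2} force the two sides of the IC inequality to shift by an identical quantity depending only on the first $t-1$ reports. The only cosmetic difference is that the paper adds $\sum_{\tau=1}^{t-1} u_\tau$ to both sides of (\ref{mech:defic}) so that each side becomes an expected total utility (equal for $M$ and $M'$ by \ref{lem:swicir:p1}--\ref{lem:swicir:p2}), whereas you invoke Lemma~\ref{lem:swic} first and exhibit the correction term $P_{t-1}-P'_{t-1}$ explicitly; these are two phrasings of the same computation.
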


    In light of Lemma \ref{lem:swic} to \ref{lem:swirconst}, we can use the
    stage-wise IC (\ref{mech:ic}) as the equivalent definition of IC, and
    assuming that a direct mechanism satisfies stage-wise IR constraint
    (\ref{mech:ir}) is without loss of generality.

    For any mechanism $M$, denote the overall expected revenue by $\Rev(M)$,
    overall expected buyer utility by $\Utl(M)$, and overall expected
    social welfare by $\Eff(M)$, i.e.,
    \begin{align*}
      \Rev(M) &= \E_{v_{(1, T)}}\bigg[
                    \sum_{\tau = 1}^T p_\tau(v_{(1, \tau)})\bigg],  \\
      \Utl(M) &= U_0 = \E_{v_{(1, T)}}\bigg[\sum_{\tau = 1}^T
                                      u_\tau(v_{(1, \tau)}; v_\tau)\bigg],  \\
      \Eff(M) &= \Utl(M) + \Rev(M).
    \end{align*}

    Moreover, define the conditional expected utility with partially realized
    types $v_{(t, t')}$,
    \begin{align*}
      \Utl(M | v_{(t, t')}) = \E_{V_{(1, T)}} \bigg[
                    \sum_{\tau = 1}^T u_\tau(V_{(1, \tau)}; V_\tau) \Big|
                      V_{(t, t')} = v_{(t, t')}\bigg].
    \end{align*}

    A mechanism is {\em deterministic}, if for each stage $t$, $x_t \in \{0,
    1\}^{k_t}$. A mechanism is {\em history-independent}, if both $x_t$ and
    $p_t$ only depend on $v_t$, but not $v_{(1, t - 1)}$.

\section{Bank account mechanisms}\label{sec:bam}

  Designing dynamic mechanism is a hard task, mostly because the design space
  is so large. To remedy this situation, we propose a class of mechanisms,
  called {\em bank account mechanisms}, that are simpler and can be shown to
  contain the revenue optimal mechanism. Moreover, they are based on a simple
  principle that is easy to reason about: the mechanism keeps a bank balance
  for the buyer and the allocation rule depend on the bank balance and not on
  the full history of the mechanism.

  %

  Bank account mechanisms were introduced in \cite{mirrokni2016dynamic}. Here
  we show that this framework can be used to reduce the problem of designing
  dynamic mechanisms to the problem of designing a family of parametrized
  single-shot mechanism. To understand the spirit of this reduction, we bring
  the reader attention to the fact that the stage-wise IC constraint looks very
  much like the IC constraint in single-shot mechanism design for period $t$,
  when the quantity $p_t(v_{(1, t)}) - U_t(v_{(1, t)})$ is used instead of
  payments. To see that, rewrite the IC constraint (\ref{mech:ic}) as follows:
  \begin{align*}
    x_t(v_{(1, t)}) \cdot v_{t} - \big(p_t(v_{(1, t)}) - U_t(v_{(1, t)})\big)
    \geq x_t(v_{(1, t - 1)}, v'_t) \cdot v_{t} - \big(p_t(v_{(1, t - 1)}, v'_t)
         - U_t(v_{(1, t - 1)}, v'_t)\big),
  \end{align*}
  which can be considered as the IC condition for some one-shot stage
  mechanism with allocation rule $\hat{x}_{t \vert v_{(1, t - 1)}}(\bcdot) =
  x_t(v_{(1, t - 1)}, \bcdot)$ and payment rule $\hat{p}_{t \vert
  v_{(1, t - 1)}} (\bcdot) := \big(p_t(v_{(1, t - 1)}, \bcdot) -
  U_t(v_{(1, t - 1)}, \bcdot)\big)$.


  Furthermore, rewrite the stage payment of a dynamic mechanism as
  $p_t(v_{(1, t)}) = \hat{p}_{t \vert v_{(1, t - 1)}}(v_t) + U_t(v_{(1, t)})$.
  This suggests us that the stage payment of any dynamic mechanism can be
  decomposed into the payment $\hat{p}_{t \vert v_{(1, t - 1)}}(v_t)$ of some
  one-shot stage mechanism and a dynamic component $U_t(v_{(1, t)})$, i.e., the
  truthful utility from stages $t + 1$ to $T$.

  As a result, it is without loss of generality to think of any dynamic
  mechanism as a sequence of per-stage IC mechanisms, described by allocation
  rule $\hat{x}_{t \vert v_{(1, t - 1)}}(\bcdot)$ and payment rule
  $\hat{p}_{t \vert v_{(1, t - 1)}}(\bcdot)$, as well as an additional
  per-stage charge $U_t(v_{(1, t)})$ (a temporary utility deficit), which will
  eventually be compensated via future per-stage IC mechanisms.

  By the incentive compatibility of the per-stage mechanism, according to the
  Envelope theorem \cite{rochet1985taxation}, the per-stage allocation rule is
  the sub-gradient of the per-stage utility,
  \begin{align}\label{eq:envic}
    \frac{\partial \big(u_t(v_{(1, t)}; v_t) + U_t(v_{(1, t)})\big)}
         {\partial v_t}
    = \frac{\partial \big(\hat{x}_{t \vert v_{(1, t - 1)}}(v_t) \cdot v_t -
                          \hat{p}_{t \vert v_{(1, t - 1)}}(v_t)\big)}
           {\partial v_t}
    = x_t(v_{(1, t)}).
  \end{align}

  So, given the allocation rule of the per-stage mechanisms, according to our
  reinterpretation, the only flexibility of the seller is to design the
  per-stage charge $U_t(v_{(1, t)})$, the truthful utility for the buyer in
  future stages.

  We are now ready to formally define a bank account mechanism (BAM). Imagine
  that the buyer has a bank account in his name. In each stage, before the
  buyer reports his type, the mechanism moves a certain amount of money from
  the account to the seller (we call it spend), then the buyer reveals his
  type and a single-shot IC mechanism is run. In the end of the mechanism,
  depending on the buyer's reported type, a certain amount of money is
  deposited in the bank account. The amount deposited still belongs to the
  buyer, so it doesn't correspond to a payment at the current step. The right
  way to think about a deposit is an amount of money that the buyer is setting
  aside so that the mechanism can spend.

  Intuitively, the balance can be regarded as an explicit way to realize
  per-stage charge $U_t$ --- the change of balance of each stage specifies two
  things: (i) the amount of utility to compensate previously promised utilities;
  and (ii) the amount of newly promised utility starting from this stage.

  First we define a generic bank account mechanism and then we discuss which
  properties it needs to have to satisfy IC and IR.

  \begin{figure}
    \centering
    \includegraphics[height=0.2\textwidth]{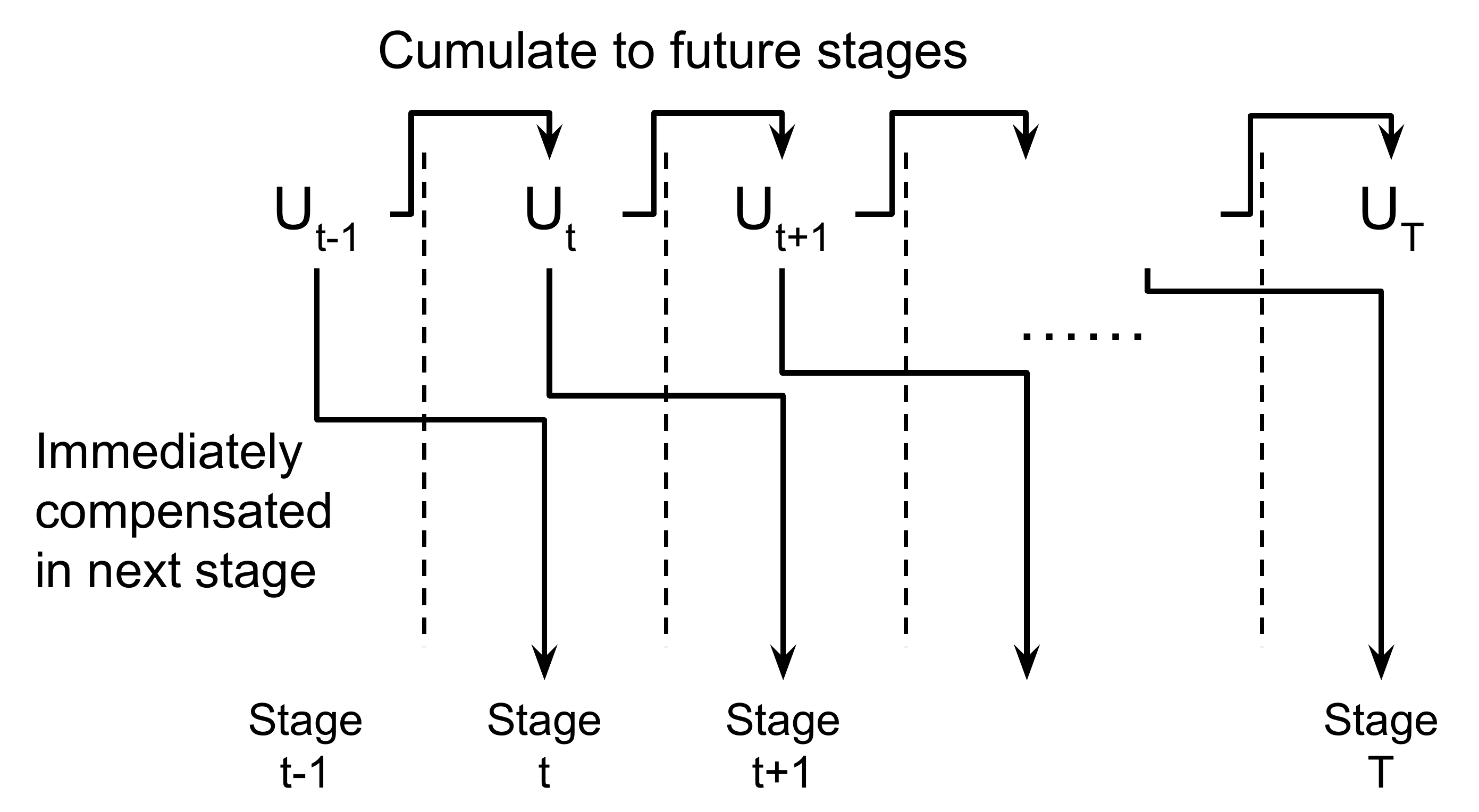}%
    ~~~~~~~%
    \includegraphics[height=0.2\textwidth]{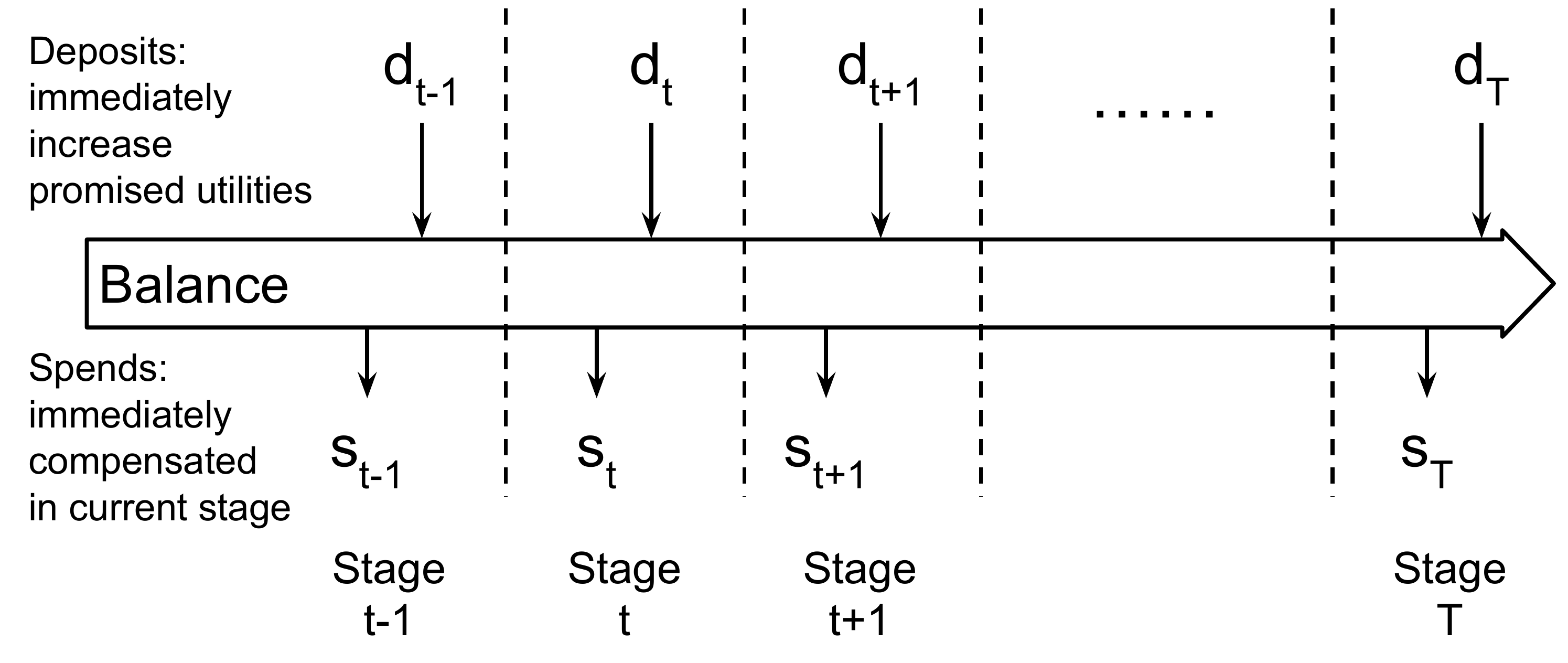}
    \caption{Describing dynamic mechanisms with $U_t$'s (left) incurs a
             recursive structure over $U_t$. Introducing bank account balance
             (right) helps us to get rid of such recursions, i.e., the deposits
             $d_t$ and spends $s_t$ are rather independent with each other.}
    \label{fig:balance}
  \end{figure}

%

  \begin{definition}[Bank Account Mechanism]\label{def:bam}
    A bank account mechanism $B$ is a tuple $\langle z_{(1, T)},\ab q_{(1, T)},
    \ab \bal_{(1, T)},\ab d_{(1, T)}, s_{(1, T)} \rangle$, where for each $t$,
    \begin{itemize}
      \item allocation rule $z_t : \Real_+ \times \dV_t \rightarrow
            \dX_t$, that maps balance and stage type to stage allocation,
      \item payment rule $q_t : \Real_+ \times \dV_t \rightarrow \Real_+$,
            that maps balance and stage type to stage payment,
      \item balance function $\bal_t: \dV_{(1,t-1)} \rightarrow \Real_+$ is
            defined recursively by the following equation,
            \begin{align}\label{eq:defbal}
              \forall v_t,~\bal_{t + 1} = \bal_t - s_t(\bal_t) +
                d_t(\bal_t, v_t),
            \end{align}
            where $\bal_1 = 0$. Mathematically, $\bal_t$ is a function of
            history $v_{(1, t - 1)}$. We will often refer to
            $\bal_t(v_{(1, t - 1)})$ simply as $\bal_t$ and think of it as a
            variable that is updated in the course of the mechanism as the
            types are revealed.
      \item deposit policy $d_t : \Real_+ \times \dV_t \rightarrow \Real_+$,
            that maps balance and stage type to a non-negative real that is to
            be added to the current balance,
      \item spend policy $s_t : \Real_+ \rightarrow \Real$, that maps balance
            to a real (no more than balance), that is to be subtracted from the
            current balance.
    \end{itemize}
  \end{definition}
  \begin{figure}
    \centering
    \includegraphics[width=0.45\textwidth]{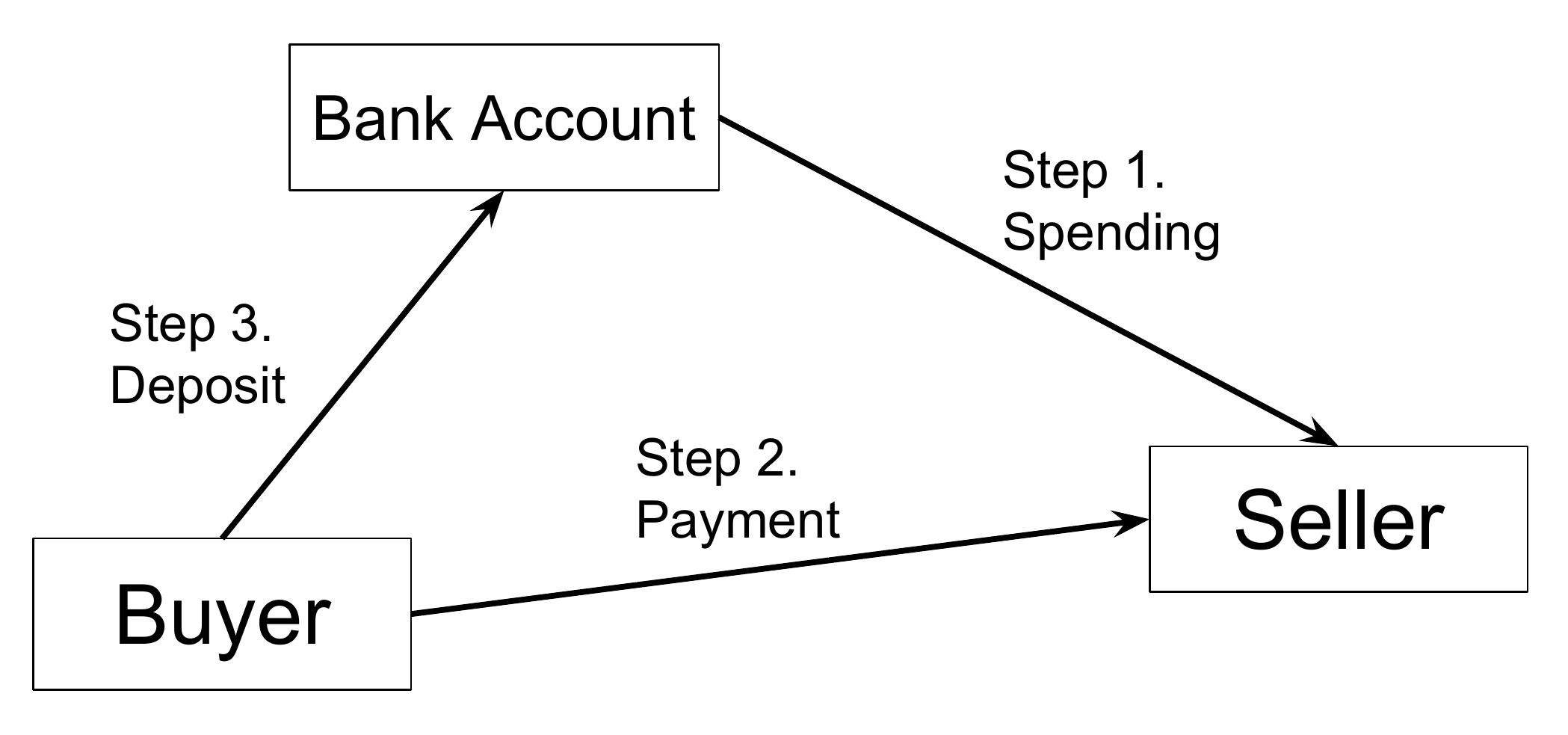}
    \caption{Transfer in bank account mechanisms. Revenue comes from both
             Spending (Step 1) and Payment (Step 2). In principle, the balance
             in the bank account belongs to the buyer, but the seller can spend
             it according to policy $s_t$.}
    \label{fig:bam}
  \end{figure}
  At stage $t$, a bank account mechanism works as follows, (see Figure
  \ref{fig:bam})
  \begin{enumerate}
    \item \label{stp:spd}
          The seller spends the balance by $s_t(\bal_t)$, before learning the
          buyer's stage type $v_t$. ($s_t(\bal_t)$ directly transfers to the
          seller's revenue.)
    \item \label{stp:all}
          Upon receiving buyer's report $v_t$, the seller allocates according
          to $z_t(\bal_t, v_t)$ to the buyer and charges $q_t(\bal_t, v_t)$.
    \item \label{stp:dps}
          The buyer deposits an amount of $d_t(\bal_t, v_t)$ to the bank
          account.
  \end{enumerate}

  It is useful to describe how a bank account mechanism maps to a direct
  revelation mechanism. Given a description of a BAM in terms of $z_{(1, T)},
  \ab q_{(1, T)},\ab \bal_{(1, T)},\ab d_{(1, T)}, s_{(1, T)}$ we map it to the
  direct mechanism described by $x_{(1,T)}, p_{(1,R)}$.

  For the allocation rule, we define:
  \begin{align*}
    x_t(v_{(1, t)}) = z_t\big(\bal_t(v_{(1, t - 1)}), v_t\big).
  \end{align*}
  The payment to the seller is given by:
  \begin{align*}
    p_t(v_{(1, t)}) = s_t\big(\bal_t(v_{(1, t - 1)})\big)
                        + q_t\big(\bal_t(v_{(1, t - 1)}), v_t\big)
  \end{align*}
  While the deposit $d_t(\bal_t, v_t)$ is not present in the description above,
  it affects the mechanisms in the sense that it dictates how the balance is
  updated from one step to the other.\\

  Consider the following example that illustrates the bank account mechanism.
  \begin{example}\label{example:bam}
    For the $T = 2$, one item per stage case. Let the valuation of each item
    be independently identically drawn from the {\em equal revenue
    distribution}, i.e.,
    \begin{align*}
      \F_1 = \F_2 = \F,~
      \F(v) = \left\{\begin{array}{ll}
        0, & v \leq 1  \\
        1 - 1 / v, & 1 < v < v_{\max}  \\
        1, & v \geq v_{\max}
      \end{array}\right..
    \end{align*}

    The following is a bank account mechanism for this setting.
    \begin{itemize}
      \item Starting with $\bal_1 = 0$, spend $0$ from the balance, i.e.,
            $s_1(0) = 0$ (Step \ref{stp:spd}).
      \item The first item is sold at posted-price $1$ (Step \ref{stp:all}).
      \item The buyer deposits money of amount $v_1 - 1$ to the bank account,
            i.e., $d_1(0, v_1) = v_1 - 1$ (Step \ref{stp:dps}).
      \item All the balance so far are spent at the beginning of stage $2$,
            i.e., $s_2(\bal_2) = \bal_2$ (Step \ref{stp:spd}).
      \item The second item is sold at posted-price $v_{\max} / e^{v_1 - 1}$
            (Step \ref{stp:all}).
      \item The buyer deposit nothing at stage $2$, i.e., $d_1(\bal_2,v_2) = 0$
            (Step \ref{stp:dps}).
    \end{itemize}

    First of all, one can verify that $B$ is IC and IR: it is straightforward
    to verify that $B$ is IR and IC for stage $2$. For stage $1$, (1) reporting
    $v'_1 < 1$ yields utility $0$, since the buyer neither gets the first item
    ($v'_1$ is less than the reserve), nor gets the second item (the reserve is
    higher than $v_{\max}$); (2) reporting any $v'_1 \geq 1$ makes no
    difference to the buyer than truthfully reporting,
    \begin{align*}
      & v_1 - 1 - (v'_1 - 1) + \E_{v_2 \geq r(v'_1)} \big(v_2 - r(v'_1)\big)
        = v_1 - v'_1 + \ln v_{\max} - \ln r(v'_1)  \\
      &~= v_1 - v'_1 + \ln v_{\max} - \ln v_{\max} + \ln e^{v'_1 - 1}
        = v_1 - 1.
    \end{align*}

    In fact, it is not hard to show that the bank account mechanism described
    above is revenue optimal among all dynamic mechanisms, with optimal revenue
    $2 + \ln\ln v_{\max}$, while any history independent mechanism can obtain
    revenue at most $2$.
  \end{example}

  To sum up, the spend $s_t(\bal_t)$ in Step \ref{stp:spd} can be considered as
  a prepayment for items to be sold in stage $t$. Then, based on the prepayment,
  the seller provides a per-stage mechanism (allocation and payment) to the
  buyer for stage $t$ in Step \ref{stp:all}. The prepayment and mechanism are
  carefully chosen so as to guarantee stagewise IC. Finally, the buyer deposits
  part of his/her utility to the bank account balance in Step \ref{stp:dps} to
  enable prepayments of the future stages.

  \subsection{Conditions for IC and IR bank account mechanisms}

    %

    We discuss now sufficient conditions for a bank account mechanism to
    satisfy IC and IR constraints. At any given stage $t$, the type $v_t$
    reported can affect the agent in two ways: it will influence the utility
    he/she gets in the single-shot mechanism defined by $z_t(\bal_t, \bcdot),
    q(\bal_t, \bcdot)$ and it will affect how much is deposited in his bank
    account, which will in turn affect how much is spent in future rounds. If
    somehow we can design the mechanism such that the agent is indifferent
    about how much we deposit in his/her account and spend from his/her account
    (i.e., indifferent about the actual balance he/she has), then we can focus
    on making the stage mechanism truthful.

    Define the utility of the $\langle z, q \rangle$-mechanism at stage $t$ as
    when the type is $v_t$ and the reported type is $v'_t$ as:
    \begin{align}\label{eq:hu}
      \hu_t(\bal_t, v'_t; v_t) =
          z_t(\bal_t, v'_t) \cdot v_t - q_t(\bal_t, v'_t).
    \end{align}
    The expected utility in stage $t$ of the corresponding direct mechanism is:
    \begin{align}
      \E_{v_t} \big[u_t(v_{(1, t)}; v_t)\big] =
          - s_t(\bal_t) + \E_{v_t} \big[\hu_t(\bal_t, v_t; v_t)\big].
    \end{align}
    Our next lemma shows that if the previous expression is independent of the
    bank balance $\bal_t$, it is enough for the $\langle z, q \rangle$-mechanism
    to be truthful as a single-shot mechanism for every bank balance $\bal_t$.






    \begin{lemma}\label{lem:sconic}
      Bank account mechanism $B$ is IC, if the following two conditions are
      satisfied,
      \begin{align}
        & \forall t \in [T],~\bal_t, \bal'_t \geq 0,~v_t, v'_t \in \dV_t,
          \nonumber  \\
        & \hu_t(\bal_t, v_t; v_t) \geq \hu_t(\bal_t, v'_t; v_t),
          \label{cond:sic}  \\
        & s_t(\bal_t) - s_t(\bal'_t) =
            \E_{v_t} \big[\hu_t(\bal_t, v_t; v_t)
              - \hu_t(\bal'_t, v_t; v_t)\big].
          \label{cond:spd}
      \end{align}
    \end{lemma}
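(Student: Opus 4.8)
The plan is to verify the stage-wise IC condition (\ref{mech:ic}) directly and then appeal to Lemma \ref{lem:swic}. The crux is the claim that, under (\ref{cond:spd}), the promised utility $U_t(v_{(1,t)})$ is a \emph{constant}, independent of the reported history; granting this, stage-wise IC collapses to exactly (\ref{cond:sic}).

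First I would pass to the induced direct mechanism using the map recorded after Definition \ref{def:bam}. Since the balance $\bal_t = \bal_t(v_{(1,t-1)})$ depends only on the earlier reports, a stage-$t$ deviation to $v'_t$ leaves $\bal_t$ untouched, and
\begin{align*}
  u_t(v_{(1,t-1)}, v_t; v_t) &= \hu_t(\bal_t, v_t; v_t) - s_t(\bal_t), \\
  u_t(v_{(1,t-1)}, v'_t; v_t) &= \hu_t(\bal_t, v'_t; v_t) - s_t(\bal_t);
\end{align*}
the deviation's only downstream effect is on the balance carried forward, $\bal_{t+1} = \bal_t - s_t(\bal_t) + d_t(\bal_t, v'_t)$.

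Next I would introduce $\Phi_t(\beta)$, the expected truthful utility of stages $t, \dots, T$ conditioned on the balance entering stage $t$ being $\beta$. Using the recursion (\ref{eq:defbal}), this is finite (by $\E[\One \cdot v_t] < \infty$), satisfies $\Phi_{T+1} \equiv 0$ and
\begin{align*}
  \Phi_t(\beta) = -s_t(\beta) + \E_{v_t}\big[\hu_t(\beta, v_t; v_t)\big]
    + \E_{v_t}\big[\Phi_{t+1}\big(\beta - s_t(\beta) + d_t(\beta, v_t)\big)\big],
\end{align*}
and coincides with the promised utility via $U_t(v_{(1,t)}) = \Phi_{t+1}(\bal_{t+1}(v_{(1,t)}))$. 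I would then show by backward induction that every $\Phi_t$ is a constant function: if $\Phi_{t+1} \equiv c_{t+1}$, the last term above equals $c_{t+1}$ regardless of $\beta$, while condition (\ref{cond:spd}) --- rewritten as $-s_t(\bal_t) + \E_{v_t}[\hu_t(\bal_t, v_t; v_t)] = -s_t(\bal'_t) + \E_{v_t}[\hu_t(\bal'_t, v_t; v_t)]$ for all $\bal_t, \bal'_t \geq 0$ --- says precisely that the first two terms are independent of $\beta$ as well. Hence $\Phi_t \equiv c_t$ for a constant $c_t$.

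Finally, because $\Phi_{t+1}$ is constant we get $U_t(v_{(1,t-1)}, v_t) = U_t(v_{(1,t-1)}, v'_t) = c_{t+1}$, so (\ref{mech:ic}) reduces to $u_t(v_{(1,t-1)}, v_t; v_t) \geq u_t(v_{(1,t-1)}, v'_t; v_t)$, i.e.\ $\hu_t(\bal_t, v_t; v_t) \geq \hu_t(\bal_t, v'_t; v_t)$, which is exactly (\ref{cond:sic}); Lemma \ref{lem:swic} then upgrades stage-wise IC to IC. I do not foresee a real obstacle: condition (\ref{cond:spd}) is engineered so that the balance-dependent part of $\Phi_t$ cancels, and the only thing requiring care is the bookkeeping in the backward induction --- keeping straight which quantities depend on the balance and which on the current report, and noting that (\ref{cond:spd}) and (\ref{cond:sic}) are assumed for all balances, so no reachability argument is needed.
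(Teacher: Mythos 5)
Your proposal is correct, and it takes a genuinely different (and arguably cleaner) route than the paper's. You introduce the auxiliary continuation value $\Phi_t(\beta)$ --- the expected truthful utility of stages $t,\dots,T$ as a function of the entering balance --- and establish by backward induction that condition (\ref{cond:spd}) forces each $\Phi_t$ to be a constant function of $\beta$. Once the invariant ``promised utility is balance-independent'' is in hand, the promised-utility terms on both sides of stage-wise IC cancel and the constraint collapses to (\ref{cond:sic}); Lemma \ref{lem:swic} finishes the job. The paper instead works directly with the stage-wise IC inequality: it uses the identity $\sum_\tau u_\tau = \sum_\tau (\hu_\tau - s_\tau)$ (there labeled (\ref{eq:utlequiv})) to expand both the truthful side and the deviating side, telescopes the difference, and observes that each downstream term of the form $\E_{v_\tau}[\hu_\tau(\bal_\tau,\cdot)-\hu_\tau(\bal_\tau^{(t)},\cdot)] - s_\tau + s_\tau^{(t)}$ vanishes by (\ref{cond:spd}) while the stage-$t$ term is nonnegative by (\ref{cond:sic}). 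The two arguments are morally the same --- both exploit that (\ref{cond:spd}) makes the balance path irrelevant in expectation --- but your formulation isolates the structural reason as an explicit invariant of the recursion, whereas the paper's direct computation avoids the extra object $\Phi_t$ and, as stated in the appendix, also covers a slightly wider class of direct mechanisms that merely share the BAM's allocation rule and total payments (used elsewhere in the paper). Both are valid; your version is more transparent about \emph{why} (\ref{cond:spd}) is exactly the right condition.
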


    \begin{lemma}\label{lem:sconir}
      Bank account mechanism $B$ is IR, if the following condition is satisfied,
      \begin{align}
        \hu_t(\bal_t, v_t; v_t) \geq d_t(\bal_t, v_t). \label{cond:ir}
      \end{align}
    \end{lemma}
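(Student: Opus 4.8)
The plan is to pass from the BAM $B$ to its induced direct mechanism, bound the realized per-stage utility using condition~(\ref{cond:ir}), and then telescope along the balance recursion~(\ref{eq:defbal}).

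First I would use the map from a BAM to a direct revelation mechanism: the induced stage allocation is $x_t(v_{(1,t)}) = z_t(\bal_t, v_t)$ and the induced stage payment is $p_t(v_{(1,t)}) = s_t(\bal_t) + q_t(\bal_t, v_t)$. Since ex-post IR~(\ref{mech:defir}) concerns truthful reporting (the reported and true types both equal $v_\tau$ there), the realized stage utility is
\[
  u_t(v_{(1,t)}; v_t) = z_t(\bal_t, v_t)\cdot v_t - q_t(\bal_t, v_t) - s_t(\bal_t) = \hu_t(\bal_t, v_t; v_t) - s_t(\bal_t).
\]
Condition~(\ref{cond:ir}) then yields the pointwise lower bound $u_t(v_{(1,t)}; v_t) \ge d_t(\bal_t, v_t) - s_t(\bal_t)$, valid for every type history.

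Next I would sum this bound over $\tau = 1, \dots, T$ and notice that the right-hand side telescopes against the balance update: since $\bal_{\tau+1} - \bal_\tau = d_\tau(\bal_\tau, v_\tau) - s_\tau(\bal_\tau)$ by~(\ref{eq:defbal}) and $\bal_1 = 0$,
\[
  \sum_{\tau=1}^T u_\tau(v_{(1,\tau)}; v_\tau) \ge \sum_{\tau=1}^T \big(d_\tau(\bal_\tau, v_\tau) - s_\tau(\bal_\tau)\big) = \bal_{T+1} - \bal_1 = \bal_{T+1}.
\]
Finally, $\bal_{T+1} \ge 0$: deposits are non-negative and each spend obeys $s_\tau(\bal_\tau) \le \bal_\tau$, so the balance is never driven below zero (this is already built into the type of the balance function, $\bal_t : \dV_{(1,t-1)} \to \Real_+$). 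Hence $\sum_\tau u_\tau \ge 0$, which is exactly~(\ref{mech:defir}).

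I do not expect a real obstacle here; the one point to be careful about is that the argument must be carried out pointwise, for each fixed realization $v_{(1,T)}$, rather than only in expectation — ex-post IR is the target, and taking expectations would be too weak — but the telescoping identity holds realization by realization, so this is automatic. It is worth remarking that the same computation in fact establishes the stronger ``running'' guarantee $\sum_{\tau=1}^t u_\tau(v_{(1,\tau)}; v_\tau) \ge \bal_{t+1} \ge 0$ for every $t$, i.e., the buyer's cumulative utility is always at least his current bank balance.
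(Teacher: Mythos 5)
Your proof is correct and follows essentially the same route as the paper's: sum condition~(\ref{cond:ir}) over $\tau$, telescope the deposits-minus-spends against the balance recursion~(\ref{eq:defbal}), and use $\bal_1 = 0$, $\bal_{T+1}\ge 0$ to conclude $\sum_\tau u_\tau \ge 0$ pointwise. The paper just phrases the per-history identity $\sum_\tau u_\tau = \sum_\tau(\hu_\tau - s_\tau)$ slightly more generally (for any direct mechanism sharing $B$'s allocation and total payment, not only the canonically induced one), but the mathematical content and the telescoping step are identical to yours.
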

    %
    In the remainder of this paper, we use {\em BAMs} to refer to bank account
    mechanisms satisfying (\ref{cond:sic})-(\ref{cond:ir}). Given the previous
    lemmas, we refer (\ref{cond:sic}) and (\ref{cond:spd}) as IC constraint for
    BAMs, and (\ref{cond:ir}) as IR constraint for BAMs.

    %
    %
    %

    %

  \subsection{It is without loss of generality to focus on Bank Account
              Mechanisms}

    The reason why we can focus on bank account mechanisms instead of general
    dynamic mechanisms (other than the fact that they are simple) is formally
    justified by the following theorem.
    \begin{theorem}\label{thm:rep}
      For any direct mechanism $M$, there is a BAM $B$, such that,
      \begin{align*}
        \Rev(B) \geq \Rev(M),~\Utl(B) = \Utl(M).
      \end{align*}
      In addition, if $M$ is deterministic, $B$ is deterministic.
    \end{theorem}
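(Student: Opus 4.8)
The plan is to put $M$ into a canonical form and then read a bank account mechanism off its stage-by-stage decomposition. First, by Lemma~\ref{lem:swic} and Lemma~\ref{lem:swirconst} (with the standing IC and IR assumptions on $M$) we may assume $M$ is stage-wise IC and stage-wise IR; this leaves $x_{(1,T)}$ and the total payment $\sum_\tau p_\tau$ — hence $\Rev(M)$, $\Utl(M)$, and determinism — unchanged. For a fixed history $h=v_{(1,t-1)}$, recall the reduced one-shot mechanism from the discussion before the theorem: $\hat x_{t\mid h}(\cdot)=x_t(h,\cdot)$ and $\hat p_{t\mid h}(\cdot)=p_t(h,\cdot)-U_t(h,\cdot)$. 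Stage-wise IC of $M$ is exactly the statement that each $\langle\hat x_{t\mid h},\hat p_{t\mid h}\rangle$ is a one-shot IC mechanism, and the promised-utility recursion $U_{t-1}(h)=\E_{v_t}[u_t(h,v_t;v_t)+U_t(h,v_t)]$ is exactly the statement that its expected surplus equals $U_{t-1}(h)$. Stage-wise IR gives that the lowest type $\Zero\in\dV_t$ has nonnegative surplus, i.e.\ $\hat p_{t\mid h}(\Zero)\le 0$, and the envelope characterisation gives $\hat p_{t\mid h}(\cdot)-\hat p_{t\mid h}(\Zero)\ge 0$.

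The core BAM $B$ then uses the promised utility as the bank balance: $\bal_t=U_{t-1}(v_{(1,t-1)})$ (with $\bal_1=0$, the stage-$1$ obligation $U_0$ being discharged within stage $1$ itself). For each reachable balance level $b$ at stage $t$, pick a history $h=h(b,t)$ with $U_{t-1}(h)=b$ and install its reduced mechanism renormalised to have nonnegative prices: $z_t(b,\cdot)=\hat x_{t\mid h}(\cdot)$, $q_t(b,\cdot)=\hat p_{t\mid h}(\cdot)-\hat p_{t\mid h}(\Zero)\ge 0$; set $s_t(b)=b+\hat p_{t\mid h}(\Zero)\ (\le b)$ and choose $d_t(b,\cdot)\ge 0$ so that the update rule (\ref{eq:defbal}) sends the balance to $U_t(h,v_t)$ (at the last stage one simply takes $d_T\equiv 0$ and lets whatever is left sit in the account). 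The three BAM conditions now hold almost by construction: (\ref{cond:sic}) is one-shot IC of the reduced mechanism; (\ref{cond:ir}) reduces to $\hu_t(b,v_t;v_t)-d_t(b,v_t)=u_t(h,v_t;v_t)\ge 0$; and (\ref{cond:spd}) holds because $\E_{v_t}[\hu_t(b,v_t;v_t)]=b+\hat p_{t\mid h}(\Zero)=s_t(b)$ is a function of the balance alone. This last fact is the crux of the whole construction: it is precisely the promised-utility recursion that makes the expected per-stage surplus measurable with respect to $\bal_t$, so a single per-stage rule can legitimately serve all histories that share a balance.

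For the accounting, the stage-$t$ utility of the direct mechanism induced by $B$ is $\hu_t(\bal_t,v_t;v_t)-s_t(\bal_t)$, which has conditional expectation $0$ given $\bal_t$ for every $t\ge 2$ and expectation $U_0$ at $t=1$, so $\Utl(B)=U_0=\Utl(M)$. Since every allocation used by $B$ is one that $M$ itself uses, and since $h(b,t)$ may be chosen among the candidate histories to maximise the continuation welfare (a backward-induction / value-function selection), we get $\Eff(B)\ge\Eff(M)$, hence $\Rev(B)=\Eff(B)-\Utl(B)\ge\Eff(M)-\Utl(M)=\Rev(M)$; and $B$ is deterministic whenever $M$ is, because $z_t$ only ever outputs $0/1$ allocations of $M$. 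The main obstacle is exactly the well-definedness and feasibility bookkeeping concealed above — that $z_t,q_t,d_t,s_t$ genuinely depend only on $\bal_t$, that the selection $h(b,t)$ can be made consistently so the continuation-welfare comparison goes through, and that deposits stay nonnegative and spends stay within the balance even in degenerate cases (e.g.\ a mechanism that pays the zero type). This is precisely what the careful definition of the core BAM (Definition~\ref{def:corebam}) and Theorem~\ref{thm:corebam} are designed to handle.
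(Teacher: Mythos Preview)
Your outline is essentially the paper's two-step argument (Lemma~\ref{lem:mechtrans} then Lemma~\ref{lem:symmstrong}): collapse histories with equal promised utility onto a revenue-maximising representative, then read off a BAM. Your backward-induction choice of $h(b,t)$ is a clean variant of the paper's forward iterative symmetrisation and does yield $\Eff(B)\ge\Eff(M)$.

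There is, however, a real gap in the deposit bookkeeping, and Definition~\ref{def:corebam}/Theorem~\ref{thm:corebam} do not close it for the construction you actually wrote down. With $\bal_{t+1}=U_t(h,v_t)$ and your $s_t$, the required deposit works out to $d_t(b,v_t)=U_t(h,v_t)-U_t(h,\Zero)+p_t(h,\Zero)$; stage-wise IR only gives $p_t(h,\Zero)\le 0$, and nothing forces $U_t(h,\cdot)$ to be monotone in $v_t$, so $d_t$ can be negative. (Two stages: sell item~1 by a posted price, then give item~2 for free after a low report of $v_1$ but at a high posted price after a high report --- this is IC and stage-wise IR, yet $U_1$ strictly decreases in $v_1$.) In the language of Theorem~\ref{thm:corebam}, your $g_t=U_t$ need not be convex and increasing in $v_t$, and $y_t=x_t$ is not its sub-gradient, because the envelope identity~(\ref{eq:envic}) pins down $\partial(u_t+U_t)/\partial v_t=x_t$, not $\partial U_t/\partial v_t$. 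The paper closes this with one extra normalisation, $u_t\equiv 0$ for all $t<T$ (the footnote at the start of the proof of Lemma~\ref{lem:mechtrans}), which is strictly stronger than the stage-wise IR you obtain from Lemma~\ref{lem:swirconst}. Under that normalisation $U_{t-1}=\Utl(M\mid v_{(1,t-1)})$, your $d_t=\hu_t-u_t$ collapses to the Core BAM's $d_t=\hu_t\ge 0$, and your argument goes through.
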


    Before conducting the proof of this theorem, we first need to understand
    that the set of BAMs is a strict subset of the set of direct mechanisms. In
    fact, for any BAM that defines the same balance for two different
    histories, $\bal_{t + 1}(v_{(1, t)}) = \bal_{t + 1}(v'_{(1, t)})$, the
    submechanisms after the two histories must be identical, since balance is
    the only history related information stored in a BAM. This is not the case
    for general direct mechanisms, which may be path-dependent.

    Our idea is to conduct a two-step reduction from $M$ to $B$: firstly, from
    any direct mechanism $M$ to a {\em symmetric} direct mechanism $M'$. We say
    that a mechanism is symmetric if whenever conditioned on any two histories
    at time $t$, if the expected utility is the same then the allocation and
    payment rules are also the same. Formally:

    \begin{definition}[Symmetric Mechanism]\label{def:symm}
      Given any direct mechanism $M$ define the equivalence relation
      $v_{(1, t - 1)} \sim v'_{(1, t - 1)}$ between histories as:
      \begin{align}\label{eq:equiv}
        v_{(1, t)} \sim v'_{(1, t)} \iff
          \Utl(M | v_{(1, t)}) = \Utl(M | v'_{(1, t)}).
      \end{align}
      We say that mechanism $M$ is {\em symmetric}, if for any pair of
      equivalent histories, $v_{(1, t - 1)} \sim v'_{(1, t - 1)}$, the
      corresponding submechanisms are identical, i.e.,
      \begin{align*}
        \forall t \leq t' \leq T, v_{(t, t')} \in \dV_{(t, t')},~
        \left\{\begin{array}{l}
          x_{t'}(v_{(1, t - 1)}, v_{(t, t')})
            = x_{t'}(v'_{(1, t - 1)}, v_{(t, t')})  \\
          p_{t'}(v_{(1, t - 1)}, v_{(t, t')})
            = p_{t'}(v'_{(1, t - 1)}, v_{(t, t')})
        \end{array}\right..
      \end{align*}
    \end{definition}

    Our first step is to show that for any mechanism, it is possible to
    construct a symmetric mechanism where the utility is the same and the
    revenue is at least as large:

    \begin{lemma}\label{lem:mechtrans}
      For any direct mechanism $M$, there is a symmetric direct mechanism $M'$
      such that
      \begin{align*}
        \Utl(M') = \Utl(M),~\Rev(M') \geq \Rev(M).
      \end{align*}
      In particular, if $M$ is deterministic, $M'$ is also deterministic.
    \end{lemma}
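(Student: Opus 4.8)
The plan is a two-phase argument: first normalize $M$ so that the equivalence relation of Definition~\ref{def:symm} becomes literally ``the two histories carry the same promised utility,'' then collapse each equivalence class onto a single revenue-best continuation. \emph{Phase 1 (normalization).} By Lemmas~\ref{lem:swic}--\ref{lem:swirconst} I may take $M$ to be stage-wise IC and stage-wise IR. I then push all the buyer's net utility to the last stage: set $p'_t(v_{(1,t)})=x_t(v_{(1,t)})\cdot v_t$ for every $t<T$ and $p'_T(v_{(1,T)})=\sum_{\tau=1}^{T}p_\tau(v_{(1,\tau)})-\sum_{\tau=1}^{T-1}x_\tau(v_{(1,\tau)})\cdot v_\tau$, keeping every allocation rule $x_\tau$. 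This preserves determinism, preserves $\sum_\tau p_\tau$ on every history (hence $\Rev$ and $\Utl$), preserves stage-wise IR (the per-stage utility is $0$ for $t<T$ and equals $M$'s ex-post utility at $t=T$, which is $\ge 0$ by IR), and — after expanding the stage-wise IC inequality and cancelling a common prefix term — is equivalent, stage by stage, to stage-wise IC of $M$. Now $u_t(v_{(1,t)};v_t)=0$ for all $t<T$, so $\Utl(M\mid v_{(1,t)})=U_t(v_{(1,t)})$ for every $t<T$; thus two length-$t$ histories are $\sim$-equivalent exactly when $U_t$ agrees on them, and level-$T$ symmetry is vacuous.

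\emph{Phase 2 (one substitution fact, then a sweep).} Fix a stage $t$ and a length-$t$ history $h$. The stage-wise IC constraints at stages $\le t$ see the submechanism on stages $t+1,\dots,T$ following $h$ only through its expected truthful utility $U_t(h)$; the constraints at stages $>t$ after $h$ are the internal stage-wise IC of that submechanism; stage-wise IR splits into the untouched prefix plus that submechanism; and both $\Rev$ and $\Utl$ decompose as a prefix term plus $\E_h[\cdot]$ of the submechanism's revenue, resp.\ truthful utility. Hence I may replace the submechanism after $h$ by any normalized, stage-wise IC, stage-wise IR mechanism on stages $t+1,\dots,T$ with the same expected utility $U_t(h)$, leaving IC, stage-wise IR and $\Utl$ untouched and changing $\Rev$ by $\Pr[h]$ times the revenue difference. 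Now sweep $t=1,2,\dots,T-1$: at stage $t$, partition the length-$t$ histories by the value of $U_t(\cdot)$; all continuations in a class share that value, so by the substitution fact they are mutually interchangeable; choose (conditional pigeonhole — within a class the continuation revenues are bounded above by $\Eff<\infty$ and below by $-U_t$) a class member whose continuation has revenue at least the class-conditional average, and graft that continuation onto every member of the class.

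Each sweep keeps IC and stage-wise IR, keeps $\Utl$, does not decrease $\Rev$, and keeps the normalized form (the grafted continuations, being pieces of the current mechanism, are themselves normalized). Because stages $\le t$ are untouched and the grafted continuations deliver the same promised utility, every $U_{t'}(\cdot)$ with $t'\le t$ is left unchanged; hence the sweep makes the length-$t$ histories symmetric without disturbing the already-symmetric shorter levels — a pair of $\sim$-equivalent length-$t'$ histories ($t'<t$) already shared a common submechanism, and the level-$t$ grafting is a function of that submechanism alone, so the two still share one afterwards. After the sweep $t=T-1$ the resulting $M'$ is symmetric at every level, with $\Utl(M')=\Utl(M)$ and $\Rev(M')\ge\Rev(M)$, and determinism has been preserved throughout since every grafted continuation is a piece of $M$ itself.

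The step I expect to fight with is the consistency of Phase~2: ordering the levels so that collapsing a deeper level cannot reintroduce asymmetry at a shorter one, and pinning down exactly which promised-utility values the grafting must preserve (namely all of them up to the current level). Phase~1 is the unglamorous but essential enabler — without forcing the realized utility of the first $t$ stages to vanish, the $\sim$-classes mix continuations of different promised utility and the substitution fact of Phase~2 simply fails to apply. A secondary nuisance is the measurable selection of one representative continuation per class, which I would handle by a standard selection theorem (or sidestep by discretizing the promised-utility values and passing to a limit).
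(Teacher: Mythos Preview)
Your proposal is correct and follows essentially the same route as the paper: first normalize so that $u_t=0$ for $t<T$ (the paper states this as a WLOG assumption with the same construction you give), then sweep $t=1,\dots,T-1$, at each level partitioning length-$t$ histories by promised utility and grafting onto every class a revenue-dominant representative continuation, while maintaining the invariant that promised utilities at levels $\le t$ are unchanged so that symmetry at shorter levels survives. The consistency issue you flag is exactly the content of the paper's inductive step, and your observation that the grafting is a function of the shared submechanism is their reduction of the $\tau<t$ case to the $\tau=t$ case.
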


    The second step in the reduction is to show that any symmetric direct
    mechanism can be converted to a BAM.

    \begin{lemma}\label{lem:symm}
      For any symmetric direct mechanism $M'$, there is a Core BAM $B$ such
      that
      \begin{align*}
        \Utl(B) = \Utl(M'),~\Rev(B) = \Rev(M').
      \end{align*}
    \end{lemma}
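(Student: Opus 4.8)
The plan is to feed $M'$ to the Core BAM construction of Definition~\ref{def:corebam} and to verify, using only symmetry of $M'$, that the result is a genuine BAM (Definition~\ref{def:bam}), that it satisfies the IC--IR conditions (\ref{cond:sic})--(\ref{cond:ir}), and that its induced direct mechanism has the same payments as $M'$ on every type path. By Lemmas~\ref{lem:swic} and~\ref{lem:swirconst} we may assume $M'$ is stage-wise IC and stage-wise IR. For a history $v_{(1,t-1)}$ put $\psi_t:=\Utl(M'\mid v_{(1,t-1)})$ and let this scalar be the balance carried into stage~$t$ (with $\bal_1=0$, consistent since the stage-$1$ submechanism has no history to depend on). Using the Envelope decomposition behind~(\ref{eq:envic}), set $z_t(\bal_t,v_t):=x'_t(v_{(1,t)})$ and $q_t(\bal_t,v_t):=p'_t(v_{(1,t)})-U_t(v_{(1,t)})$, so that $\hu_t(\bal_t,v_t;v_t)=u_t(v_{(1,t)};v_t)+U_t(v_{(1,t)})$ is the stage-$t$-onward truthful utility; let the buyer deposit this whole amount, $d_t(\bal_t,v_t):=\hu_t(\bal_t,v_t;v_t)$, and let the seller pre-spend its expectation, $s_t(\bal_t):=\E_{v_t}[\hu_t(\bal_t,v_t;v_t)]=U_{t-1}(v_{(1,t-1)})$ for $t\ge 2$, and $s_1\equiv 0$ (which preserves total payments).

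The heart of the argument is that this really defines a BAM, i.e.\ that $z_t,q_t,d_t$ depend on the history only through $(\bal_t,v_t)$ and $s_t$ only through $\bal_t$; this is exactly where symmetry is used. By definition of a symmetric mechanism, any two histories with the same value $\psi_t$ induce identical stage-$t$ submechanisms, so $x'_t$ and $p'_t$ --- hence $z_t$ and $q_t$ --- are functions of $(\bal_t,v_t)$. Because the priors $\F_\tau$ are stage-wise independent, equivalent histories also have identical continuation \emph{expected} utilities, so $U_{t-1}(v_{(1,t-1)})$, and therefore the realized past utility $\sum_{\tau<t}u_\tau=\psi_t-U_{t-1}$, is a function of $\bal_t$ as well; this is what makes $d_t$ and $s_t$ legitimate. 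A short computation then shows that the recursion~(\ref{eq:defbal}) reproduces $\bal_{t+1}=\psi_{t+1}=\Utl(M'\mid v_{(1,t)})$, so the carried balance is $\psi$ at every stage. Feasibility is then immediate from stage-wise IR: $\psi_t=\sum_{\tau<t}u_\tau+U_{t-1}\ge 0$ gives $\bal_t\ge 0$ and $s_t(\bal_t)=U_{t-1}\le\psi_t=\bal_t$, while $d_t=\hu_t=u_t+U_t\ge 0$.

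It remains to check the conditions and the two equalities. Substituting $\hu_t(\bal_t,\cdot;\cdot)=u_t+U_t$, condition~(\ref{cond:sic}) becomes precisely the stage-wise IC inequality~(\ref{mech:ic}) for $M'$; condition~(\ref{cond:spd}) holds with per-stage constant $0$ since $\E_{v_t}[\hu_t(\bal_t,v_t;v_t)]=U_{t-1}=s_t(\bal_t)$; and condition~(\ref{cond:ir}) holds with equality since $d_t=\hu_t$. Finally, the induced direct mechanism allocates exactly $x'_t$, so $\Eff(B)=\Eff(M')$, and it charges $s_t(\bal_t)+q_t(\bal_t,v_t)=U_{t-1}(v_{(1,t-1)})+p'_t(v_{(1,t)})-U_t(v_{(1,t)})$ at stage~$t$; summing over $t$, the $U$-terms telescope (using $U_T\equiv 0$ and $s_1\equiv 0$) and leave $\sum_t p'_t(v_{(1,t)})$ for every type path, so $\Rev(B)=\Rev(M')$ and hence $\Utl(B)=\Utl(M')$.

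I expect the main obstacle to be exactly the well-definedness step: one must argue carefully that, for a symmetric mechanism with stage-wise independent priors, the scalar $\psi_t$ is a sufficient statistic not only for the continuation allocation and payment rules (which is the definition of symmetry) but also for the continuation promised utility $U_{t-1}$, hence for the accumulated past utility; otherwise $d_t$ and $s_t$ fail to be functions of the balance and $\langle z,q,\bal,d,s\rangle$ does not meet Definition~\ref{def:bam}. A secondary, purely bookkeeping point is honoring the sign conventions of Definition~\ref{def:bam} (notably $q_t\ge 0$): one adds a balance-only constant to $q_t$ and subtracts it from $s_t$, which changes neither the induced payments nor~(\ref{cond:sic})--(\ref{cond:ir}), feasibility of the constant being again a consequence of stage-wise IR.
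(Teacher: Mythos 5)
Your proposal is correct and takes essentially the same approach as the paper's proof (of the stronger Lemma~\ref{lem:symmstrong}): take $g_t(h_t)=\Utl(M'\mid h_t)$ as the balance and $y_t=x'_t$ as the allocation, use symmetry of $M'$ to show that $z_t,q_t,d_t$ are functions of $(\bal_t,v_t)$ and $s_t$ of $\bal_t$ alone, verify conditions~(\ref{cond:sic})--(\ref{cond:ir}) from stage-wise IC/IR of $M'$, and obtain $\Rev(B)=\Rev(M')$ and $\Utl(B)=\Utl(M')$ by telescoping. The only cosmetic difference is bookkeeping: the paper normalizes the balance via $\mu_t$ and uses $q_t=z_t\cdot v_t-\int_{\Zero}^{v_t}z_t\,\ud v$ (so that $q_t\geq 0$ automatically and $s_t$ is derived from~(\ref{eq:bspd}) and shown to be $v_t$-independent), whereas you define $s_t=U_{t-1}$ directly and then shift constants between $q_t$ and $s_t$; both routes yield the same induced direct mechanism.
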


    In fact, we will show that any mechanism can be converted to a specific
    type of BAM, which we will denote {\em Core BAM}. The rest of this section
    is devoted to proving the previous lemmas, which taken together constitute
    a proof of Theorem \ref{thm:rep}. We will start by defining the concept of
    Core BAM which will be central to our analysis.

    Let $\dH = \dH_0 \cup \dH_1 \cup \cdots \cup \dH_T$ be the set of all
    histories, where $\dH_t = \dV_{(1, t)}$ is the set of all histories of
    length $t$ ($0 \leq t \leq T$). In particular, $\dH_0 = \{\emptyset\}$.

    For any $0 \leq t \leq T$, Let $g_t$ be a function that maps a history of
    length $t$ to a real number, and $y_t$ (for $t \geq 1$) be a function that
    maps a history of length $t$ to a stage allocation.
    \begin{align*}
      & g_t : \dH_t \rightarrow \Real,~\forall 0 \leq t \leq T;  \\
      & y_t : \dH_t \rightarrow \dX_t,~\forall t \in [T].
    \end{align*}

    %
    %

    Intuitively, we will construct a BAM, called core BAM, that yields the same
    conditional utility as $g$ and the same allocation as $y$, where $g$ and
    $y$ denote $g_0, \ldots, g_T$ and $y_1, \ldots, y_T$, respectively.
    \begin{definition}[Core BAM]\label{def:corebam}
      Consider the following construction of bank account mechanism based on
      $g$ and $y$, denoted as $B^{g, y}$.
      \begin{align}
        & \bal_{t + 1}(v_{(1, t)}) = g_t(v_{(1, t)}) - \mu_t,
          \label{eq:bbal}  \\
        & z_t\big(\bal_t(v_{(1, t - 1)}), v_t\big) = y_t(v_{(1, t)}),
          \label{eq:ball}  \\
        & q_t(\bal_t, v_t) = z_t(\bal_t, v_t) \cdot v_t
                             - \int_\Zero^{v_t} z_t(\bal_t, v) \ud v,
          \label{eq:bpay}  \\
        & d_t(\bal_t, v_t) = \hu_t(\bal_t, v_t; v_t),
          \label{eq:bdps}  \\
        & s_t(\bal_t) = \bal_t + d_t(\bal_t, v_t) - \bal_{t + 1},
          \label{eq:bspd}
      \end{align}
      where for $t < T$, $\mu_t = \inf_{v_{(1, t)}} g_t(v_{(1, t)})$;
      $\mu_T = \min\big\{0, \inf_{v_{(1, T)}} g_T(v_{(1, T)})\big\}$.

      When the construction above is a valid BAM, we call it a {\em core BAM}.
    \end{definition}

    %

    For the construction in Definition \ref{def:corebam} to be valid BAM we
    need certain conditions to be satisfied. For equation (\ref{eq:ball}) to be
    well-defined, we need that $y_t(v_{(1, t)}) = y_t(v'_{(1, t - 1)}, v_t)$
    whenever $g_t(v_{(1, t - 1)}) = g_t(v'_{(1, t - 1)})$, otherwise equation
    (\ref{eq:ball}) would imply different values of $z_t$ for the same bank
    balance. Also for IC constraint (\ref{cond:sic}) to be satisfied, we need
    the allocation function  $y_t(v_{(1, t)})$ to be weakly increasing in $v_t$.
    In Theorem \ref{thm:corebam} we will provide necessary and sufficient
    conditions on $g_t$ and $y_t$ for the mechanism to be a valid core BAM.

    We also want to remind that reader that our theorems hold in the case where
    more than one item is auctioned in each stage. In the first time the reader
    reads our theorems, we recommend him to focus on the single-item per round
    case and interpret the integrals as one-variable integrals. However, they
    hold in general multi-dimensional case as well, by interpreting the
    integration in (\ref{eq:bpay}) as a multidimensional path-integral $\Zero$
    to $v_t$. If $z_t(\bal_t, v)$ is the sub-gradient of some multi-dimensional
    valued input, then the value of the integral is independent of the
    integration path. As we will see in Theorem \ref{thm:corebam}, a necessary
    condition for the mechanism $B^{g, y}$ to be a Core BAM is that $y_t$ is
    the sub-gradient of $g_t$.

    Core BAM is a key construction throughout this paper because it is without
    loss of generality to restrict to Core BAMs, as we will prove that Theorem
    \ref{thm:rep} is still true even if we replace BAM by Core BAM in the
    statement of the theorem.


    Now we are ready to prove the lemmas. Both of the proofs are constructive.
    The idea to prove Lemma \ref{lem:mechtrans} is sketched as follows:
    \begin{enumerate}
      \item Given a direct mechanism $M$, we construct equivalence classes
            defined by (\ref{eq:equiv}).
      \item For each equivalence class, select a {\em representative history}
            $v^*_{(1, t - 1)}$ such that the revenue from the
            submechanism at this history dominates the expected revenue from
            the submechanisms at the histories in this equivalence class.
      \item At history $v_{(1, t - 1)}$ and stage type $v_t$, $M'$ simulates
            what $M$ does at history $v^*_{(1, t - 1)}$ with stage type $v_t$.
            %
      \item However, applying such simulation to histories at stage $t$ will
            destroy the equivalence relation in stage $t + 1$ since it changes
            the utilities in stage $t + 1$.
      \item Our idea is to replace each history with its representative history
            inductively, stage by stage (so that when the replacement procedure
            is applied up to stage $t$, then equivalence relation in stage
            $t + 1$ is determined).
      \item The so-constructed mechanism is symmetric and we verify that it
            satisfies the desirable properties.
    \end{enumerate}

    The representative history in an equivalent class is a history in this
    class such that the seller's revenue at this history is no less than the
    expected revenue in this equivalence class. Clearly, such a history always
    exists. Intuitively, by replacing all histories in an equivalent class with
    such a representative history, one can ensure that the seller's expected
    revenue weakly increases while the buyer's utility remains the same.

    Instead of directly proving Lemma \ref{lem:symm}, we prove the following
    stronger version: for each symmetric direct mechanism, one can construct a
    Core BAM with the same overall outcome.
    \begin{lemma}\label{lem:symmstrong}
      For any symmetric direct mechanism $M$, $B^{g, y}$ is a BAM, if
      \begin{align}\label{eq:lemsymmstat}
        \forall 0 \leq t \leq T, h_t \in \dH_t, g_t(h_t) = \Utl(M | h_t),~
        \forall t \in [T], h_t \in \dH_t, y_t(h_t) = x_t(h_t).
      \end{align}

      Moreover, $B^{g, y}$ has the same overall allocation and payment, as a
      result,
      \begin{align*}
        \Utl(B^{g, y}) = \Utl(M) = g_0(\emptyset),~\Rev(B^{g, y}) = \Rev(M).
      \end{align*}
    \end{lemma}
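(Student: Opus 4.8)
The plan is to verify directly that the construction $B^{g,y}$ from Definition \ref{def:corebam}, instantiated with $g_t = \Utl(M\mid\cdot)$ and $y_t = x_t$, satisfies all the well-definedness requirements and the IC/IR conditions (\ref{cond:sic})--(\ref{cond:ir}), and then to check that the induced direct mechanism reproduces the allocation and payment of $M$ on every history. I would proceed in the following order. First, I would establish that the allocation rule $z_t$ in (\ref{eq:ball}) is well-defined: by symmetry of $M$, whenever two length-$(t-1)$ histories induce the same continuation utility they induce identical submechanisms, hence identical $x_t(\cdot,v_t)$; since $\bal_t$ is by (\ref{eq:bbal}) a function of $g_{t-1}=\Utl(M\mid\cdot)$ only, equal balances force equal $x_t$, so $z_t(\bal_t,v_t)$ is unambiguous. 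Next, I would check monotonicity of $z_t$ in $v_t$ and the envelope/payment identity: since $M$ is IC, the single-shot mechanism at each history is IC, so $x_t(v_{(1,t)})$ is monotone in $v_t$ and $p_t$ is pinned down by the Myerson/Rochet payment formula up to the value at $\Zero$; this is exactly what makes (\ref{eq:bpay}) reproduce $\hat p_{t\mid v_{(1,t-1)}}$ (the payment net of promised utility), and gives condition (\ref{cond:sic}).

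Second, I would verify the balance-consistency. The key computation is that with $d_t(\bal_t,v_t)=\hu_t(\bal_t,v_t;v_t)$ and the definition of $s_t$ in (\ref{eq:bspd}), the recursion (\ref{eq:defbal}) is automatically satisfied and $\bal_{t+1}(v_{(1,t)})=g_t(v_{(1,t)})-\mu_t$ as required; here one uses that $\Utl(M\mid v_{(1,t)})$ itself satisfies the recursion $\Utl(M\mid v_{(1,t-1)}) = \E_{v_t}[u_t(v_{(1,t)};v_t)] + \E_{v_t}[\Utl(M\mid v_{(1,t)})]$, so the promised utility "inside" the Core BAM matches $g$. I would then check $\bal_t\ge 0$ for all $t$ (needed since balances live in $\Real_+$): for $t\le T$ this is immediate from $\mu_t = \inf g_t$; the choice $\mu_T=\min\{0,\inf g_T\}$ handles the last stage so that the final balance is nonnegative, which is where IR of the original mechanism $M$ (via Lemma \ref{lem:swir}) is used — it guarantees $g_0(\emptyset)=\Utl(M)\ge 0$ and, combined with stage-wise IR (WLOG by Lemma \ref{lem:swirconst}), that the deposits never exceed the stage utility, giving condition (\ref{cond:ir}).

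Third, I would verify condition (\ref{cond:spd}): taking expectations over $v_t$ of (\ref{eq:bspd}) and using $\bal_{t+1}=g_t-\mu_t$ shows $s_t(\bal_t)-s_t(\bal'_t) = \E_{v_t}[\hu_t(\bal_t,v_t;v_t)-\hu_t(\bal'_t,v_t;v_t)]$ as soon as $g_t(v_{(1,t)})$ depends on $v_{(1,t-1)}$ only through $g_{t-1}(v_{(1,t-1)})$ — which again follows from symmetry of $M$ plus the utility recursion. Finally, I would unwind the BAM-to-direct-mechanism map: $x_t = z_t(\bal_t,v_t) = y_t = x_t$ is immediate, and $p_t = s_t(\bal_t) + q_t(\bal_t,v_t)$; plugging in (\ref{eq:bspd}) and (\ref{eq:bpay}) and telescoping the balance terms recovers exactly $p_t(v_{(1,t)})$ of $M$, so $\Rev(B^{g,y})=\Rev(M)$ and $\Utl(B^{g,y})=g_0(\emptyset)=\Utl(M)$.

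I expect the main obstacle to be the bookkeeping in the second step: showing that the recursively defined $\bal_t$ actually equals $g_{t-1}-\mu_{t-1}$ and stays nonnegative requires carefully matching the Core BAM's internal notion of promised utility with the conditional utility $\Utl(M\mid\cdot)$ of the original mechanism, and the asymmetric treatment of the last stage ($\mu_T$) must be handled so that both the IR condition (\ref{cond:ir}) at stage $T$ and the nonnegativity of balances hold simultaneously; everything else is either a direct appeal to symmetry of $M$ or a routine envelope-theorem argument.
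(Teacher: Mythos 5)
Your plan follows the same route as the paper: verify directly that $B^{g,y}$ is a valid BAM by checking each of the well-definedness requirements and conditions (\ref{cond:sic})--(\ref{cond:ir}), then argue that the induced direct mechanism matches $M$ in allocation and cumulative payment. Two details you would trip over when filling in the plan are worth flagging.

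First, you claim stage-wise IR of $M$ is what delivers condition (\ref{cond:ir}). In fact (\ref{cond:ir}) is an \emph{equality} by construction: (\ref{eq:bdps}) sets $d_t(\bal_t,v_t)=\hu_t(\bal_t,v_t;v_t)$, so there is nothing to prove. IR of $M$ enters the argument at a different point: it is what forces $\mu_T=\min\{0,\inf_{v_{(1,T)}} g_T\}=0$, so that $\bal_{T+1}=\Utl(M\mid v_{(1,T)})$ with no offset; without this the telescoping identity $\Utl(B^{g,y}\mid v_{(1,T)})=\bal_{T+1}-\bal_1$ would not equal $\Utl(M\mid v_{(1,T)})$. (You also do not need Lemma \ref{lem:swirconst} anywhere in this proof; ordinary ex-post IR of $M$ suffices.) Second, your last step claims the telescoping "recovers exactly $p_t(v_{(1,t)})$ of $M$" stage by stage. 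That is generally false: the Core BAM shifts money between stages via spend and deposit. What the telescoping gives is that the \emph{total} utility along any full history $v_{(1,T)}$ coincides with $\Utl(M\mid v_{(1,T)})$; combined with identical allocations (hence identical welfare along each history) this yields identical cumulative payment, which is what the lemma asserts. Finally, the issue you flag as the "main bookkeeping obstacle" — that the right-hand side of (\ref{eq:bspd}) is secretly independent of $v_t$, so $s_t$ is truly a function of $\bal_t$ alone — is the one step in the paper that genuinely needs a new idea: it is proved by differentiating (\ref{eq:bspd}) in $v_t$ and invoking the envelope identity (\ref{eq:envic}) for $M$, so that $\partial s_t/\partial v_t=z_t(\bal_t,v_t)-x_t(v_{(1,t)})=\Zero$. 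A purely "recursion-matching" argument as you sketch will not close this without that envelope computation.
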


    Then the proof of this lemma is to verify that so-constructed $B^{g, y}$ is
    indeed a BAM.
    %
    %


    %


    In light of Theorem \ref{thm:rep}, it is without loss of generality to
    consider core BAMs only for the purpose of this paper. Most of our results
    (approximation and computation) rely on the construction of core BAMs.
    However, as mentioned, the construction $B^{g, y}$ is not guaranteed to be
    a core BAM for an arbitrary pair of $g$ and $y$. At the end of this section,
    we characterize the set of pairs of $g$ and $y$ such that $B^{g, y}$ is a
    Core BAM (Theorem \ref{thm:corebam}). The characterization will be used to
    construct feasibility constraints for finding the optimal Core BAMs in
    Section \ref{sec:fptas}.
    \begin{theorem}\label{thm:corebam}
      $B^{g, y}$ is a core BAM, if and only if for any $t \in [T]$,
      \begin{itemize}
        \item $y_t$ is the sub-gradient of $g_t$ with respect to $v_t$ with
              range being $\dX_t$, i.e., $y_t(v_{(1, t)}) =
              \partial g_t(v_{(1, t)}) / \partial v_t$ and $y_t(v_{(1, t)}) \in
              \dX_t$.
        \item $g_t$ is {\em consistent}, {\em symmetric}, convex in $v_t$ and
              weakly increasing in $v_t$.
      \end{itemize}

      Consistency and symmetry are defined as follows,
      \begin{align*}
                           & \forall t \in [T],
                             v_{(1, t)}, v'_{(1, t)} \in \dV_{(1, t)}  \\
        \text{consistent:} & \quad g_{t - 1}(v_{(1, t - 1)})
                             - \E_{v_t}\big[g_t(v_{(1, t)})\big] = c_t,  \\
        \text{symmetric:}  & \quad g_{t - 1}(v_{(1, t - 1)})
                                        = g_{t - 1}(v'_{(1, t - 1)})
                             \Infer g_t(v_{(1, t)}) = g_t(v'_{(1, t - 1)}, v_t),
      \end{align*}
      where $c_t$ is a constant.
    \end{theorem}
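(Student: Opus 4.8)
The plan is to unfold the statement ``$B^{g,y}$ is a valid BAM'' --- i.e.\ that the objects $z_t,q_t,\bal_t,d_t,s_t$ of Definition~\ref{def:corebam} satisfy every clause of Definition~\ref{def:bam} (correct codomains, $\bal_t\ge 0$, $\bal_1=0$, $s_t(\bal_t)\le\bal_t$, and $s_t$ a function of $\bal_t$ alone) together with the IC/IR conditions (\ref{cond:sic})--(\ref{cond:ir}) --- and then to check, clause by clause, to which hypothesis on $g,y$ it is equivalent; both directions of the theorem are then read off the same list of equivalences. I would first dispatch the clauses that hold for \emph{every} $g,y$: the choice $\mu_t=\inf_{v_{(1,t)}}g_t$ (and $\mu_T=\min\{0,\inf g_T\}$) makes $\bal_{t+1}=g_t-\mu_t\ge 0$ and $\bal_1=0$; the Myerson-type payment (\ref{eq:bpay}) gives $\hu_t(\bal_t,v_t;v_t)=\int_\Zero^{v_t}z_t(\bal_t,v)\,\ud v$, so $d_t=\hu_t\ge 0$ and the IR condition (\ref{cond:ir}) holds with equality automatically, while $q_t\ge 0$ will follow once $z_t(\bal_t,\cdot)$ is known to be monotone.

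The core of the argument is the spend policy. Combining (\ref{eq:bspd}), (\ref{eq:bbal}) and (\ref{eq:bdps}) gives $s_t(\bal_t)=\bal_t+\hu_t(\bal_t,v_t;v_t)-g_t(v_{(1,t)})+\mu_t$, and being a BAM requires the right-hand side to be a function of $\bal_t$ \emph{alone}. Comparing its values at an arbitrary $v_t$ and at $v_t=\Zero$ (where $\hu_t=0$) produces the single identity $g_t(v_{(1,t)})=\psi(\bal_t)+\int_\Zero^{v_t}z_t(\bal_t,v)\,\ud v$, where $\psi(\bal_t):=g_t(v_{(1,t-1)},\Zero)$ is a function of $\bal_t=g_{t-1}(v_{(1,t-1)})-\mu_{t-1}$ only. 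This one identity delivers three of the conditions at once: $g_t$ is symmetric (its value depends on the history only through $\bal_t$ and $v_t$); $y_t=z_t(\bal_t,\cdot)$ is the sub-gradient of $g_t$ in $v_t$; and $g_t$ is convex in $v_t$ iff $z_t(\bal_t,\cdot)$ is cyclically monotone. Next, $s_t(\bal_t)=\bal_t-\psi(\bal_t)+\mu_t\le\bal_t$ iff $\psi(\bal_t)=g_t(v_{(1,t-1)},\Zero)\ge\mu_t$, which holds iff $g_t$ is weakly increasing in $v_t$ (so that $\Zero$ minimizes $g_t(v_{(1,t-1)},\cdot)$). Finally, writing $\psi(\bal_t)=\E_{v_t}[g_t(v_{(1,t)})]-\E_{v_t}[\hu_t(\bal_t,v_t;v_t)]$ and substituting into the spend-consistency condition (\ref{cond:spd}) collapses it, after cancellation, to ``$\bal_t-\E_{v_t}[g_t(v_{(1,t)})]$ is a constant'', i.e.\ (via $\bal_t=g_{t-1}-\mu_{t-1}$) exactly consistency of $g$.

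It remains to treat the per-stage incentive constraint (\ref{cond:sic}). Since (\ref{eq:bpay}) makes $\hu_t(\bal_t,v_t;v_t)=\int_\Zero^{v_t}z_t(\bal_t,v)\,\ud v$, the standard single-shot characterization (Rochet~\cite{rochet1985taxation}, and Myerson~\cite{myerson1981optimal} in the single-item case) says that (\ref{cond:sic}) holds, the path integral in (\ref{eq:bpay}) is well defined, and $q_t\ge 0$, precisely when $z_t(\bal_t,\cdot)$ is cyclically monotone --- which by the previous paragraph is the same as $g_t$ being convex in $v_t$; meanwhile the codomain requirement $z_t\to\dX_t$ in Definition~\ref{def:bam} is exactly $y_t\in\dX_t$, whose ``$\ge\Zero$'' part reproduces the weak monotonicity of $g_t$ already obtained. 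Stringing these equivalences together in both directions proves the theorem. The step I expect to be the main obstacle, and the one to write most carefully, is the spend analysis: one scalar object $s_t(\bal_t)$ is simultaneously responsible for three of the four conditions on $g_t$ (symmetry and the sub-gradient identity from well-definedness, weak monotonicity from $s_t\le\bal_t$, consistency from (\ref{cond:spd})), so the deductions must be sequenced correctly --- in particular one must first extract $g_t=\psi(\bal_t)+\int_\Zero^{v_t}z_t(\bal_t,v)\,\ud v$ before ``symmetry'' can be phrased in terms of $g_t$ at all. The only other point needing a word of care is passing from an identity about the $v_t$-antiderivative of $z_t$ to convexity/monotonicity statements, which is legitimate because $z_t$ is bounded (so $\hu_t$ is Lipschitz in $v_t$) and the path integral of a sub-gradient field is path-independent, as noted after Definition~\ref{def:corebam}.
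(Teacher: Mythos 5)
Your overall strategy is the same as the paper's: unfold ``$B^{g,y}$ is a valid BAM'' into the clauses of Definitions~\ref{def:bam} and~\ref{def:corebam} plus (\ref{cond:sic})--(\ref{cond:ir}), and match each clause with a condition on $(g,y)$, reading both directions off the same list of equivalences. Your distillation of the well-definedness of the spend policy into the single identity $g_t(v_{(1,t)})=\psi(\bal_t)+\int_\Zero^{v_t}z_t(\bal_t,v)\,\ud v$ is a tidy way of packaging what the paper proves more piecemeal (the $\partial s_t/\partial v_t=\Zero$ computation, the symmetry deduction from well-definedness of $s_t$, the consistency computation via $s_t-\E_{v_t}[\hu_t]$), and your treatment of consistency and the Envelope/Rochet step is correct.

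There is one genuine error, however. You claim that $s_t(\bal_t)=\bal_t-\psi(\bal_t)+\mu_t\le\bal_t$ holds ``iff $\psi(\bal_t)\ge\mu_t$, which holds iff $g_t$ is weakly increasing in $v_t$.'' The first equivalence is fine, but the second is false: by the definition of the Core BAM, $\mu_t=\inf_{v_{(1,t)}}g_t(v_{(1,t)})$ (and $\mu_T=\min\{0,\inf g_T\}$), so $\psi(\bal_t)=g_t(v_{(1,t-1)},\Zero)\ge\mu_t$ holds automatically for \emph{any} $g$, increasing or not. Hence $s_t\le\bal_t$ is not where weak monotonicity of $g_t$ comes from; the paper also derives $s_t\le\bal_t$ unconditionally (via $d_t(\bal_t,\Zero)=0$ and $\bal_{t+1}\ge 0$). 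The correct source of the weak-monotonicity condition is precisely the codomain requirement $z_t\mapsto\dX_t\subseteq\Real_+^{k_t}$: since $y_t$ is the sub-gradient of $g_t$ and $y_t\ge\Zero$, $g_t$ is weakly increasing, and conversely. You do observe this at the end, but you describe it as ``reproducing'' a conclusion ``already obtained,'' which it does not --- it is the only place the condition actually enters. Fixing this is just a matter of dropping the false ``iff'' and promoting the codomain observation from a side remark to the derivation of that clause; the rest of the proof stands.
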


    As a result, it is without loss of generality to study BAMs and Core BAMs
    in the rest of the paper.

    Figure \ref{fig:hrch} visualizes our findings so far. For more details and
    intuitions behind, we refer the readers to Appendix (along with the proof
    of Theorem \ref{thm:corebam}).

    \begin{figure}
      \centering
      \includegraphics[width=0.55\textwidth]{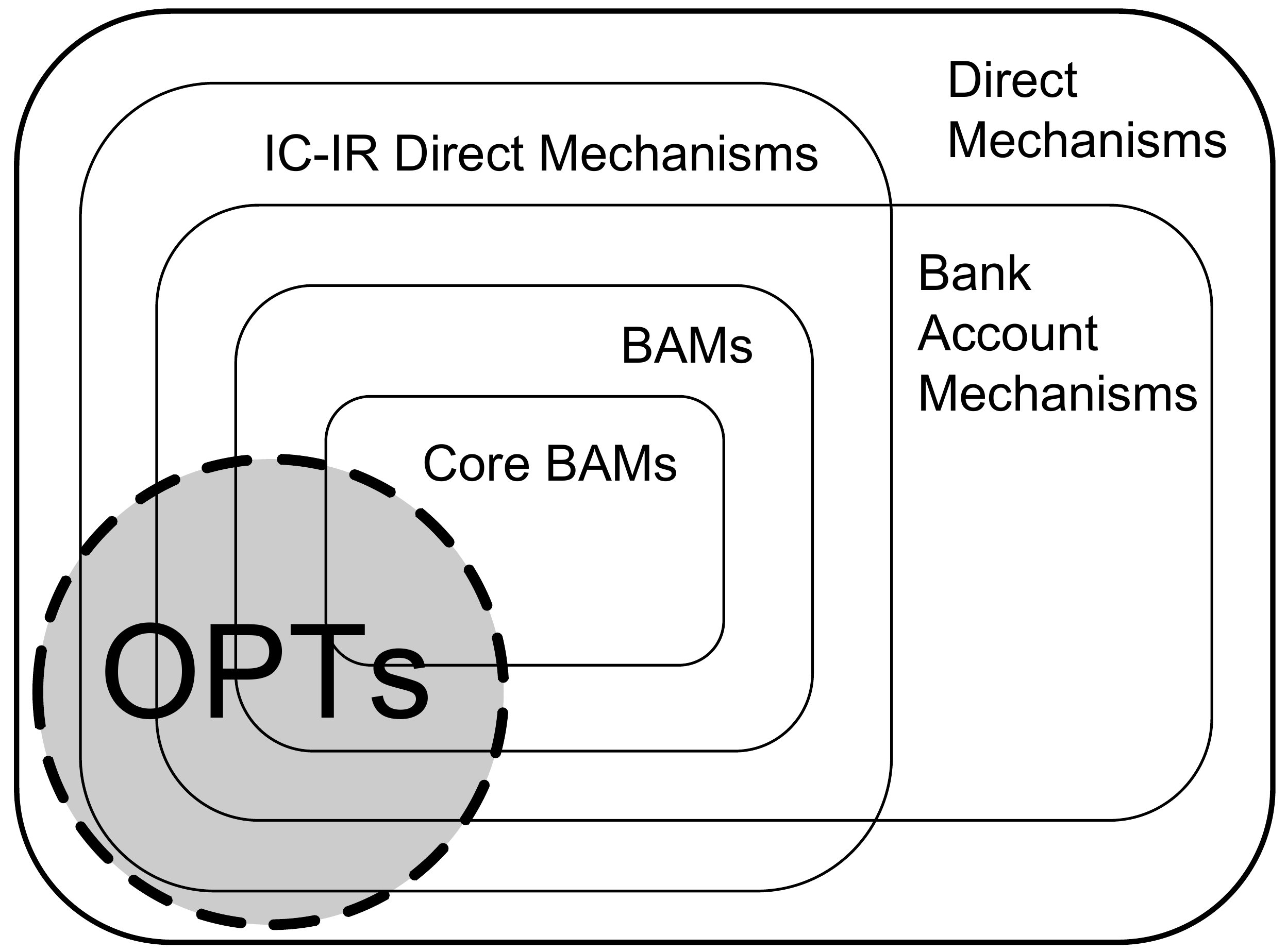}
      \caption{The intersection of OPTs and Core BAMs is
               non-empty.}
      \label{fig:hrch}
    \end{figure}

\section{Approximation of Optimal Mechanism}\label{sec:3app}

  By Theorem \ref{thm:rep}, the revenue optimal bank account mechanism is
  optimal among any mechanism. However, the optimal bank account mechanism can
  still be complicated. This observation is formalized in a later section.
  Later in this section, we will use an example to informally illustrate this
  observation.

  \subsection{A simple BAM that $3$-Approximates the optimal revenue}

    In this section we present a very simple BAM whose revenue is a
    $3$-approximation of the revenue of the revenue optimal mechanism. By
    Theorem \ref{thm:rep} we know that the optimal revenue is achieved by a
    BAM, so our strategy will be to first provide an upper bound on the maximum
    revenue that a BAM mechanism can extract. Then we will craft three
    individual mechanisms achieving one of the components of the upper bound as
    revenue. Then we will argue that their combination is a constant
    approximation to the optimal revenue.

    We begin with an upper bound on the revenue of a BAM. The following theorem
    is a version of Theorem 3 in our previous paper where bank accounts were
    introduced \cite{mirrokni2016dynamic}.

    \begin{theorem}\label{thm:bl}
      For any BAM $B$,
      \begin{align*}
        \Rev(B) \leq \Rev(\SMA) +
          \E_{v_{(1, T)}} \bigg[\sum_{\tau = 1}^T s_\tau(\bal_\tau)\bigg].
      \end{align*}
      where $\SMA$ denotes the optimal history-independent mechanism, i.e.,
      within each stage, run the single-stage optimal mechanism.
    \end{theorem}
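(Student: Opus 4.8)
The plan is to decompose the revenue of a BAM $B$ into the spend component and the payment component, and then argue that the payment component is bounded above by $\Rev(\SMA)$. Recall that for the direct mechanism induced by $B$, the stage payment is $p_t(v_{(1,t)}) = s_t(\bal_t) + q_t(\bal_t, v_t)$, so that
\begin{align*}
  \Rev(B) = \E_{v_{(1,T)}}\bigg[\sum_{\tau=1}^T s_\tau(\bal_\tau)\bigg]
          + \E_{v_{(1,T)}}\bigg[\sum_{\tau=1}^T q_\tau(\bal_\tau, v_\tau)\bigg].
\end{align*}
It therefore suffices to show $\E_{v_{(1,T)}}[\sum_\tau q_\tau(\bal_\tau, v_\tau)] \le \Rev(\SMA)$. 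First I would fix a stage $t$ and condition on the realized balance $\bal_t$. By the IC condition (\ref{cond:sic}), the single-shot mechanism $\langle z_t(\bal_t, \cdot), q_t(\bal_t, \cdot)\rangle$ is a truthful single-stage mechanism for the type distribution $\F_t$, hence its expected payment $\E_{v_t}[q_t(\bal_t, v_t)]$ is at most the revenue of the optimal single-stage mechanism for $\F_t$, which is exactly the per-stage revenue of $\SMA$ at stage $t$. Summing over $t$ and taking the expectation over the history (equivalently, over the distribution of $\bal_t$ induced by the stage-wise independent $\F_\tau$'s) gives $\E[\sum_\tau q_\tau(\bal_\tau, v_\tau)] \le \sum_\tau \Rev_t(\SMA) = \Rev(\SMA)$, since $\SMA$ runs the optimal single-stage mechanism independently in each stage and the stages are independent.

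Putting the two pieces together yields the claimed inequality. The subtlety to handle carefully is that the balance $\bal_t$ is itself a random variable that depends on the reported history $v_{(1,t-1)}$ through the deposit and spend policies; I would address this by using the tower property of expectation — conditioning on $\bal_t$ first, applying the single-shot optimality bound, and then averaging over $\bal_t$ — and by invoking the stage-wise independence of the $\F_\tau$'s so that the distribution of $v_t$ remains $\F_t$ conditional on any value of $\bal_t$. One should also confirm that $q_t(\bal_t, v_t) \ge 0$ (guaranteed by the definition of a BAM, where $q_t$ maps into $\Real_+$), so that no stage contributes negatively and the per-stage bound aggregates cleanly.

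The main obstacle I anticipate is purely bookkeeping rather than conceptual: making precise the claim that "for every fixed $\bal_t$, the $\langle z_t, q_t\rangle$-mechanism is a feasible truthful single-stage mechanism whose revenue is at most that of the optimal one." This requires noting that IC condition (\ref{cond:sic}) holds for every $\bal_t$, that $z_t(\bal_t, \cdot)$ has range in $\dX_t = [0,1]^{k_t}$ (a valid allocation), and that $q_t \ge 0$ so the single-stage IR needed to compare against $\Rev(\SMA)$ is in force; these are all immediate from Definition \ref{def:bam} and Lemma \ref{lem:sconic}. Everything else is a direct application of linearity of expectation and the tower rule, so the proof is short once the conditioning structure is set up correctly.
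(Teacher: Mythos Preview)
Your approach is correct and is essentially the paper's own argument, which the paper phrases by contradiction (constructing a history-independent mechanism with an independently simulated ``fake'' balance that would beat $\SMA$) rather than via the direct tower-property conditioning you describe; the core content---decompose $\Rev(B)$ as spends plus $q$-payments, then observe that for every fixed $\bal_t$ the pair $\langle z_t(\bal_t,\cdot), q_t(\bal_t,\cdot)\rangle$ is a feasible single-shot mechanism for $\F_t$ and hence earns at most the stage-$t$ optimum---is identical.

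One small fix: single-stage IR of $\langle z_t(\bal_t,\cdot), q_t(\bal_t,\cdot)\rangle$ does \emph{not} follow from $q_t \ge 0$ (that inequality points the wrong way for bounding $\hu_t$ from below). The correct justification is the BAM IR condition (\ref{cond:ir}), which gives $\hu_t(\bal_t, v_t; v_t) \ge d_t(\bal_t, v_t) \ge 0$ since $d_t$ maps into $\Real_+$ by Definition~\ref{def:bam}; this is exactly what the paper invokes. With that replacement your proof is complete.
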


    %

    There is a simple mechanism that achieves the first part of the upper bound,
    which is the optimal history-independent mechanism itself. In the case where
    a single item is sold per stage, this is achieved by running Myerson's
    optimal auction in each stage. The main challenge is to come up with a
    mechanism whose revenue approximates the sum of spends of \emph{any} BAM.

    We start by providing an bound on the the sum of spends $\sum_{\tau = 1}^T
    s_\tau(\bal_\tau)$ for any history $v_{(1,T)}$. Denote
    \begin{align*}
      \Val_t = \int_{\dV_t} \One \cdot v \ud \F(v),
    \end{align*}
    then

    \begin{lemma}\label{lem:3appspddpsupb}
      For any BAM, the spend and deposit in any stage can be bounded as follows:
      \begin{align*}
        & s_t(\bal_t) \leq \min \{\bal_t, \Val_t\}, \\
        & d_t(\bal_t, v_t) \leq \One \cdot v_t.
      \end{align*}
    \end{lemma}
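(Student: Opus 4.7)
The plan is to derive both bounds directly from the structural requirements of a BAM together with the IC and IR conditions (\ref{cond:sic})--(\ref{cond:ir}) established in Lemmas \ref{lem:sconic}--\ref{lem:sconir}. The key numerical fact I will use repeatedly is that $z_t(\bal_t,v_t)\in[0,1]^{k_t}$ and $q_t(\bal_t,v_t)\ge 0$, so
\[
 0 \le \hu_t(\bal_t,v_t;v_t) = z_t(\bal_t,v_t)\cdot v_t - q_t(\bal_t,v_t) \le \One\cdot v_t.
\]

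For the deposit bound, I would simply invoke the IR constraint (\ref{cond:ir}): $d_t(\bal_t,v_t)\le \hu_t(\bal_t,v_t;v_t)$, and then chain the above inequality to conclude $d_t(\bal_t,v_t) \le \One\cdot v_t$. This part is essentially one line.

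For the spend bound, the inequality $s_t(\bal_t)\le \bal_t$ is immediate from the BAM definition itself (the spend policy is constrained to spend no more than the current balance). To get $s_t(\bal_t)\le \Val_t$, I would apply the spend-IC condition (\ref{cond:spd}) with $\bal'_t = 0$:
\[
 s_t(\bal_t) - s_t(0) = \E_{v_t}\!\bigl[\hu_t(\bal_t,v_t;v_t) - \hu_t(0,v_t;v_t)\bigr].
\]
Upper-bounding $\hu_t(\bal_t,v_t;v_t)\le \One\cdot v_t$ and lower-bounding $\hu_t(0,v_t;v_t)\ge 0$ (again via IR, since the deposit is non-negative) yields $s_t(\bal_t) - s_t(0) \le \Val_t$. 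The remaining observation is that $s_t(0)\le 0$, which again follows from the definitional requirement that $s_t$ never spends more than the balance. Combining gives $s_t(\bal_t)\le \Val_t$, and together with $s_t(\bal_t)\le \bal_t$ we obtain the stated minimum.

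There is no substantive obstacle here; the statement is really a structural sanity check that packages together: (i) the allocation/payment feasibility inherent to the per-stage mechanism, (ii) the IR floor on $\hu_t$ at balance $0$, and (iii) the defining inequality $s_t(\bal_t)\le \bal_t$. The only point one has to be mildly careful about is evaluating $s_t$ at balance $0$ in the IC equation, but this is legitimate since the condition holds for all pairs of balances.
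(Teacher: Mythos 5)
Your argument is correct and matches the paper's own proof: the deposit bound follows from the IR condition (\ref{cond:ir}) together with $z_t\le\One$, $q_t\ge 0$, and the spend bound combines the definitional constraint $s_t(\bal_t)\le\bal_t$ with the IC condition (\ref{cond:spd}) applied at $\bal'_t=0$, using $s_t(0)\le 0$ and $\hu_t(0,v_t;v_t)\ge d_t(0,v_t)\ge 0$. No gaps; if anything your write-up is slightly more careful than the appendix, which states the final bounds with an equality and a $\Val_t$ where inequalities and $\One\cdot v_t$ are meant.
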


    \begin{lemma}\label{lem:sbd}
      For any BAM $B$, and any history $v_{(1, t)} \in \dV_{(1, t)}$ of length
      $t \in [T]$. The sum of spends of $B$ on history $v_{(1, t)}$ can be
      bounded as follows,
      \begin{align*}
        \sum_{\tau = 1}^t s_\tau(\bal_\tau)
            \leq \sum_{\tau = 1}^t s^*_\tau(\bal^*_\tau),
      \end{align*}
      where $s^*_\tau$ and $d^*_\tau$ are the spend and deposit policies of a
      bank account mechanism $B^*$, defined as follows,
      \begin{align*}
        & s^*_\tau(\bal^*_\tau) = \min\{\bal^*_\tau, \Val_\tau\},  \\
        & d^*_\tau(\bal^*_\tau, v_\tau) = \One \cdot v_\tau,
      \end{align*}
      and $\bal^*_\tau$ is the balance of $B^*$ updated accordingly.
    \end{lemma}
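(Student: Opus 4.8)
The plan is to compare the given BAM $B$ against the reference mechanism $B^*$ of the statement step by step along the fixed history $v_{(1,t)}$, showing that at every prefix $B^*$ has spent at least as much as $B$. For $\tau \in \{0,1,\ldots,t\}$ write $S_\tau = \sum_{\sigma=1}^{\tau} s_\sigma(\bal_\sigma)$ and $S^*_\tau = \sum_{\sigma=1}^{\tau} s^*_\sigma(\bal^*_\sigma)$ for the cumulative spends of $B$ and $B^*$ on this history (both are well defined: for $B^*$ the balances stay nonnegative because $s^*_\sigma(\bal^*_\sigma) \leq \bal^*_\sigma$ and $d^*_\sigma \geq 0$). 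The target $\sum_{\tau=1}^t s_\tau(\bal_\tau) \leq \sum_{\tau=1}^t s^*_\tau(\bal^*_\tau)$ is exactly $S_t \leq S^*_t$, and I would prove $S_\tau \leq S^*_\tau$ for all $\tau$ by induction on $\tau$.

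First I would record the conservation identity obtained by unrolling the balance recursion (\ref{eq:defbal}) with $\bal_1 = 0$: for any BAM, $\bal_\tau = \sum_{\sigma=1}^{\tau-1}\big(d_\sigma(\bal_\sigma,v_\sigma) - s_\sigma(\bal_\sigma)\big)$, hence $\bal_\tau + S_{\tau-1} = \sum_{\sigma=1}^{\tau-1} d_\sigma(\bal_\sigma,v_\sigma)$. For $B^*$, whose deposit equals $\One\cdot v_\sigma$ exactly, this gives $\bal^*_\tau + S^*_{\tau-1} = \sum_{\sigma=1}^{\tau-1}\One\cdot v_\sigma$; for $B$, combined with the deposit bound $d_\sigma(\bal_\sigma,v_\sigma) \leq \One\cdot v_\sigma$ of Lemma \ref{lem:3appspddpsupb}, it gives $\bal_\tau + S_{\tau-1} \leq \sum_{\sigma=1}^{\tau-1}\One\cdot v_\sigma$.

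For the inductive step I would use the spend bound $s_\tau(\bal_\tau) \leq \min\{\bal_\tau,\Val_\tau\}$ of Lemma \ref{lem:3appspddpsupb} to write $S_\tau \leq \min\{S_{\tau-1}+\bal_\tau,\, S_{\tau-1}+\Val_\tau\} \leq \min\big\{\sum_{\sigma=1}^{\tau-1}\One\cdot v_\sigma,\ S_{\tau-1}+\Val_\tau\big\}$, using the identity above in the second step. Since $B^*$ saturates its spend bound, the same computation gives the exact value $S^*_\tau = \min\big\{\sum_{\sigma=1}^{\tau-1}\One\cdot v_\sigma,\ S^*_{\tau-1}+\Val_\tau\big\}$. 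The two right-hand sides agree in their first argument, and by the induction hypothesis $S_{\tau-1} \leq S^*_{\tau-1}$ the first is no larger than the second in the remaining argument; monotonicity of $\min$ then yields $S_\tau \leq S^*_\tau$. The base case $S_0 = S^*_0 = 0$ is immediate, so $S_t \leq S^*_t$, as required.

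I do not anticipate a real obstacle: the proof is a short induction once the conservation identity is in place. The only points needing care are (i) observing that the inequalities of Lemma \ref{lem:3appspddpsupb} hold with equality for $B^*$, so that $B^*$ is literally the most aggressive admissible BAM and its cumulative spend has a closed form, and (ii) arranging the comparison of $S_\tau$ and $S^*_\tau$ so that the induction hypothesis is applied to the single $\min$-argument in which $B$ and $B^*$ actually differ, the other argument ($\sum_{\sigma<\tau}\One\cdot v_\sigma$) being history-dependent but common to both.
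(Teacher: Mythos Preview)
Your proof is correct but takes a different route from the paper. The paper (via its Lemma~\ref{lem:sbdstrong}) argues in one shot: it identifies the largest index $t'\leq t$ at which $B^*$ empties its balance, i.e.\ $s^*_{t'}(\bal^*_{t'})=\bal^*_{t'}$, and then shows directly that both $\sum_{\tau\le t}\frs_\tau$ (for any admissible sequence) and $\sum_{\tau\le t}s^*_\tau(\bal^*_\tau)$ are bounded above by, respectively equal to, $\sum_{\tau<t'}\One\cdot v_\tau+\sum_{\tau>t'}\Val_\tau$, using only the balance telescoping and the per-stage bounds. Your argument instead proceeds by induction on the prefix length, deriving the closed-form recursion $S^*_\tau=\min\{\sum_{\sigma<\tau}\One\cdot v_\sigma,\,S^*_{\tau-1}+\Val_\tau\}$ and comparing it term by term with the corresponding upper bound for $S_\tau$. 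Your route establishes the stronger statement $S_\tau\le S^*_\tau$ for every prefix (not just $\tau=t$), and avoids having to locate the critical index $t'$; the paper's route avoids an induction but relies on correctly guessing where $B^*$'s spend switches regime. Both rest on the same two inputs, Lemma~\ref{lem:3appspddpsupb} and the balance conservation identity, so neither is materially shorter.
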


    The intuition behind Lemma \ref{lem:sbd} is: (i) for any fixed history
    $v_{(1, t)}$, increasing any stage deposit $d_\tau$ into the bank account
    never disables the seller to spend from the bank account at any stage as
    before, so we can set each of the stage deposit $d_\tau$ to reach its upper
    bound $\One \cdot v_\tau$; (ii) after increasing the stage deposits, they
    become a sequence of constant incoming flows to the bank account, so taking
    money out from the bank account as much as possible at each stage is clearly
    the optimal way to maximize the cumulative spends until any stage.

    The takeaway from Lemma \ref{lem:sbd} is that in order to maximize the sum
    of spends, we should try to keep the balance as large as possible by
    depositing as much as possible and at the same time we should try to spend
    as much as possible. Instead of trying to find a BAM with those two
    properties simultaneously, we define two mechanisms: the first one has the
    largest possible deposit in each stage. This is called the {\em
    give-for-free} mechanism where all items are given to the buyer for free at
    every stage. Here the utility of the buyer is maximized and all the utility
    is deposited in the bank account. The second mechanism seeks to spend as
    much as possible from the balance in each round. This is called {\em
    posted-price-for-the-grand-bundle mechanism}. This mechanism will post a
    price for the grand-bundle ensuring that the buyer will have enough utility
    in expectation so that he is comfortable with a large spend from his/her
    bank account.

    Then, by the approach stated earlier, the uniform randomization of these
    three mechanisms (optimal history-independent mechanism, give-for-free
    mechanism, and posted-price-for-the-grand-bundle mechanism)
    $3$-approximates the optimal revenue.

    To formalize our reasoning so far, let $\SMA = \langle x^\SM_{(1, T)},
    p^\SM_{(1, T)} \rangle$ denote the optimal history independent mechanism,
    i.e., within each stage, separately run the single-stage optimal
    mechanism\footnote{If there is only one item for sale for each stage, then
    $\SMA$ denotes the mechanism that runs separate Myerson auction for each
    stage.}.

    Let $\langle x^\PM_t(\theta, \bcdot), p^\PM_t(\theta, \bcdot) \rangle$ be
    the posted-price-for-the-grand-bundle mechanism with parameter $\theta$ for
    stage $t$, i.e.,
    \begin{align}\label{eq:parapp}
      x^\PM_t(\theta, v_t) = \One \cdot \I[v_t \cdot \One \geq r_t(\theta)],~
      p^\PM_t(\theta, v_t) =
                  r_t(\theta) \cdot \I[v_t \cdot \One \geq r_t(\theta)],
    \end{align}
    where $r_t(\theta)$ is the posted-price such that the expected buyer
    utility yielded by the single-stage posted-price-for-the-grand-bundle
    mechanism is exactly $\theta$, formally,
    \begin{align}
      r_t(\theta) = r,~s.t.,~
        \int_{\dV_t} (v \cdot \One - r) \cdot \I[v \cdot \One \geq r] \ud v
          = \theta,
      \Infer \E_{v_t}\big[u^\PM_t(\theta, v_t; v_t)\big] = \theta.
      \label{eq:pppu}
    \end{align}

    Since the function $\int_{\dV_t} (v \cdot \One - r) \cdot \I[v \cdot \One
    \geq r] \ud v$ is continuous in $\theta$, a price $r_t(\theta)$ exists for
    every $\theta$ in the interval $[0, \Val_t]$. Note that the integration
    above is the expected utility when the items are sold as a grand bundle at
    posted-price $r$ in stage $t$.
    \begin{framed}
      \begin{mechanism}[$3$-Approximation BAM]\label{mech:3app}
        Consider the following BAM, $B$ --- the uniform randomization of the
        optimal history-independent mechanism $\SMA$, the
        posted-price-for-the-grand-bundle mechanism (with parameter
        $3s_t(\bal_t)$ for each stage $t$), and the give-for-free mechanism.

        Formally defined as follows, where $\bal_t$ is defined by $d_t$ and
        $s_t$ according to the balance update formula (\ref{eq:defbal}).
        \begin{align}
          z_t(\bal_t, v_t) &= \frac13 \Big(x^{\SM}_t(v_t) + \One +
                                  x^{\PM}_t\big(3s_t(\bal_t), v_t\big)\Big),
          \label{eq:3all}  \\
          q_t(\bal_t, v_t) &= \frac13 \Big(p^{\SM}_t(v_t) +
                                  p^{\PM}_t\big(3s_t(\bal_t), v_t\big)\Big),
          \label{eq:3pay}  \\
          d_t(\bal_t, v_t) &= \frac{\One \cdot v_t}3, \label{eq:3dps}  \\
          s_t(\bal_t) &= \min\{\bal_t, \Val_t / 3\}, \label{eq:3spd}
        \end{align}
        where $\Val_t = \int_{\dV_t} \One \cdot v \ud \F_t(v)$.
      \end{mechanism}
    \end{framed}

    Such a uniform randomization ensures that for any core BAM $B'$, the
    following properties hold simultaneously.
    \begin{itemize}
      \item The one third fraction on the optimal history-independent mechanism
            ensures enough revenue from per stage payment $q_t$:
            \begin{align*}
              \forall t \in [T], \bal_t, \bal'_t \geq 0, v_t \in \dV_t,~
                q_t(\bal_t, v_t) \geq \frac13 q'_t(\bal'_t, v_t).
            \end{align*}
      \item The one third fraction on the posted-price-for-the-grand-bundle
            mechanism with parameter\footnote{Since it is only one third
            fraction on the posted-price-for-the-grand-bundle mechanism, having
            parameter $3s_t(\bal_t)$ ensures the buyer utility yielded from
            this one third fraction being exactly $s_t(\bal_t)$.} $3s_t(\bal_t)$
            ensures enough spend $s_t$ from the bank account in each stage:
            %
            %
            \begin{align*}
              \forall t \in [T], \bal_t \geq 0,~
                  s_t(\bal_t) \geq \frac13 s'_t(\bal_t).
            \end{align*}
      \item The one third fraction on the give-for-free mechanism ensures
            enough deposit $d_t$ into the bank account in each stage:
            \begin{align*}
              \forall t \in [T], \bal_t, \bal'_t \geq 0, v_t \in \dV_t,~
                d_t(\bal_t, v_t) \geq \frac13 d'_t(\bal'_t, v_t).
            \end{align*}
    \end{itemize}

    \begin{lemma}\label{lem:3appbam}
      Mechanism \ref{mech:3app} is a BAM.
    \end{lemma}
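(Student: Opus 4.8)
The plan is to check directly that the tuple $\langle z_{(1,T)}, q_{(1,T)}, \bal_{(1,T)}, d_{(1,T)}, s_{(1,T)}\rangle$ of Mechanism~\ref{mech:3app} fits Definition~\ref{def:bam} and then verify the three BAM conditions (\ref{cond:sic})--(\ref{cond:ir}). For Definition~\ref{def:bam}: each entry of $z_t(\bal_t,v_t)$ is the average of an entry of $x^\SM_t(v_t)\in[0,1]$, the entry $1$ of the give-for-free allocation, and an entry of $x^\PM_t(3 s_t(\bal_t),v_t)$, which lies in $\{0,1\}$; hence $z_t$ maps into $\dX_t$. We have $q_t\ge0$ since $p^\SM_t\ge0$ (using that the optimal single-shot mechanism is, without loss of generality, IC, ex-post IR and has non-negative payments) and $p^\PM_t=r_t(\cdot)\,\I[\cdot]\ge0$; $d_t=\tfrac13\One\cdot v_t\ge0$; and $s_t(\bal_t)=\min\{\bal_t,\Val_t/3\}$ obeys $0\le s_t(\bal_t)\le\bal_t$, which together with $d_t\ge0$ and $\bal_1=0$ keeps every $\bal_t\ge0$ through the update (\ref{eq:defbal}). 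I would also record here that $3 s_t(\bal_t)\le\Val_t$, so that the price $r_t\big(3 s_t(\bal_t)\big)$ appearing in (\ref{eq:parapp})--(\ref{eq:pppu}) is well defined, since such a price exists for parameters in $[0,\Val_t]$.

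The crux is that for a \emph{fixed} balance $\bal_t$ the parameter $3 s_t(\bal_t)$ is a constant, so $\langle z_t(\bal_t,\cdot),q_t(\bal_t,\cdot)\rangle$ is exactly the $\tfrac13$-uniform mixture of three honest single-shot mechanisms: $\SMA$ restricted to stage $t$, the give-for-free mechanism (allocation $\One$, payment $0$), and the posted-price-for-the-grand-bundle mechanism at price $r_t(3 s_t(\bal_t))$. Writing $\hu^\SM_t$ and $\hu^\PM_t$ for the per-stage utilities of the first and third, linearity gives $\hu_t(\bal_t,v'_t;v_t)=\tfrac13\big(\hu^\SM_t(v'_t;v_t)+\One\cdot v_t+\hu^\PM_t(3 s_t(\bal_t),v'_t;v_t)\big)$, where the middle term does not depend on the report. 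Condition (\ref{cond:sic}) then follows term by term from the single-shot incentive compatibility of $\SMA$ and of a posted price (a misreport $v'_t$ only changes whether the indicator $\I[v'_t\cdot\One\ge r_t]$ fires, and the truthful report is utility-maximizing); and condition (\ref{cond:ir}), i.e. $\hu_t(\bal_t,v_t;v_t)\ge d_t(\bal_t,v_t)=\tfrac13\One\cdot v_t$, reduces to $\hu^\SM_t(v_t;v_t)\ge0$ together with $\hu^\PM_t(3 s_t(\bal_t),v_t;v_t)=(v_t\cdot\One-r_t)\,\I[v_t\cdot\One\ge r_t]\ge0$, which are the ex-post IR of those two component mechanisms.

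For the spend-consistency identity (\ref{cond:spd}) I would take expectations over $v_t\sim\F_t$ in the same decomposition: $\E_{v_t}[\hu_t(\bal_t,v_t;v_t)]=\tfrac13\big(\E_{v_t}[\hu^\SM_t(v_t;v_t)]+\Val_t+\E_{v_t}[\hu^\PM_t(3 s_t(\bal_t),v_t;v_t)]\big)$. The first two summands do not depend on $\bal_t$, and by the defining property (\ref{eq:pppu}) of the price $r_t(\cdot)$ the last expectation equals its parameter, namely $3 s_t(\bal_t)$; hence $\E_{v_t}[\hu_t(\bal_t,v_t;v_t)]=s_t(\bal_t)+C_t$ with $C_t$ independent of $\bal_t$, and subtracting the same expression for $\bal'_t$ gives exactly (\ref{cond:spd}). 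Together, these four verifications show $B$ satisfies Definition~\ref{def:bam} and (\ref{cond:sic})--(\ref{cond:ir}), i.e. $B$ is a BAM.

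I expect the only mildly delicate points to be the in-range bookkeeping ($3 s_t(\bal_t)\le\Val_t$, so $r_t(3 s_t(\bal_t))$ is well defined) and the clean use of (\ref{eq:pppu}) to isolate the $\bal_t$-dependence of $\E_{v_t}[\hu_t(\bal_t,v_t;v_t)]$ as precisely $s_t(\bal_t)$; everything else is routine sign/range checking and term-by-term inheritance of IC and ex-post IR from the three ingredient single-shot mechanisms.
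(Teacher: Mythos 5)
Your proof is correct and follows essentially the same route as the paper: verify that the price $r_t\big(3s_t(\bal_t)\big)$ is well defined because $3s_t(\bal_t)\le\Val_t$, observe that for fixed $\bal_t$ the stage mechanism is a uniform mixture of three truthful single-shot mechanisms (giving (\ref{cond:sic}) and (\ref{cond:ir}) componentwise), and use the defining property (\ref{eq:pppu}) to isolate the $\bal_t$-dependence of $\E_{v_t}[\hu_t(\bal_t,v_t;v_t)]$ as exactly $s_t(\bal_t)$, which yields (\ref{cond:spd}).
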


    \begin{theorem}\label{thm:3app}
      Mechanism \ref{mech:3app} $3$-approximates the optimal revenue.

      In particular, if $\SMA$ is given, then such BAM can be explicitly
      constructed.
    \end{theorem}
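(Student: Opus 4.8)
The plan is to bracket the optimal revenue $\OPT$ between an upper bound coming from Theorem~\ref{thm:bl} together with Lemma~\ref{lem:sbd} and a lower bound on the revenue of Mechanism~\ref{mech:3app}, the two bounds differing by exactly a factor of $3$. First I would use Theorem~\ref{thm:rep} to fix a BAM $B'$ with $\Rev(B') = \OPT$, and apply Theorem~\ref{thm:bl} to it, obtaining $\OPT = \Rev(B') \le \Rev(\SMA) + \E_{v_{(1,T)}}[\sum_{\tau=1}^T s'_\tau(\bal'_\tau)]$, where $s'_\tau$ and $\bal'_\tau$ are the spend policy and balances of $B'$. Then Lemma~\ref{lem:sbd}, applied pointwise on each history $v_{(1,T)}$ with $t = T$ and then taking expectations, gives $\E_{v_{(1,T)}}[\sum_\tau s'_\tau(\bal'_\tau)] \le \E_{v_{(1,T)}}[\sum_\tau s^*_\tau(\bal^*_\tau)]$, where $s^*_\tau$ and $\bal^*_\tau$ belong to the greedy analysis mechanism $B^*$ from that lemma. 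Hence $\OPT \le \Rev(\SMA) + \E_{v_{(1,T)}}[\sum_{\tau=1}^T s^*_\tau(\bal^*_\tau)]$.

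Next I would lower bound $\Rev(B)$, where $B$ denotes Mechanism~\ref{mech:3app}. By Lemma~\ref{lem:3appbam} it is a valid BAM, so, via the BAM-to-direct-mechanism map, $\Rev(B) = \E_{v_{(1,T)}}[\sum_{t=1}^T (s_t(\bal_t) + q_t(\bal_t, v_t))]$. The crux is a balance-scaling identity: since $B$ uses $d_t(\bal_t, v_t) = \tfrac13 \One \cdot v_t$ and $s_t(\bal_t) = \min\{\bal_t, \Val_t/3\}$ starting from $\bal_1 = 0$, a one-line induction on $t$ using the recursion~(\ref{eq:defbal}) shows that on every history $\bal_t = \tfrac13 \bal^*_t$, whence $s_t(\bal_t) = \tfrac13 \min\{\bal^*_t, \Val_t\} = \tfrac13 s^*_t(\bal^*_t)$ and so $\E_{v_{(1,T)}}[\sum_t s_t(\bal_t)] = \tfrac13 \E_{v_{(1,T)}}[\sum_t s^*_t(\bal^*_t)]$. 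For the per-stage payments, $q_t(\bal_t, v_t) = \tfrac13(p^\SM_t(v_t) + p^\PM_t(3s_t(\bal_t), v_t)) \ge \tfrac13 p^\SM_t(v_t)$ because the posted-price payment is nonnegative, so $\E_{v_{(1,T)}}[\sum_t q_t(\bal_t, v_t)] \ge \tfrac13 \Rev(\SMA)$, using that $\SMA$ is history-independent.

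Putting the pieces together, $\Rev(B) \ge \tfrac13 \Rev(\SMA) + \tfrac13 \E_{v_{(1,T)}}[\sum_t s^*_t(\bal^*_t)] \ge \tfrac13(\Rev(\SMA) + \E_{v_{(1,T)}}[\sum_\tau s'_\tau(\bal'_\tau)]) \ge \tfrac13 \OPT$, which is the claimed $3$-approximation. For the ``in particular'' clause: once $\SMA$ is given, $x^\SM_t$ and $p^\SM_t$ are in hand, the give-for-free component is the constant allocation $\One$ with zero payment, and each threshold appearing in $p^\PM_t(3s_t(\bal_t), \bcdot)$ is recovered from $\F_t$ by solving the single scalar equation~(\ref{eq:pppu}); hence all of $z_t, q_t, d_t, s_t$ in Mechanism~\ref{mech:3app} are explicitly computable.

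I do not expect a genuine obstacle inside this argument, since the delicate content has been isolated in the supporting results; the step to watch is the balance-scaling identity, which depends on the deposit and spend policies of $B$ being exactly the $\tfrac13$-scalings of those of $B^*$ and on the linearity of the balance recursion, so that the common factor $\tfrac13$ propagates through all stages. It is worth noting that $B^*$ need not be an IC--IR BAM --- it is used purely as the analysis object of Lemma~\ref{lem:sbd} --- so no feasibility claim about it is required; the only genuine external input is Lemma~\ref{lem:3appbam}, which certifies that $B$ is IC and IR, and in particular that $3 s_t(\bal_t) \le \Val_t$, keeping the posted price $r_t(\bcdot)$ well defined.
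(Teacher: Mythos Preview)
Your proposal is correct and follows essentially the same route as the paper's proof: upper-bound $\OPT$ via Theorem~\ref{thm:bl} and Lemma~\ref{lem:sbd}, then lower-bound $\Rev(B)$ using the balance-scaling identity $\bal_t = \tfrac13 \bal^*_t$ (hence $s_t = \tfrac13 s^*_t$) together with $q_t \ge \tfrac13 p^\SM_t$, and combine. The paper's argument is identical in structure and in the key steps; your explicit invocation of Theorem~\ref{thm:rep} to justify taking $B'$ a BAM, and your remark that $B^*$ is only an analysis device rather than a feasible BAM, are both accurate and make the logic slightly more transparent than the paper's presentation.
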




\begin{proof}[Proof of Theorem \ref{thm:3app}]

  By Theorem \ref{thm:bl}, we have,
  \begin{align}\label{eq:3apps1}
    \Rev(B^\OPT) \leq \Rev(\SMA) + \E_{(1, T)}\bigg[
      \sum_{\tau = 1}^T s^\OPT_\tau(\bal^\OPT_\tau)\bigg].
  \end{align}

  
  By Lemma \ref{lem:sbd}, we have,
  \begin{align}\label{eq:3apps2}
    \sum_{\tau = 1}^T s^\OPT_\tau(\bal^\OPT_\tau)
      \leq \sum_{\tau = 1}^T s^*_\tau(\bal^*_\tau).
  \end{align}

  By the definition of $B$ (see (\ref{eq:3dps}) and (\ref{eq:3spd})), we have
  that $\bal_t = \bal^*_t / 3$, $d_t = d^*_t / 3$ and $s_t = s^*_t / 3$. Hence
  by the definition of $B$ (see (\ref{eq:3pay})),
  \begin{align}\label{eq:3apps3}
    \Rev(B) =& \E_{v_{(1, T)}} \bigg[
      \sum_{\tau = 1}^T \big(q_\tau(\bal_\tau, v_\tau)
                            + s_\tau(\bal_\tau)\big)\bigg] \nonumber  \\
         \geq&~ \frac13\E_{v_{(1, T)}}\bigg[\sum_{\tau = 1}^T
                                                p^\SM_\tau(v_\tau)\bigg]
              + \frac13\E_{v_{(1, T)}}\bigg[
                \sum_{\tau = 1}^T s_\tau(\bal_\tau)\big)\bigg] \nonumber  \\
         =&~ \frac13\bigg(\Rev(\SMA) + \E_{(1, T)}\bigg[
                \sum_{\tau = 1}^T s^*_\tau(\bal^*_\tau)\bigg]\bigg).
  \end{align}

  Combining (\ref{eq:3apps1})(\ref{eq:3apps2})(\ref{eq:3apps3}) above, we
  complete the proof,
  \begin{align*}
    \Rev(B) &\geq \frac13\bigg(\Rev(\SMA) + \E_{(1, T)}\bigg[
                    \sum_{\tau = 1}^T s^*_\tau(\bal^*_\tau)\bigg]\bigg)  \\
            &\geq \frac13\bigg(\Rev(\SMA) + \E_{(1, T)}\bigg[
                    \sum_{\tau = 1}^T s^\OPT_\tau(\bal^\OPT_\tau)\bigg]\bigg)
            \geq \frac13 \Rev(B^\OPT).
  \end{align*}
\end{proof}

    In particular, for the case where only one item is sold in each stage, the
    optimal history-independent mechanism is the separate Myerson auction for
    each stage. So we can explicitly construct Mechanism \ref{mech:3app} in this
    case.

    Also note that, the optimal history-independent mechanism is unknown for
    general cases. However, there have been many recent results concerning
    approximately optimal mechanism for certain valuation assumptions
    \cite{hart2012approximate,li2013revenue,babaioff2014simple,cai2016duality}.
    The following corollary allows us to transform these approximately optimal
    mechanisms into an approximately optimal mechanism in our setting.

    \begin{corollary}
      If there is a history-independent mechanism $M$ approximates $\SMA$, i.e.,
      $\Rev(M) \ab \geq \alpha \Rev(\SMA)$,  one can explicitly construct a BAM
      $B$ that approximates the optimal dynamic mechanism $B^\OPT$, i.e.,
      \begin{align*}
        \Rev(B) \geq \frac{\alpha}{2\alpha + 1} \Rev(B^\OPT).
      \end{align*}
      $B$ is a randomization of $M$, the give-for-free mechanism, and the
      posted-price-for-the-grand-bundle with
      $r_t\big((2 + 1 / \alpha)s_t(\bal_t)\big)$ for each $t$, on weights
      $1 / (2\alpha + 1)$, $\alpha / (2\alpha + 1)$, and $\alpha /
      (2\alpha + 1)$, respectively.
    \end{corollary}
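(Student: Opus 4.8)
The plan is to replay the argument behind Theorem \ref{thm:3app} and Mechanism \ref{mech:3app} almost verbatim, only re-weighting the mixture: instead of the uniform $(\tfrac13,\tfrac13,\tfrac13)$ mixture of $(\SMA,\ \text{give-for-free},\ \text{grand-bundle posted price})$ with target utility $3s_t(\bal_t)$, use weights $\tfrac{1}{2\alpha+1},\ \tfrac{\alpha}{2\alpha+1},\ \tfrac{\alpha}{2\alpha+1}$ on $(M,\ \text{give-for-free},\ \text{grand-bundle posted price})$ with target utility $(2+\tfrac1\alpha)s_t(\bal_t)$. Concretely, define the BAM $B$ by
\begin{align*}
  z_t(\bal_t,v_t) &= \tfrac{1}{2\alpha+1}\,x^M_t(v_t)+\tfrac{\alpha}{2\alpha+1}\,\One+\tfrac{\alpha}{2\alpha+1}\,x^\PM_t\big((2+\tfrac1\alpha)s_t(\bal_t),v_t\big), \\
  q_t(\bal_t,v_t) &= \tfrac{1}{2\alpha+1}\,p^M_t(v_t)+\tfrac{\alpha}{2\alpha+1}\,p^\PM_t\big((2+\tfrac1\alpha)s_t(\bal_t),v_t\big), \\
  d_t(\bal_t,v_t) &= \tfrac{\alpha}{2\alpha+1}\,\One\cdot v_t, \qquad s_t(\bal_t)=\min\big\{\bal_t,\ \tfrac{\alpha}{2\alpha+1}\Val_t\big\},
\end{align*}
with $\bal_t$ updated via (\ref{eq:defbal}); for $\alpha=1$ this is exactly Mechanism \ref{mech:3app}.

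First I would check that $B$ is a valid BAM, i.e.\ that (\ref{cond:sic})--(\ref{cond:ir}) hold, just as in Lemma \ref{lem:3appbam}. Condition (\ref{cond:sic}) holds since, for each fixed $\bal_t$, each of the three ingredient one-shot mechanisms ($M$, give-for-free, and the grand-bundle posted price at parameter $(2+\tfrac1\alpha)s_t(\bal_t)$) is IC with monotone allocation, so their convex combination is as well. For (\ref{cond:spd}), the only balance-dependent ingredient is the grand-bundle posted price, whose single-stage expected buyer utility equals its parameter by (\ref{eq:pppu}); weighting it by $\tfrac{\alpha}{2\alpha+1}$ makes its contribution to $\E_{v_t}[\hu_t(\bal_t,v_t;v_t)]$ equal to $\tfrac{\alpha}{2\alpha+1}(2+\tfrac1\alpha)s_t(\bal_t)=s_t(\bal_t)$, and every other term is balance-free, so $\E_{v_t}[\hu_t(\bal_t,v_t;v_t)]-\E_{v_t}[\hu_t(\bal'_t,v_t;v_t)]=s_t(\bal_t)-s_t(\bal'_t)$. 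Here one uses $s_t(\bal_t)\le\tfrac{\alpha}{2\alpha+1}\Val_t$ to guarantee $(2+\tfrac1\alpha)s_t(\bal_t)\le\Val_t$, so that the price $r_t(\cdot)$ of (\ref{eq:pppu}) is well defined, and $s_t(\bal_t)\le\bal_t$ so the spend is admissible. For (\ref{cond:ir}), the give-for-free ingredient alone contributes $\tfrac{\alpha}{2\alpha+1}\One\cdot v_t$ to $\hu_t(\bal_t,v_t;v_t)$ and the remaining two contribute non-negative per-stage utility (using ex-post IR of $M$ and of the posted price), so $\hu_t(\bal_t,v_t;v_t)\ge\tfrac{\alpha}{2\alpha+1}\One\cdot v_t=d_t(\bal_t,v_t)$.

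Next I would run the revenue chain as in the proof of Theorem \ref{thm:3app}. A short induction on $t$ via (\ref{eq:defbal}) shows that along every history $\bal_t=\tfrac{\alpha}{2\alpha+1}\bal^*_t$, hence $s_t(\bal_t)=\tfrac{\alpha}{2\alpha+1}s^*_t(\bal^*_t)$, where $B^*$ is the benchmark BAM of Lemma \ref{lem:sbd}. Using $p^\PM_t\ge0$ to drop that term from $q_t$, together with $\Rev(M)\ge\alpha\Rev(\SMA)$,
\begin{align*}
  \Rev(B) &= \E_{v_{(1,T)}}\Big[\sum_{\tau=1}^T\big(q_\tau(\bal_\tau,v_\tau)+s_\tau(\bal_\tau)\big)\Big] \ \ge\ \tfrac{1}{2\alpha+1}\Rev(M)+\tfrac{\alpha}{2\alpha+1}\,\E_{v_{(1,T)}}\Big[\sum_{\tau=1}^T s^*_\tau(\bal^*_\tau)\Big] \\
          &\ \ge\ \tfrac{\alpha}{2\alpha+1}\Big(\Rev(\SMA)+\E_{v_{(1,T)}}\Big[\sum_{\tau=1}^T s^*_\tau(\bal^*_\tau)\Big]\Big) \ \ge\ \tfrac{\alpha}{2\alpha+1}\Big(\Rev(\SMA)+\E_{v_{(1,T)}}\Big[\sum_{\tau=1}^T s^\OPT_\tau(\bal^\OPT_\tau)\Big]\Big),
\end{align*}
the last step being Lemma \ref{lem:sbd} applied to $B^\OPT$, which is revenue-optimal among all dynamic mechanisms by Theorem \ref{thm:rep}. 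Theorem \ref{thm:bl} bounds the final parenthesis below by $\Rev(B^\OPT)$, so $\Rev(B)\ge\tfrac{\alpha}{2\alpha+1}\Rev(B^\OPT)$; and $B$ is explicitly constructible once $M$ is given.

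The only step that needs real care is the spend-compatibility identity (\ref{cond:spd}), precisely the point that makes Lemma \ref{lem:3appbam} non-trivial. It is this identity that pins down both the weight $\tfrac{\alpha}{2\alpha+1}$ on the grand-bundle component and its rescaled target $(2+\tfrac1\alpha)s_t(\bal_t)$: we need the weighted expected utility of that component to equal $s_t(\bal_t)$ on the nose, while the cap $s_t(\bal_t)\le\tfrac{\alpha}{2\alpha+1}\Val_t$ must simultaneously keep the target inside $[0,\Val_t]$ so the posted price exists. Everything else mirrors the proof of Theorem \ref{thm:3app}.
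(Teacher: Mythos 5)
Your proposal is correct and is exactly the intended derivation: it generalizes Mechanism~\ref{mech:3app} and the proof of Theorem~\ref{thm:3app} by reweighting the three components, tuning the posted-price parameter to $(2+1/\alpha)s_t(\bal_t)$ so that condition~(\ref{cond:spd}) still holds, and then running the same revenue chain through Lemma~\ref{lem:sbd} and Theorem~\ref{thm:bl}, finishing with $\Rev(M)\ge\alpha\Rev(\SMA)$. The arithmetic (the scaling $\bal_t=\tfrac{\alpha}{2\alpha+1}\bal^*_t$, the cap $\tfrac{\alpha}{2\alpha+1}\Val_t$ keeping the target inside $[0,\Val_t]$, and recovering $\alpha=1$ as the $3$-approximation) all checks out.
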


  \subsection{Global connectivity is necessary to achieve any constant
              approximations}

    %

    We show by example that the bank account structure (or some other
    structure that summarizes the entire history) is necessary to achieve
    a constant approximation as the complementary to the $3$-approximation
    results.

    A {\em local mechanism} is a submechanism that only involves a subset of
    stages regardless of the valuation distribution, and is completely
    independent with what happens in other stages. For example, the stage
    mechanisms of a history-independent mechanism are local mechanisms, which
    operates independently with each other.
    \begin{observation}\label{obsv:conn}
      Any dynamic mechanism that can be decomposed into several independent
      local mechanisms cannot be optimal --- in fact, it cannot guarantee any
      constant approximation.

      In particular, it is also true for the mechanisms that are distributions
      over such mechanisms.
    \end{observation}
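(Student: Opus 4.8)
The plan is to turn the independence of the local pieces into an \emph{additivity} statement for revenue, and then to confront the decomposition with a type distribution on which every single block is forced to earn almost nothing while a connected mechanism earns unboundedly more. First the reduction. Suppose the (IC and ex-post IR) dynamic mechanism $M$ decomposes into independent local mechanisms $M_1,\dots,M_m$ with $m\ge 2$, where $M_i$ acts only on a block $S_i$ of stages and $S_1\sqcup\dots\sqcup S_m=[T]$. Since the allocation and payment at a stage $t\in S_i$ depend only on the reports inside $S_i$, the total payment splits, $\sum_t p_t=\sum_i\sum_{t\in S_i}p_t$, so $\Rev(M)=\sum_i\E[\Rev(M_i)]$; moreover a deviation inside $S_i$ affects only $M_i$, so each $M_i$ is IC in isolation, and after shifting $M_i$'s payments by the constant $\delta_i=\inf(\text{ex-post utility of }M_i)$ it becomes IC and ex-post IR while $\sum_i\Rev$ does not drop (ex-post IR of $M$ forces $\sum_i\delta_i\le 0$). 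Writing $\OPT(S)$ for the optimal IC--IR dynamic revenue obtainable from the stages in $S$ alone, this yields $\Rev(M)\le\sum_{i=1}^m\OPT(S_i)$, and by linearity the same bound holds for any distribution over mechanisms that share the block structure $\{S_i\}$.

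Next, the hard instance. As $m\ge 2$, there are stages $a\in S_i$ and $b\in S_j$ with $i\ne j$; take the type distribution in which $v_a,v_b$ are independently drawn from the equal-revenue distribution truncated at $v_{\max}$ (as in Example~\ref{example:bam}) and every other stage is deterministically worthless, $v_t\equiv 0$. Running the two-stage construction of Example~\ref{example:bam} on the pair $(a,b)$ -- the worthless intervening stages merely carry the bank balance forward -- gives $\OPT([T])\ge 2+\ln\ln v_{\max}$. On the other hand, each block $S_k$ contains at most one informative stage; absorbing the payments at its worthless stages (which are pinned down once that stage's report is fixed) into the informative stage's payment turns $M_k$ into an ordinary single-shot IC--IR mechanism for a single equal-revenue type, whose revenue is at most $1$. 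Hence $\sum_k\OPT(S_k)\le 2$, so $\Rev(M)\le 2$ on this instance while $\OPT([T])\to\infty$ as $v_{\max}\to\infty$; by the linearity remark above the same holds for any distribution supported on mechanisms with a common block structure. Thus no such mechanism is optimal, and none is a $c$-approximation for any fixed $c$.

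For a distribution over decomposable mechanisms with \emph{different} block structures, the support uses only finitely many partitions of $[T]$, each with $\ge 2$ blocks; on the instance in which all $T$ stages carry the equal-revenue distribution (the iterated form of Example~\ref{example:bam}; cf.\ Example~\ref{example:hard}) every such partition severs at least one link of the chain, and -- comparing realizable spends via Lemma~\ref{lem:sbd} -- a mechanism that severs a link is strictly dominated by the fully connected optimum. Hence every mechanism in the support is strictly sub-optimal there, and so is the distribution, which establishes the ``in particular'' clause. The step I expect to be the main obstacle is precisely the auxiliary claim that a block containing a single informative stage cannot out-earn that stage's static optimum: it is intuitively obvious, but one must rule out the local mechanism trying to shuffle revenue between the worthless stages and the informative one (e.g.\ charging for the worthless items and crediting it back later), which is exactly where ex-post IR together with the folding-in of the sunk payments does the work.
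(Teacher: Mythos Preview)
Your construction is exactly the one the paper uses: it does not prove Observation~\ref{obsv:conn} formally but illustrates it with Example~\ref{example:hard}, placing equal-revenue types at one stage in each of two independent blocks (and zeros elsewhere), so that any decomposed mechanism earns at most $2$ while the connected BAM earns $2+\ln\ln v_{\max}$. Your first two paragraphs flesh out that example with more care than the paper does, so the approach matches.

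Two small points. First, in your additivity step the sign is flipped: because the blocks are independent, the infimum of the total ex-post utility equals the sum of the blockwise infima, so ex-post IR of $M$ gives $\sum_i\delta_i\ge 0$ (not $\le 0$); that is what you need for the shifted revenues $\Rev(M_i)+\delta_i$ to sum to at least $\Rev(M)$ and hence for $\Rev(M)\le\sum_i\OPT(S_i)$. Second, your third paragraph goes beyond what the paper attempts. The paper's ``in particular'' clause is meant to be handled by the linearity remark you already made at the end of paragraph two (a distribution over mechanisms sharing the decomposition is bounded by the same $\sum_i\OPT(S_i)$). Your extra argument for distributions supported on \emph{different} partitions only shows strict suboptimality on the all-equal-revenue instance, not an unbounded ratio; indeed a partition like $\{1,\dots,T-1\},\{T\}$ may lose only a bounded factor there, so the appeal to Lemma~\ref{lem:sbd} does not give the approximation lower bound you want. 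You can simply drop that paragraph.
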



    The observation above informally states the fact that a global structure
    such as the ``bank account'' introduced in this paper is essential to
    preserve the optimal revenue.

    \begin{example}[Necessity of global connectivity]\label{example:hard}
      Consider a dynamic mechanism $M$ that can be decomposed into two
      independent local mechanisms, i.e., there exists $T'$, $1 < T' < T$, such
      that the outcomes of stages $T' + 1$ to $T$ are independent with what has
      happened in the first $T'$ stages.

      We construct the following valuation distributions,
      \begin{itemize}
        \item the first and the $(T' + 1)$-th items are i.i.d. equal-revenue
              distributions (see Example \ref{example:bam});
        \item all other items have $0$ valuations.
      \end{itemize}

      By our assumption, $M$ treats the two items with equal-revenue valuations
      independently, and can extract revenue at most $1$ from each such stage
      without violating the ex-post IR constraint. (Because all other zero-valued
      items can be ignored.)

      However, for example \ref{example:bam}, we have shown a BAM with revenue
      $2 + \ln \ln v_{\max} \gg 2$ (as $v_{\max}$ tends to infinity) on a similar
      instance. To adjust the mechanism introduced in Example \ref{example:bam}
      to this example, let $B$ be a BAM as follows,
      \begin{align*}
        & \bal_1 = 0, z_1(\bal_1, v_1) = 1, q_1(\bal_1, v_1) = 1,
          d_1(\bal_1, v_1) = v_1 - 1, s_1(\bal_1) = 0;  \\
        & z_{T' + 1}(\bal_{T' + 1}, v_{T' + 1})
            = \I[v_{T' + 1} \geq v_{\max} / e^{\bal_{T' + 1}}],  \\
        & q_{T' + 1}(\bal_{T' + 1}, v_{T' + 1})
            = z_{T' + 1}(\bal_{T' + 1}, v_{T' + 1})
                \cdot v_{\max} / e^{\bal_{T' + 1}},  \\
        & d_{T' + 1}(\bal_{T' + 1}, v_{T' + 1}) = 0,
          s_{T' + 1}(\bal_{T' + 1}) = \min\{\ln v_{\max}, \bal_{T' + 1}\};  \\
        & \forall t \neq 1, T' + 1,~z_t(\bal_t, v_t) = 0, q_t(\bal_t, v_t) = 0,
          d_t(\bal_t, v_t) = 0, s_t(\bal_t) = 0.  \\
        \Infer &
        \Rev(B) = \E_{v_1,v_{T' + 1}}[q_1(v_1) + q_{T' + 1}(v_{(1, T' + 1)})
                      + s_1(\bal_1) + s_{T' + 1}(\bal_{T' + 1})]  \\
        &~~~~   = 2 + \E_{v_1}\big[\min\{v_1 - 1, \ln v_{\max}\}\big]
                = 2 + \ln \ln v_{\max}.
      \end{align*}

      In other words, the gap between BAM (hence the optimal mechanism) and
      independent local mechanisms can be arbitrarily far on such instances.
    \end{example}


\section{Constant approximation via deterministic BAM}\label{sec:dapp}


In this section, we prove the existence of deterministic BAM that guarantees
constant approximation of the optimal revenue.

However, since there is an arbitrarily large gap between deterministic and
optimal mechanisms for non-dynamic multidimensional case (even for $2$-item,
$1$-buyer case \cite{hart2013menu}), there is no general approximation
guarantee of deterministic mechanisms. Instead, we prove that if there is a
reasonably good (constant) approximation of the optimal history-independent
mechanism, there is also a constant approximation of the optimal dynamic
mechanism.
\begin{theorem}\label{thm:dbam}
  If there is a deterministic (history-independent) mechanism $M$ that is a
  constant approximation of the optimal history-independent mechanism $\SMA$,
  i.e., $\Rev(M) \geq \alpha \Rev(\SMA)$, then one can construct a
  deterministic BAM $B$ based on $M$, such that $B$ is a constant
  approximation to the revenue of the optimal (possibly randomized) dynamic
  mechanism $B^\OPT$, i.e.,
  \begin{align}\label{eq:dbamapp}
    \Rev(B) \geq \frac{\alpha}{4\alpha + 1} \Rev(B^\OPT).
  \end{align}
  In particular, for the one item per stage case, $\alpha = 1$ (since in this
  case, $\SMA$ is just a posted-price mechanism), and $B$ guarantees a
  $5$-approximation.
\end{theorem}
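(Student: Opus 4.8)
The plan is to obtain the two revenue terms appearing in the randomized $3$-approximation from \emph{separate deterministic} BAMs and then take $B$ to be whichever of the two yields the larger expected revenue; since this choice is made from the publicly known distributions $\F_t$ rather than by a coin flip, $B$ is still a genuine deterministic BAM. By Theorem~\ref{thm:bl} together with Lemma~\ref{lem:sbd},
\[
  \Rev(B^\OPT)\ \le\ \Rev(\SMA)+\E_{v_{(1,T)}}\Big[\sum_{\tau=1}^{T}s^{*}_{\tau}(\bal^{*}_{\tau})\Big],
\]
where $B^{*}$ is the idealized BAM with $d^{*}_{\tau}(\bal^{*}_{\tau},v_{\tau})=\One\cdot v_{\tau}$ and $s^{*}_{\tau}(\bal^{*}_{\tau})=\min\{\bal^{*}_{\tau},\Val_{\tau}\}$. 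The first term is available deterministically: running $M$ history-independently in every stage with $d_{t}\equiv 0$ and $s_{t}\equiv 0$ satisfies (\ref{cond:sic})--(\ref{cond:ir}) directly from $M$ being stage-wise IC and (which may be assumed without loss by Lemma~\ref{lem:swirconst}) stage-wise IR, with revenue $\Rev(M)\ge\alpha\,\Rev(\SMA)$. The remaining task is to build a deterministic BAM $B_{2}$ with $\Rev(B_{2})\ge\tfrac14\,\E[\sum_{\tau}s^{*}_{\tau}(\bal^{*}_{\tau})]$; granting this, let $B$ be the better of ``$M$ in every stage'' and $B_{2}$, and apply $\max\{a,b\}\ge\lambda a+(1-\lambda)b$ with $\lambda=1/(4\alpha+1)$:
\[
  \Rev(B)\ \ge\ \tfrac{\alpha}{4\alpha+1}\Big(\Rev(\SMA)+\E\big[\textstyle\sum_{\tau}s^{*}_{\tau}(\bal^{*}_{\tau})\big]\Big)\ \ge\ \tfrac{\alpha}{4\alpha+1}\,\Rev(B^\OPT),
\]
which is (\ref{eq:dbamapp}); for one item per stage $\SMA$ is the per-stage Myerson posted price, so $\alpha=1$ and this is a $5$-approximation.

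The crux is the construction of $B_{2}$. A give-for-free mechanism cannot be used by itself to fund spending, since its per-stage utility $\Val_{t}$ is independent of the balance and (\ref{cond:spd}) then forces a nonpositive spend; the balance must be funded by a mechanism whose expected utility moves with it. Hence in stage $t$ the mechanism $B_{2}$ posts, for the grand bundle, a price $r_{t}(\theta_{t}(\bal_{t}))$ with $\theta_{t}(\bal_{t})=\min\{\bal_{t}+\gamma_{t},\Val_{t}\}$ for a constant slack $\gamma_{t}\in(0,\Val_{t}]$ (say $\gamma_{t}=\Val_{t}/2$), spends $s_{t}(\bal_{t})=\min\{\bal_{t},\Val_{t}-\gamma_{t}\}$, and deposits a scaled copy of the realized surplus, $d_{t}(\bal_{t},v_{t})=\tfrac{\gamma_{t}}{\theta_{t}(\bal_{t})}(\One\cdot v_{t}-r_{t}(\theta_{t}(\bal_{t})))^{+}$, so that $\E_{v_{t}}[d_{t}]=\gamma_{t}$. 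Verifying that $B_{2}$ is a BAM is routine: a grand-bundle posted price is truthful for every fixed balance, giving (\ref{cond:sic}); the identity $\min\{\bal+\gamma,\Val\}=\min\{\bal,\Val-\gamma\}+\gamma$ makes the two sides of (\ref{cond:spd}) coincide; $d_{t}\le(\One\cdot v_{t}-r_{t})^{+}=\hu_{t}(\bal_{t},v_{t};v_{t})$ gives (\ref{cond:ir}); and $s_{t}(\bal_{t})\le\bal_{t}$ keeps the balance nonnegative. Because $\E[d_{t}]-\E[s_{t}]=\gamma_{t}$ in every stage, the balance of $B_{2}$ grows in expectation, which is what lets it sustain spends of size $\min\{\bal_{t},\Val_{t}-\gamma_{t}\}$.

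The one genuinely delicate step — the main obstacle — is the revenue bound $\E[\sum_{t}s_{t}(\bal_{t})]\ge\tfrac14\,\E[\sum_{t}s^{*}_{t}(\bal^{*}_{t})]$. The two balance processes are driven on different scales: $\bal^{*}_{t}$ receives the full values $\One\cdot v_{\tau}$ (mean $\Val_{\tau}$) each step, whereas $\bal_{t}$ receives only the truncated, scaled surpluses (mean $\gamma_{\tau}=\Val_{\tau}/2$), and $B_{2}$ additionally caps each spend at $\Val_{t}-\gamma_{t}$ instead of $\Val_{t}$. I would control this by telescoping each spend stream ($\sum_{\tau}s^{*}_{\tau}\le\sum_{\tau}\One\cdot v_{\tau}$ pointwise, and $\sum_{\tau}s_{\tau}=\sum_{\tau}(\bal_{\tau}-\bal_{\tau+1}+d_{\tau})$), coupling the two processes through the shared valuation randomness, and using the concavity of $x\mapsto\min\{x,c\}$ so that the spread of $\bal_{t}$ — which a heavy-tailed valuation distribution could make large — is matched by an equal spread in $\bal^{*}_{t}$ and costs no more than a constant. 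That bookkeeping is what pins down the constant $4$ (hence the $5$ in the one-item case), and the free parameter $\gamma_{t}$ is the knob to optimize; $\gamma_{t}=\Val_{t}/2$ should already suffice, and refining it is the only place a cleaner constant might hide.
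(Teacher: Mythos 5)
Your high-level framing is the same as the paper's: bound $\Rev(B^\OPT)$ by $\Rev(\SMA) + \E\!\left[\sum_\tau s^*_\tau(\bal^*_\tau)\right]$ via Theorem~\ref{thm:bl} and Lemma~\ref{lem:sbd}, get the first term deterministically by running $M$ history-independently, build a deterministic BAM $B_2$ that $4$-approximates the spend term, and take the better of the two. The algebra on the outer mixture $\max\{a,b\}\ge\lambda a+(1-\lambda)b$ with $\lambda = 1/(4\alpha+1)$ is also correct. But the core of the theorem is the construction of $B_2$ and the bound $\E\!\left[\sum_t s_t(\bal_t)\right]\ge\tfrac14\E\!\left[\sum_t s^*_t(\bal^*_t)\right]$, and this is exactly the step you do not prove, and your proposed $B_2$ does not satisfy it.

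The flaw is that your deposit $d_t(\bal_t,v_t)=\tfrac{\gamma_t}{\theta_t(\bal_t)}\big(\One\cdot v_t - r_t(\theta_t(\bal_t))\big)^+$ is the \emph{truncated surplus}, not a constant fraction of $\One\cdot v_t$. Although it has the right mean $\gamma_t$, for heavy-tailed value distributions it equals $0$ with probability close to $1$, whereas $B^*$'s deposit $\One\cdot v_\tau$ is positive on every realization. Concretely, take $T=2$ with both stages equal-revenue on $[1,v_{\max}]$: then $\Val = 1+\ln v_{\max}$, $r_1(\Val/2)\approx\sqrt{v_{\max}}\,e^{-1/2}$, the deposit in stage 1 is positive only with probability $O(1/\sqrt{v_{\max}})$, and $\E\!\left[\min\{d_1,\Val/2\}\right]=O(\ln v_{\max}/\sqrt{v_{\max}})\to 0$, while $\E\!\left[\sum_\tau s^*_\tau\right]=\E\!\left[\min\{v_1,\Val\}\right]=\Theta(\ln\ln v_{\max})$. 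So the ratio is unbounded and no constant $\gamma_t$ fixes it. Relatedly, your intermediate claim that ``$\E[d_t]-\E[s_t]=\gamma_t$ in every stage'' is false as written: $\E_{v_t}[d_t]=\gamma_t$ but $s_t(\bal_t)=\min\{\bal_t,\Val_t-\gamma_t\}$ is not zero in general, and with $\gamma_t=\Val_t/2$ the balance does not drift upward at all. Your sketch of a coupling via the concavity of $\min\{\cdot,c\}$ also cannot work, because $\bal_t$ and $\bal^*_t$ are driven by random increments of genuinely different distributions (truncated surplus vs.\ full value); their ``spreads'' are not comparable up to a constant.

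The paper avoids this by dedicating each stage to one role: it enumerates all $2^T$ BAMs $B^\sgm$ indexed by a binary string $\sgm$, where a $0$ stage runs give-for-free (deposit $\One\cdot v_t$, spend $0$) and a $1$ stage runs the posted-price-for-the-grand-bundle with parameter $s^\sgm_t(\bal^\sgm_t)=\min\{\bal^\sgm_t,\Val_t\}$ (deposit $0$). On a $0$ stage the deposit is the \emph{full} value, so the balance can still accumulate mass from moderate realizations. The key Lemma~\ref{lem:pairb} then pairs $\sgm$ with $\sgp(t)$ (its complement on the first $t-1$ bits, same afterward) and shows $s^\sgm_t(\bal^\sgm_t)+s^{\sgp(t)}_t(\bal^{\sgp(t)}_t)\ge s^*_t(\bal^*_t)$ pointwise on histories whenever $\sgm_t=1$, giving $\E_\sgm[s^\sgm_t(\bal^\sgm_t)]\ge\tfrac14 s^*_t(\bal^*_t)$ and hence the existence of a good deterministic $\sgm^*$ by an averaging argument. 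That pointwise pairing, not a coupling of two differently-driven balance processes, is what produces the constant $4$. Replacing your $B_2$ with the paper's $\{B^\sgm\}$ family and Lemma~\ref{lem:pairb} would repair the proof.
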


%

\begin{proof}[Proof of Theorem \ref{thm:dbam}]
  First of all, if $\Rev(\SMA) \geq \Rev(B^\OPT) / (4\alpha + 1)$, then
  $M$ is a good approximation, $\Rev(M) \geq \alpha \Rev(\SMA) \geq
  \alpha \Rev(B^\OPT) / (4\alpha + 1)$.

  Otherwise, suppose $\Rev(\SMA) < \Rev(B^\OPT) / (4\alpha + 1)$, then by
  Theorem \ref{thm:bl},
  \begin{align}\label{eq:dbamtob}
    \E_{v_{(1, T)}}\bigg[\sum_{\tau = 1}^T s^\OPT_\tau(\bal^\OPT_\tau)\bigg]
      \geq \frac{4\alpha}{4\alpha + 1} \Rev(B^\OPT).
  \end{align}
  Then we prove that there is a deterministic BAM $B$ that approximates the
  expected overall spend of $B^\OPT$, i.e.,
  \begin{align}
    \E_{v_{(1, T)}}\bigg[\sum_{\tau = 1}^T s_\tau(\bal_\tau)\bigg] \geq
    \frac14
    \E_{v_{(1, T)}}\bigg[\sum_{\tau = 1}^T s^\OPT_\tau(\bal^\OPT_\tau)\bigg]
  \end{align}

  Recall that the randomization over the give-for-free mechanism and the
  posted-price-for-the-grand-bundle mechanism (each with probability $1 / 3$)
  guarantees a $3$-approximation to $\E_{v_{(1, T)}}\big[\ab\sum_{\tau = 1}^T
  s^*_\tau(\bal^*_\tau)\big]$, (see inequality (\ref{eq:3apps3})).

  Similarly the following construction of $B^{\dag}$, a uniform randomization
  of the give-for-free mechanism and the posted-price-for-the-grand-bundle
  mechanism (with posted-price at $r^{\dag}_t$ for each $t$), always spends
  half of the maximum of overall spend.
  \begin{align}
    & z^\dag_t(\bal^\dag_t, v_t) = \frac12 \Big(\One +
                      x^\PM_t\big(2s^\dag_t(\bal^\dag_t), v_t\big)\Big),\quad
    q^\dag_t(\bal^\dag_t, v_t) =
                     \frac12 p^\PM_t\big(2s^\dag_t(\bal^\dag_t), v_t\big),
      \nonumber  \\
    & d^\dag_t(\bal^\dag_t, v_t) = \frac{\One \cdot v_t}2, \quad
    s^\dag_t(\bal^\dag_t) = \min\{\bal^\dag_t, \Val_t/2\},  \nonumber  \\
    \Infer& \forall v_{(1, T)} \in \dV_{(1, T)},~
      \sum_{\tau = 1}^T s^\dag_\tau(\bal^\dag_\tau) =
          \frac12 \sum_{\tau = 1}^T s^*_\tau(\bal^*_\tau).
  \end{align}

  Combining (\ref{eq:3apps2}), $B^{\dag}$ also guarantees a $2$-approximation
  to the expected overall spend of the $B^\OPT$.
  \begin{align}
    \E_{v_{(1, T)}}\bigg[\sum_{\tau = 1}^T s^\dag_\tau(\bal^\dag_\tau)\bigg]
    = \frac12\E_{v_{(1, T)}}\bigg[\sum_{\tau = 1}^T s^*_\tau(\bal^*_\tau)\bigg]
    \geq \frac12
      \E_{v_{(1, T)}}\bigg[\sum_{\tau = 1}^T s^\OPT_\tau(\bal^\OPT_\tau)\bigg].
    \label{eq:mixspdlb}
  \end{align}

  Unfortunately, $B^\dag$ is not deterministic, since it is a mixture of
  two deterministic mechanism, i.e., $z^\dag_t(\bal^\dag_t, v_t)\in\{1/2, 1\}$.
  Then the rest of the proof focuses on the derandomization of $B^\dag$ to get
  a deterministic BAM that guarantees a constant approximation to the expected
  overall spend of $B^\dag$.

  Consider BAM\footnote{We omit the proof that $B^\sgm$ is a BAM, since it is
  almost the same with proving Mechanism \ref{mech:3app} is a BAM in the proof
  of Theorem \ref{thm:3app}.} $B^\sgm$ defined by a binary string $\sgm \in
  \{0, 1\}^T$ as follows,
  \begin{align*}
    \text{if}~\sgm_t = 0:~& z^\sgm_t(\bal^\sgm_t, v_t) = \One,~
                            q^\sgm_t(\bal^\sgm_t, v_t) = 0,~
                            d^\sgm_t(\bal^\sgm_t, v_t) = \One \cdot v_t,~
                            s^\sgm_t(\bal^\sgm_t)      = 0;  \\
    \text{if}~\sgm_t = 1:~&
    z^\sgm_t(\bal^\sgm_t, v_t) = x^\PM_t\big(s^\sgm_t(\bal^\sgm_t), v_t\big),~
    q^\sgm_t(\bal^\sgm_t, v_t) = p^\PM_t\big(s^\sgm_t(\bal^\sgm_t), v_t\big),
      \\ &
    d^\sgm_t(\bal^\sgm_t, v_t) = 0,~
    s^\sgm_t(\bal^\sgm_t)      = \min\{\bal^\sgm_t, \Val_t\}.
  \end{align*}

  Intuitively, if $\sgm_t = 0$, then $B^\sgm$ runs the give-for-free mechanism
  and deposits to the balance; if $\sgm_t = 1$, then $B^\sgm$ spends the
  balance and runs the posted-price-for-the-grand-bundle mechanism with
  parameter equal to the spend $s^\sgm_t(\bal^\sgm_t)$.

  We then argue that for binary string $\sgm$ chosen at uniformly random,
  $\E_{\sgm}[\Rev(B^\sgm)] \geq \Omega(\Rev(B^\OPT))$. Thus at least one out of
  them guarantees a constant approximation to the optimal revenue.

  Consider the following lemma.
  \begin{lemma}\label{lem:pairb}
    \begin{align}\label{eq:pairb}
      \forall t \in [T], v_{(1, T)} \in \dV_{(1, T)},~
        \E_\sgm \Big[s^\sgm_t(\bal^\sgm_t)\Big]
          \geq \frac14 s^*_t(\bal^*_t).
    \end{align}
  \end{lemma}

  Hence
  \begin{align*}
      & \E_{\sgm}\big[\Rev(B^\sgm)\big]
    = \E_{\sgm, v_{(1, T)}}\bigg[\sum_{\tau = 1}^T
          \Big(q^\sgm_\tau(\bal^\sgm_\tau, v_\tau)
               + s^\sgm_\tau(\bal^\sgm_\tau)\Big)\bigg]
    \geq \E_{\sgm, v_{(1, T)}}\bigg[\sum_{\tau = 1}^T
          s^\sgm_\tau(\bal^\sgm_\tau)\bigg]   \\
    \geq~& \frac14 \E_{v_{(1, T)}}\bigg[\sum_{\tau = 1}^T
          s^*_\tau(\bal^*_\tau)\bigg]
    \geq \frac14 \E_{v_{(1, T)}}\bigg[\sum_{\tau = 1}^T
          s^\OPT_\tau(\bal^\OPT_\tau)\bigg],
  \end{align*}
  where the last inequality is from (\ref{eq:3apps2}).

  Therefore there exists one $\sgm^* \in \{0, 1\}^T$, such that
  \begin{align*}
    \Rev(B^{\sgm^*})
      \geq \E_{\sgm}\big[\Rev(B^\sgm)\big]
      \geq \frac14 \E_{v_{(1, T)}}\bigg[\sum_{\tau = 1}^T
                                        s^\OPT_\tau(\bal^\OPT_\tau)\bigg],
  \end{align*}
  and by assumption (\ref{eq:dbamtob}), we conclude that
  \begin{align*}
    \Rev(B^{\sgm^*})
      \geq \frac14 \E_{v_{(1, T)}}\bigg[\sum_{\tau = 1}^T
                                        s^\OPT_\tau(\bal^\OPT_\tau)\bigg]
      \geq \frac{\alpha}{4\alpha + 1}\Rev(B^\OPT).
  \end{align*}

  Note that for any $\sgm$, $B^\sgm$ is a deterministic BAM by construction.
  So in summary, at least one out of these $2^T + 1$ deterministic mechanisms,
  $\{M\} \cup \big\{B^\sgm | \sgm \in \{0, 1\}^T\big\}$, guarantees the desired
  constant approximation to the optimal revenue.
\end{proof}

\section{Computing Optimal Bank Account Mechanism via Dynamic Programming}
\label{sec:fptas}

  In this section, we put forward a dynamic programing algorithm and FPTAS to
  compute the optimal mechanism for the discrete type case. We
  show a proof for the one item per stage case (see Appendix
  \ref{app:thm:dp}), and argue that it can be directly generalized to
  the multi-item per stage case by using the method from
  \cite{cai2012algorithmic}.

  \begin{theorem}\label{thm:dp}
    The optimal bank account mechanism can be computed through a dynamic
    programming algorithm.

    Moreover, for any $\epsilon > 0$, there is an FPTAS to achieve an
    $\epsilon$-approximation (multiplicative) of the optimal revenue for the
    one item per stage case with discrete valuation distributions for each
    stage.
  \end{theorem}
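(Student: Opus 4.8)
The plan is to collapse the state of the mechanism down to a single real number. By Theorems~\ref{thm:rep} and~\ref{thm:corebam} it suffices to optimize over core BAMs $B^{g,y}$, and in such a mechanism the continuation from stage~$t$ depends on the realized history $v_{(1,t-1)}$ only through the scalar balance $\bal_t$; this is exactly what makes a dynamic program over the balance possible. For the one-item-per-stage discrete case, write the support of $\F_t$ as $v^{(1)}_t<\cdots<v^{(m_t)}_t$ with probability masses $\pi^{(1)}_t,\dots,\pi^{(m_t)}_t$ and set $v^{(0)}_t:=0$. Put $\Prog_{T+1}(\bal)\equiv 0$ and, for $t\le T$, let $\Prog_t(\bal)$ be the largest expected revenue a valid core BAM can collect over stages $t,\dots,T$ when the balance entering stage $t$ equals $\bal$. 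A stage-$t$ decision at balance $\bal$ consists of a nondecreasing allocation vector $0\le\zeta^{(1)}\le\cdots\le\zeta^{(m_t)}\le 1$ --- monotonicity being exactly the single-shot IC condition~(\ref{cond:sic}) --- together with a spend $s$ with $0\le s\le\bal$; these fix the discrete Myerson payment $q^{(j)}=\zeta^{(j)}v^{(j)}_t-\sum_{i\le j}\zeta^{(i)}(v^{(i)}_t-v^{(i-1)}_t)$, the stage virtual utility and deposit $\hu^{(j)}=d^{(j)}=\sum_{i\le j}\zeta^{(i)}(v^{(i)}_t-v^{(i-1)}_t)$ (so the IR condition~(\ref{cond:ir}) holds with equality), and the next balance $\bal-s+\hu^{(j)}\ge 0$ after type $v^{(j)}_t$. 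The Bellman recursion is then
\begin{align*}
  \Prog_t(\bal)=\max_{\zeta,\,s}\ \Big[\,s+\sum_{j=1}^{m_t}\pi^{(j)}_t\big(q^{(j)}+\Prog_{t+1}\big(\bal-s+\hu^{(j)}\big)\big)\Big],
\end{align*}
and $\Rev(B^\OPT)=\Prog_1(0)$ by Theorem~\ref{thm:rep}. Computing this backwards over the balances reachable from $\bal_1=0$ proves the first assertion --- as an exact procedure whose running time may be exponential, since the reachable balance set can have size $\prod_t m_t$, which is what forces a discretization for the second part.

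The one feasibility condition that is not local to a single balance is the spend constraint~(\ref{cond:spd}): it ties the stage-$t$ spends at different balances together by forcing $s_t(\bal)-\E_{v_t}\big[\hu_t(\bal,v_t;v_t)\big]$ to equal a stage-constant $\kappa_t$ --- the ``consistent'' condition of Theorem~\ref{thm:corebam}. Since this is a linear relation, I would simply impose it: once the allocation grids below are fixed, the feasible mechanisms form a polytope and maximizing revenue over it is a linear program, which --- being organized stage by stage with the balance as the linking variable --- is what is meant by solving the problem ``via linear and dynamic programming''. Concretely one can carry $\kappa_t$ as an extra decision variable at stage $t$, or equivalently note that $\kappa_t$ is pinned down by the allocation chosen at the smallest reachable balance.

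For the FPTAS, first bound the state space. Finiteness of the $\Val_t$ (from $\E[\One\cdot v_t]<\infty$) together with Lemma~\ref{lem:3appspddpsupb} (deposits at most $\One\cdot v_t$, spends at most $\min\{\bal,\Val_t\}$) caps the revenue-to-go, hence the useful balance, at $B_{\max}:=\sum_t\Val_t$; and since the optimal history-independent mechanism can post any single price, $\Rev(\SMA)\ge\max_{t,j}\pi^{(j)}_tv^{(j)}_t$, so $B_{\max}=\sum_{t,j}\pi^{(j)}_tv^{(j)}_t\le(\sum_t m_t)\Rev(\SMA)\le(\sum_t m_t)\Rev(B^\OPT)$. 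Now lay down an additive grid of spacing $\delta=\epsilon\,\Rev(\SMA)/T$ on $[0,B_{\max}]$ --- polynomially many points, $O((\sum_t m_t)T/\epsilon)$ --- and an $O(Tm_t/\epsilon)$-point grid on each allocation probability (enough, via $\Val_t\le m_t\Rev(\SMA)$, to keep each stage's payment perturbation below $\delta$), round every transition $\bal-s+\hu^{(j)}$ onto the balance grid, and solve the resulting polynomial-size dynamic program / linear program exactly. Each of the $T$ roundings costs at most $O(\delta)$ of revenue, because one shows the value function is monotone and $1$-Lipschitz in the balance (so $|\Prog_t(\bal)-\Prog_t(\bal')|\le|\bal-\bal'|$); hence the total loss is $O(T\delta)=O(\epsilon\,\Rev(\SMA))\le O(\epsilon\,\Rev(B^\OPT))$, the desired multiplicative approximation, in time $\mathrm{poly}(\text{input},1/\epsilon)$ (rescale $\epsilon$ by the hidden constant).

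The hard part is the regularity-and-error-propagation analysis behind the last paragraph: proving that the value function is monotone and Lipschitz in the balance (this validates both the discretization of the balance axis and the fact that rounding errors add rather than compound over $T$ stages), and doing so while respecting the non-local constraint~(\ref{cond:spd}), which couples the stage-$t$ spends and must be threaded consistently through the recursion and the rounding. Once this is in place, the extension to more than one item per stage is routine: represent each stage mechanism by its polynomial-size reduced-form (interim) allocation in the manner of \cite{cai2012algorithmic}, which still carries the subgradient/monotonicity information that~(\ref{cond:sic}) requires, so the same dynamic-programming plus linear-programming scheme applies unchanged.
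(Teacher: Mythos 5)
Your overall strategy matches the paper's: reduce to core BAMs via Theorems~\ref{thm:rep} and~\ref{thm:corebam}, collapse the state to a scalar, do a Bellman recursion whose per-stage subproblem is an LP over allocations (as in Lemma~\ref{lem:icallo}), and extend to multi-item via the reduced-form technique of~\cite{cai2012algorithmic}. But the particular parametrization you chose creates a problem that the paper's does not have, and you flag it yourself without resolving it.

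You take the state to be the balance $\bal_t$ and treat the spend $s$ as a per-balance decision variable, adding $+s$ to the Bellman objective. Because constraint~(\ref{cond:spd}) (equivalently the ``consistency'' condition of Theorem~\ref{thm:corebam}) forces $s_t(\bal)-\E_{v_t}[\hu_t(\bal,v_t;v_t)]$ to be a single stage-constant $\kappa_t$ across \emph{all} balances, your recursion is not a genuine dynamic program: the feasible choice of $s$ at one balance is coupled to the allocation chosen at every other balance of that stage. The two fixes you propose --- carry $\kappa_t$ as a ``decision variable'' or declare it ``pinned down by the allocation chosen at the smallest reachable balance'' --- do not rescue this. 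The constant $\kappa_t$ is a stage-global quantity, not part of the per-state transition, and the binding balance that determines $\kappa_t=\min_\bal(\bal-\E[\hu_t(\bal)])$ depends on allocations you have not yet committed to; it is not necessarily the smallest balance, since $\E[\hu_t(\bal)]$ varies with $\bal$. The paper sidesteps this entirely by parametrizing the state by the promised utility $\xi=g_t(v_{(1,t)})$ and making the transition $g(\xi,v_t)=\xi+\hu_t(\xi,v_t;v_t)-\E_{v'_t}[\hu_t(\xi,v'_t;v'_t)]$ fully determined by the stage allocation. There is no spend decision at all: the only decision variables are the allocation weights $w_t$, the objective is (stage welfare)$+\phi_{t+1}(g(\xi,v_t))$ with base case $\phi_T(\xi)=-\xi$, and the consistency of Theorem~\ref{thm:corebam} is then \emph{automatic} (the program's constraint~(\ref{dp:spd}) is exactly the $c_t=0$ normalization). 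This is the key structural move your proposal is missing.

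The FPTAS route is also genuinely different. You propose a uniform balance grid plus rounding, whose error control rests on the value function being $1$-Lipschitz --- a claim you defer as ``the hard part.'' The paper instead proves that $\phi_t$ is \emph{concave} (Lemma~\ref{lem:hprop}, via a mixing argument that appeals back to Theorem~\ref{thm:rep} to re-symmetrize the convex combination of two BAMs), and then sandwiches $\phi_t$ between two concave piecewise-linear functions with $O(N+1/\kappa)$ pieces using an adaptive query scheme (Lemma~\ref{lem:concaveapprox}). Each Bellman step becomes a polynomial-size LP whose objective uses the concave PWL surrogate, and the sandwich gap propagates additively, $\op_{t-1}-\up_{t-1}\le 2(T-t+1)\kappa$, giving the FPTAS after setting $\kappa=\Theta(\epsilon/T)$. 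Both regularity routes are plausible, but you have established neither the Lipschitz bound nor the consistency threading it requires, whereas the paper's concavity-plus-PWL approach is fully worked out and does not need to reason about spends at all. As a result, the two steps you explicitly identify as ``the hard part'' are precisely the parts that remain missing; what you have is the correct high-level architecture but not a proof.
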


  \begin{corollary}
    For any $\epsilon > 0$, there is an FPTAS to achieve an
    $\epsilon$-approximation (multiplicative) of the optimal revenue for the
    general (multi-item per stage) case with discrete valuation distributions.
  \end{corollary}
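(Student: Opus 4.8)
The plan is to obtain this as an essentially black-box corollary of Theorem~\ref{thm:dp}: keep the entire dynamic-programming construction used there and swap out only its single-stage optimization oracle for a multi-dimensional one supplied by \cite{cai2012algorithmic}. First I would recall the structure of the DP behind Theorem~\ref{thm:dp}: its state is a (discretized) bank balance $\bal_t$ --- equivalently a promised-utility level, since $\bal_{t+1}=g_t-\mu_t$ --- and it computes, for each stage $t$ and state $b$, the optimal expected revenue $\Prog_t(b)$ from stages $t,\dots,T$ of a Core BAM entering stage $t$ with balance $b$. The transition at stage $t$ asks, for each $b$, to pick a single-shot stage mechanism $\langle z_t(b,\cdot),q_t(b,\cdot)\rangle$, a spend $s_t(b)$, and (by the Core BAM structure of Theorem~\ref{thm:corebam}) a deposit $d_t(b,v_t)=\hu_t(b,v_t;v_t)$, so as to maximize $s_t(b)+\E_{v_t}\!\big[q_t(b,v_t)+\Prog_{t+1}\big(b-s_t(b)+d_t(b,v_t)\big)\big]$ subject to (\ref{cond:sic})--(\ref{cond:ir}). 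Crucially, only the resolution of this single-stage subproblem uses one-dimensionality of $v_t$; the boundedness of the reachable balance range, the grid discretization, and the rounding that turns the exact DP into an FPTAS are dimension-agnostic and carry over verbatim.

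Next I would recast the multi-dimensional single-stage subproblem as a single-buyer mechanism-design LP. Fixing the incoming balance $b$ and, after discretization, the spend $s$ and the promised-utility target $\theta$ that (\ref{cond:spd}) forces, the subproblem is: over IC single-shot mechanisms $\langle z,q\rangle$ for an additive buyer with discrete type distribution $\F_t$ and allocation space $\dX_t=[0,1]^{k_t}$, maximize $\E_{v_t}[q(v_t)]$ subject to the linear utility constraint $\E_{v_t}[z(v_t)\cdot v_t-q(v_t)]=s+\theta$, the non-negativity constraints $q\ge 0$ and $\hu_t\ge 0$, and the monotonicity/subgradient requirement of Theorem~\ref{thm:corebam}. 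With a discrete type space all of this is linear in the variables $\{z(v_t)\in[0,1]^{k_t},\,q(v_t)\}$, so the subproblem is an LP --- solvable directly when the within-stage joint distribution is listed explicitly, and via the reduced-form polytope of \cite{cai2012algorithmic} (polynomial-time separation oracle, plus an efficient way to realize any feasible reduced form as a lottery over allocations) when $\F_t$ is a product of independent item distributions. Plugging this polynomial-time oracle into the DP and repeating the one-item rounding argument then gives a $(1-\epsilon)$-approximation of the optimal Core BAM --- hence of the optimal dynamic mechanism, by Theorem~\ref{thm:rep} --- in time polynomial in the input and in $1/\epsilon$.

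The hard part will be the bookkeeping that guarantees the multi-dimensional oracle actually returns something admissible as a BAM: a mechanism that is at once exactly IC as a single-shot mechanism, consistent with the Core BAM structure (allocation equal to the subgradient of the stage utility, deposit equal to the residual utility $\hu_t(b,v_t;v_t)$), and a function of the history only through the scalar state $b$ --- so that the output is a genuine symmetric BAM rather than an arbitrary path-dependent object. I expect the reduced-form machinery of \cite{cai2012algorithmic} to deliver exact IC and the implementability of the subgradient allocation as a lottery, while history-independence is automatic because the stage-$t$ subproblem data depends on the past only through $b$. What then remains is purely quantitative: verifying that the per-stage LP error and the balance-discretization error telescope over the $T$ stages into a multiplicative $\epsilon$ loss, which should go through exactly as in the one-item proof since per-stage spends and deposits are bounded by $\Val_t$ and $\One\cdot v_t$ respectively.
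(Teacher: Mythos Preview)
Your proposal is correct and matches the paper's own approach: the paper does not give a standalone proof of this corollary but simply asserts (at the start of Section~\ref{sec:fptas} and in the proof sketch of Theorem~\ref{thm:dp}) that the one-item DP generalizes to the multi-item case by replacing the per-stage optimization with the LP techniques of \cite{cai2012algorithmic}, while the concavity of $\phi_t$ and the piecewise-linear approximation scheme of Lemma~\ref{lem:hprop} are dimension-agnostic. Your write-up is in fact more detailed than what the paper provides, and the ``bookkeeping'' you flag as the hard part is already absorbed by the constraints of program~(\ref{prog:conv}), so there is less to verify than you suggest.
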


  \begin{proof}[Proof sketch of Theorem \ref{thm:dp}]
    We outline the proof idea here and leave the complete proof to Appendix
    \ref{app:thm:dp}.

    By Theorem \ref{thm:rep}, we assume that the optimal mechanism is a core
    BAM $B^{g, y}$. By Theorem \ref{thm:corebam} and Lemma \ref{lem:symmstrong},
    it is without loss of generality to assume that $g_t(h_t) = \Utl(B | h_t)$.
    Since $y_t$ is the sub-gradient of $g_t$ and $g_t$ is symmetric, we denote
    $z_t(\bal_t, v_t)$ as $y_t\big(g_t(v_{(1, t - 1)}), v_t\big)$, by
    overloading the definition of $y_t$.

    We claim that the optimal mechanism can be computed via a dynamic program
    with state function $\phi_t(\xi)$, which is the optimal revenue for the
    sub-problem (consists of stages $t + 1$ to $T$) but with additional
    constraint that the promised utility must equal to $\xi$ $(U_t = \xi)$.

    Formally,
    \begin{align*}
      \phi_t(\xi) = \max_{B : g_t(v_{(1, t)}) = \xi} \E_{v(t + 1, T)}\bigg[
        \sum_{\tau = t + 1}^T y_t\big(g_t(v_{(1, \tau - 1)}), v_\tau\big)
                                \cdot v_\tau  - \bal_{T + 1}\bigg].
    \end{align*}

    It can be computed from $\phi_{t + 1}(\xi)$ by the following program.
    \begin{align}
      \mathrm{maximize}   \quad & \E_{v_t}\Big[
        y_t(\xi, v_t) \cdot v_t + \phi_{t+1}\big(g(\xi, v_t)\big)\Big]
        \label{prog:conv}  \\
      \mathrm{subject~to} \quad & \hu_t(\xi, v_t; v_t) \geq
        \hu_t(\xi, v'_t; v_t),~\forall v_t \in \dV_t \nonumber  \\
      & g(\xi, v_t) = \xi + \hu_t(\xi, v_t; v_t)
          - \E_{v'_t}\big[\hu_t(\xi, v'_t; v'_t)\big] \geq 0,
        ~\forall v_t \in \dV_t \nonumber
    \end{align}

    Then $\max_{\xi \geq 0} \phi_0(\xi) = \max_B \Rev(B)$ is the optimal
    revenue, and function $\phi_0(\xi)$ can be computed via dynamic program
    starting from $\phi_T(\xi) = -\xi$. For any $t$, $\phi_t(\xi)$ is a concave
    function, and once the optimal revenue is determined, we can recover the
    entire mechanism by solving program (\ref{prog:conv}) for each stage.

    For discrete type space cases, program (\ref{prog:conv}) can be written as
    an LP that can be efficiently solved by techniques introduced in
    \cite{cai2012algorithmic}. Moreover, since $\phi_t(\xi)$ is concave, it can
    be bounded by two concave piece-wise linear functions within any constant
    $\delta > 0$. Since the number of pieces of the bounds is at most $O(N /
    \delta)$\footnote{$N$ is the input size}, this dynamic algorithm admits an
    FPTAS.
  \end{proof}
  We conjecture that the computation of the exact solution is likely to be
  hard, because in general cases, the description of the BAM could be
  exponentially large (since there could be exponentially many different values
  of balance).

  In addition, if one wants to further generalize the results for continuous
  valuation distributions by discretization of the valuation domain, one needs
  to relax the IC and IR constraints to $\epsilon$-IC and $\epsilon$-IR, which
  may be an interesting future direction.

\bibliographystyle{apalike}
\bibliography{acmsmall-sample-bibfile}

\begin{thebibliography}{}

\bibitem[Ashlagi et~al., 2016]{ashlagi2016sequential}
Ashlagi, I., Daskalakis, C., and Haghpanah, N. (2016).
\newblock Sequential mechanisms with ex-post participation guarantees.
\newblock {\em arXiv preprint arXiv:1603.07229}.

\bibitem[Athey and Segal, 2013]{athey2013efficient}
Athey, S. and Segal, I. (2013).
\newblock An efficient dynamic mechanism.
\newblock {\em Econometrica}, 81(6):2463--2485.

\bibitem[Babaioff et~al., 2014]{babaioff2014simple}
Babaioff, M., Immorlica, N., Lucier, B., and Weinberg, S.~M. (2014).
\newblock A simple and approximately optimal mechanism for an additive buyer.
\newblock In {\em Foundations of Computer Science (FOCS), 2014 IEEE 55th Annual
  Symposium on}, pages 21--30. IEEE.

\bibitem[Bergemann and Said, 2011]{bergemann2011dynamic}
Bergemann, D. and Said, M. (2011).
\newblock Dynamic auctions.
\newblock {\em Wiley Encyclopedia of Operations Research and Management
  Science}.

\bibitem[Bergemann and V{\"a}lim{\"a}ki, 2010]{bergemann2010dynamic}
Bergemann, D. and V{\"a}lim{\"a}ki, J. (2010).
\newblock The dynamic pivot mechanism.
\newblock {\em Econometrica}, 78(2):771--789.

\bibitem[Cai et~al., 2012]{cai2012algorithmic}
Cai, Y., Daskalakis, C., and Weinberg, S.~M. (2012).
\newblock An algorithmic characterization of multi-dimensional mechanisms.
\newblock In {\em Proceedings of the forty-fourth annual ACM symposium on
  Theory of computing}, pages 459--478. ACM.

\bibitem[Cai et~al., 2016]{cai2016duality}
Cai, Y., Devanur, N.~R., and Weinberg, S.~M. (2016).
\newblock A duality based unified approach to bayesian mechanism design.
\newblock In {\em Proceedings of the 48th Annual ACM Symposium on Theory of
  Computing (STOC)}.

\bibitem[Cavallo, 2008]{cavallo2008efficiency}
Cavallo, R. (2008).
\newblock Efficiency and redistribution in dynamic mechanism design.
\newblock In {\em Proceedings of the 9th ACM conference on Electronic
  commerce}, pages 220--229. ACM.

\bibitem[Cavallo et~al., 2009]{cavallo2009efficient}
Cavallo, R., Parkes, D.~C., and Singh, S. (2009).
\newblock Efficient mechanisms with dynamic populations and dynamic types.

\bibitem[Celis et~al., 2014]{celis2014buy}
Celis, L.~E., Lewis, G., Mobius, M., and Nazerzadeh, H. (2014).
\newblock Buy-it-now or take-a-chance: Price discrimination through randomized
  auctions.
\newblock {\em Management Science}, 60(12):2927--2948.

\bibitem[Gershkov and Moldovanu, 2009]{gershkov2009dynamic}
Gershkov, A. and Moldovanu, B. (2009).
\newblock Dynamic revenue maximization with heterogeneous objects: A mechanism
  design approach.
\newblock {\em American Economic Journal: Microeconomics}, 1(2):168--198.

\bibitem[Gershkov and Moldovanu, 2010]{gershkov2010efficient}
Gershkov, A. and Moldovanu, B. (2010).
\newblock Efficient sequential assignment with incomplete information.
\newblock {\em Games and Economic Behavior}, 68(1):144--154.

\bibitem[Hart and Nisan, 2012]{hart2012approximate}
Hart, S. and Nisan, N. (2012).
\newblock Approximate revenue maximization with multiple items.
\newblock {\em arXiv preprint arXiv:1204.1846}.

\bibitem[Hart and Nisan, 2013]{hart2013menu}
Hart, S. and Nisan, N. (2013).
\newblock The menu-size complexity of auctions.
\newblock {\em arXiv preprint arXiv:1304.6116}.

\bibitem[Kakade et~al., 2013]{kakade2013optimal}
Kakade, S.~M., Lobel, I., and Nazerzadeh, H. (2013).
\newblock Optimal dynamic mechanism design and the virtual-pivot mechanism.
\newblock {\em Operations Research}, 61(4):837--854.

\bibitem[Li et~al., 2013]{li2013dynamic}
Li, H., Zhang, H., and Fine, C.~H. (2013).
\newblock Dynamic business share allocation in a supply chain with competing
  suppliers.
\newblock {\em Operations Research}, 61(2):280--297.

\bibitem[Li and Yao, 2013]{li2013revenue}
Li, X. and Yao, A. C.-C. (2013).
\newblock On revenue maximization for selling multiple independently
  distributed items.
\newblock {\em Proceedings of the National Academy of Sciences},
  110(28):11232--11237.

\bibitem[Mirrokni and Nazerzadeh, 2015]{mirrokni2015deals}
Mirrokni, V. and Nazerzadeh, H. (2015).
\newblock Deals or no deals: Contract design for online advertising.
\newblock {\em Available at SSRN 2693037}.

\bibitem[Mirrokni et~al., 2016]{mirrokni2016dynamic}
Mirrokni, V.~S., Paes~Leme, R., Tang, P., and Zuo, S. (2016).
\newblock Dynamic auctions with bank accounts.
\newblock In {\em Proceedings of the International Joint Conference on
  Artificial Intelligence (IJCAI)}.

\bibitem[Myerson, 1981]{myerson1981optimal}
Myerson, R.~B. (1981).
\newblock Optimal auction design.
\newblock {\em Mathematics of operations research}, 6(1):58--73.

\bibitem[Ostrovsky and Schwarz, 2011]{ostrovsky2011reserve}
Ostrovsky, M. and Schwarz, M. (2011).
\newblock Reserve prices in internet advertising auctions: A field experiment.
\newblock In {\em Proceedings of the 12th ACM conference on Electronic
  commerce}, pages 59--60. ACM.

\bibitem[Pai and Vohra, 2008]{pai2008optimal}
Pai, M. and Vohra, R.~V. (2008).
\newblock Optimal dynamic auctions.
\newblock Technical report, Discussion paper//Center for Mathematical Studies
  in Economics and Management Science.

\bibitem[Papadimitriou et~al., 2014]{papadimitriou2014complexity}
Papadimitriou, C., Pierrakos, G., Psomas, C.-A., and Rubinstein, A. (2014).
\newblock On the complexity of dynamic mechanism design.

\bibitem[Pavan et~al., 2008]{pavan2008dynamic}
Pavan, A., Segal, I., and Toikka, J. (2008).
\newblock Dynamic mechanism design: Revenue equivalence, profit maximization
  and information disclosure.

\bibitem[Pavan et~al., 2014]{pavan2014dynamic}
Pavan, A., Segal, I., and Toikka, J. (2014).
\newblock Dynamic mechanism design: A myersonian approach.

\bibitem[Rochet, 1985]{rochet1985taxation}
Rochet, J.-C. (1985).
\newblock The taxation principle and multi-time hamilton-jacobi equations.
\newblock {\em Journal of Mathematical Economics}, 14(2):113--128.

\end{thebibliography}

\newpage

\appendix
\section*{APPENDIX}
\setcounter{section}{0}


\section{Omitted proofs}

\subsection{Proof of Lemma \ref{lem:swic}}\label{app:lem:swic}
\begin{proof}[Proof of Lemma \ref{lem:swic}]
  {\bf IC $\Infer$ stage-wise IC:}

  Let $b_{(t + 1, T)}$ be the truthful bidding strategy from stage $t + 1$, we
  immediately have IC (\ref{mech:defic}) $\Infer$ stage-wise IC (\ref{mech:ic}).

  {\bf IC $\Reduce$ stage-wise IC:}

  Let $b_t(v_{(1, t)})=v'_t $ be some bidding strategy, and take expectation on
  the stage-wise IC constraint (\ref{mech:ic}) over $v_t$, we have
  \begin{align*}
      \E_{v_t}\big[u_t(v_{(1, t)}; v_t) + U_t(v_{(1, t)})\big]
      \geq \E_{v_t}\Big[u_t\big(v_{(1, t - 1)}, b_t(v_{(1, t)}); v_t\big)
            + U_t\big(v_{(1, t- 1)}, b_t(v_{(1, t)})\big)\Big],
  \end{align*}
  In other words,
  \begin{align*}
    U_{t - 1}(v_{(1, t - 1)}) \geq U^{b_t}_{t - 1}(v_{(1, t - 1)}).
  \end{align*}

  Combining (\ref{mech:ic}):
  \begin{align*}
    u_t(v_{(1, t)}; v_t) + U_t(v_{(1, t)}) \geq
      u_t(v_{(1, t - 1)}, v'_t; v_t) + U_t(v_{(1, t - 1)}, v'_t)
  \end{align*}
  we have
  \begin{align*}
    u_t(v_{(1, t)}; v_t) + U_t(v_{(1, t)}) \geq
      u_t(v_{(1, t - 1)}, v'_t; v_t) + U^{b_{t + 1}}_t(v_{(1, t - 1)}, v'_t).
  \end{align*}

  Take expectation on the last inequality above over $v_t$, we have
  \begin{align*}
    U_{t - 1}(v_{(1, t - 1)}) \geq U^{b_{(t, t + 1)}}_{t - 1}(v_{(1, t - 1)})
  \end{align*}

  Repeat the same argument above, we have
  \begin{align*}
    u_t(v_{(1, t)}; v_t) + U_t(v_{(1, t)}) \geq
      u_t(v_{(1, t - 1)}, v'_t; v_t)
      + U^{b_{(t + 1, t + 2)}}_t(v_{(1, t - 1)}, v'_t)
          \Infer \cdots \Infer \text{(\ref{mech:defic})}.
  \end{align*}

  This proves the equivalence between IC and stage-wise IC.
\end{proof}

\subsection{Proof of Lemma \ref{lem:swir}}\label{app:lem:swir}
\begin{proof}[Proof of Lemma \ref{lem:swir}]
  To show that stage-wise IR implies IR, sum up the stage-wise IR constraints
  (\ref{mech:ir}) from stage $1$ to $T$, we directly have the IR constraint
  (\ref{mech:defir}).
\end{proof}

\subsection{Proof of Lemma \ref{lem:swirconst}}\label{app:lem:swirconst}
\begin{proof}[Proof of Lemma \ref{lem:swirconst}]
  Recall the construction of $M'$ is as follows,
  \begin{itemize}
    \item \begin{align*}
            \forall t \in [T], v_{(1, t)} \in \dV_{(1, t)},~
              x'_t(v_{(1, t)}) = x_t(v_{(1, t)});
          \end{align*}
    \item \begin{align*}
            \forall t < T, v_{(1, t)} \in \dV_{(1, t)},~&
              p'_t(v_{(1, t)}) = x'_t(v_{(1, t)}) \cdot v_t;  \\
            \forall v_{(1, T)} \in \dV_{(1, T)},~&
              p'_T(v_{(1, T)}) = \sum_{\tau = 1}^T p_\tau(v_{(1, \tau)})
                        - \sum_{\tau = 1}^{T - 1} p'_\tau(v_{(1, \tau)}).
          \end{align*}
  \end{itemize}

  We prove that the required properties in Lemma \ref{lem:swirconst} are met.

  Properties \ref{lem:swicir:p1} and \ref{lem:swicir:p2} are straightforward
  from the construction.

  Property \ref{lem:swicir:p3}. For first $T - 1$ stages,
  \begin{align*}
    u'_t(v_{(1, t)}; v_t) = x'_t(v_{(1, t)}) \cdot v_t - p'_t(v_{(1, t)}) = 0
    \Infer \text{(\ref{mech:ir})};
  \end{align*}
  for the last stage,
  \begin{align*}
       u'_T(v_{(1, T)}; v_T)
    &= x'_t(v_{(1, t)}) \cdot v_t - p'_t(v_{(1, t)})  \\
    &= x'_t(v_{(1, t)}) \cdot v_t
          - \bigg(\sum_{\tau = 1}^T p_\tau(v_{(1, \tau)})
                  - \sum_{\tau = 1}^{T - 1} p'_\tau(v_{(1, \tau)})\bigg)  \\
    &= \sum_{\tau = 1}^T x'_\tau(v_{(1, \tau)}) \cdot v_\tau
       - \sum_{\tau = 1}^T p_\tau(v_{(1, \tau)})  \\
    &= \sum_{\tau = 1}^T x_\tau(v_{(1, \tau)}) \cdot v_\tau
       - \sum_{\tau = 1}^T p_\tau(v_{(1, \tau)})  \\
    &= \sum_{\tau = 1}^T u_\tau(v_{(1, \tau)}; v_\tau)  \\
    &\geq 0
       \Infer \text{(\ref{mech:ir})},
  \end{align*}
  where the last inequality is from the IR condition (\ref{mech:defir}) of $M$.

  Property \ref{lem:swicir:p4} is also straightforward. Firstly, rewrite the IC
  condition as follows by including all the utilities from past stages,
  \begin{align*}
    \sum_{\tau = 1}^t u_\tau(v_{(1, \tau)}; v_\tau) + U_t(v_{(1, t)})
      \geq \sum_{\tau = 1}^{t - 1} u_\tau(v_{(1, \tau)}; v_\tau)
           + u_t(v_{(1, t - 1)}, v'_t; v_t)
           + U^{b_{(t + 1, T)}}_t(v_{(1, t - 1)}, v'_t).
  \end{align*}
  Then by properties \ref{lem:swicir:p1} and \ref{lem:swicir:p2}, the
  left-hand-side for $M$ and $M'$ are the same, and similar with the
  right-hand-side.
\end{proof}

%
%
%
%

\subsection{Proof of Lemma \ref{lem:sconic} \& \ref{lem:sconir}}
\label{app:lem:scon}

  Throughout the proof of Lemma \ref{lem:sconic} and \ref{lem:sconir}, we use
  the following notations,
  \begin{align*}
    \bal_\tau = \bal_\tau(v_{(1, \tau - 1)}),~
    \bal^{(t)}_\tau = \bal_\tau(v_{(1, t - 1)}, v'_t, v_{(t + 1, \tau)}),~
    s_\tau = s_\tau(\bal_\tau),~s^{(t)}_\tau = s_\tau(\bal^{(t)}_\tau),
  \end{align*}
  where the superscript ${}^{(t)}$ means the valuation of stage $t$ is replaced
  by some $v'_t$.

  We also enhance both lemmas by generalizing the conclusions to a border set
  of direct mechanisms (beyond the direct mechanism we map the bank account
  mechanism $B$ to).

  For any given bank account mechanism $B$, let $M$ be any direct mechanism
  that has the same allocation rule and generates the same revenue with $B$.
  Formally,
  \begin{align}\label{prf:scon:constr}
    \forall t \in[T], v_{(1, T)} \in \dV_{(1, T)},~&
    \left\{\begin{array}{l}
      x_t(v_{(1, t)}) = z_t(\bal_t, v_t)  \\
      \displaystyle
      \sum_{\tau = 1}^T p_\tau(v_{(1, \tau)}) = \sum_{\tau = 1}^T
                  \Big(q_\tau(\bal_\tau, v_\tau) + s_\tau(\bal_{\tau})\Big)
    \end{array}\right.  \\
    \Infer~&
    \sum_{\tau = 1}^T u_\tau(v_{(1, \tau)}; v_\tau)
    = \sum_{\tau = 1}^T \big(\hu_\tau(\bal_\tau, v_\tau; v_\tau)
                             - s_\tau(\bal_\tau)\big). \label{eq:utlequiv}
  \end{align}

  Then Lemma \ref{lem:sconic} and \ref{lem:sconir} work for $M$ accordingly.


\begin{proof}[Proof of Lemma \ref{lem:sconic}]

  For any $v_{(1, t)}$, consider the following,
  \begin{align*}
    & \sum_{\tau = 1}^{t - 1} u_\tau(v_{(1, \tau)}; v_\tau)
      + u_t(v_{(1, t - 1)}, v'_t; v_t)
      + U_t(v_{(1, t - 1)}, v'_t)  \\
    =~& \sum_{\tau = 1}^{t - 1} u_\tau(v_{(1, \tau)}; v_\tau)
        + u_t(v_{(1, t - 1)}, v'_t; v_t)
        + \E_{v_{(t + 1, T)}}\bigg[\sum_{\tau = t + 1}^T
          u_\tau(v_{(1, t - 1)}, v'_t, v_{(t + 1, \tau)}; v_\tau)\bigg]  \\
    =~& \E_{v_{(t + 1, T)}}\bigg[
          \sum_{\tau = 1}^{t - 1} \Big(\hu_\tau(\bal_\tau, v_\tau; v_\tau)
          - s_\tau\Big)
        + \hu_t(\bal_t, v'_t; v_t) - s_t
        + \sum_{\tau = t + 1}^T \Big(\hu_\tau(\bal^{(t)}_\tau, v_\tau; v_\tau)
          - s^{(t)}_\tau\Big)\bigg],
  \end{align*}
  where the last equation comes from replacing $\sum_{\tau = 1}^T u_\tau$
  according to (\ref{eq:utlequiv}).

  Hence,
  \begin{align*}
      & u_t(v_{(1, t)}; v_t) + U_t(v_{(1, t)})
        - \Big(u_t(v_{(1, t - 1)}, v'_t; v_t)
               + U_t(v_{(1, t - 1)}, v'_t)\Big)  \\
    =~& \sum_{\tau = 1}^{t - 1} u_\tau(v_{(1, \tau)}; v_\tau)
      + u_t(v_{(1, t)}; v_t) + U_t(v_{(1, t)})
      - \bigg(\sum_{\tau = 1}^{t - 1} u_\tau(v_{(1, \tau)}; v_\tau)
             + u_t(v_{(1, t - 1)}, v'_t; v_t)
             + U_t(v_{(1, t - 1)}, v'_t)\bigg)  \\
    =~& \E_{v_{(t + 1, T)}}\bigg[
          \sum_{\tau = 1}^{t - 1} \Big(\hu_\tau(\bal_\tau, v_\tau; v_\tau)
          - s_\tau\Big)
        + \hu_t(\bal_t, v_t; v_t) - s_t
        + \sum_{\tau = t + 1}^T \Big(\hu_\tau(\bal_\tau, v_\tau; v_\tau)
          - s_\tau\Big)\bigg]  \\
      & - \E_{v_{(t + 1, T)}}\bigg[
            \sum_{\tau = 1}^{t - 1} \Big(\hu_\tau(\bal_\tau, v_\tau; v_\tau)
            - s_\tau\Big)
          + \hu_t(\bal_t, v'_t; v_t) - s_t
          + \sum_{\tau = t + 1}^T \Big(\hu_\tau(\bal^{(t)}_\tau, v_\tau; v_\tau)
            - s^{(t)}_\tau\Big)\bigg]  \\
    =~& \E_{v_{(t + 1, T)}}\bigg[
            \hu_t(\bal_t, v_t; v_t) - \hu_t(\bal_t, v'_t; v_t)
          + \sum_{\tau = t + 1}^T \Big(\hu_\tau(\bal_\tau, v_\tau; v_\tau)
                                    - \hu_\tau(\bal^{(t)}_\tau, v_\tau; v_\tau)
          - s_\tau + s^{(t)}_\tau\Big)\bigg]  \\
    =~& \sum_{\tau = t + 1}^{T} \E_{v_{(t + 1, \tau - 1)}}\bigg[
          \E_{v_{\tau}}\Big[\hu_\tau(\bal_\tau, v_\tau; v_\tau)
                            - \hu_\tau(\bal^{(t)}_\tau, v_\tau; v_\tau)\Big]
          - s_\tau + s^{(t)}_\tau\bigg]
        + \hu_t(\bal_t, v_t; v_t) - \hu_t(\bal_t, v'_t; v_t).
    %
  \end{align*}

  Note that by (\ref{cond:spd}),
  \begin{align*}
    \E_{v_{\tau}}\Big[\hu_\tau(\bal_\tau, v_\tau; v_\tau)
                      - \hu_\tau(\bal^{(t)}_\tau, v_\tau; v_\tau)\Big]
    - s_\tau + s^{(t)}_\tau = 0,
  \end{align*}
  and by (\ref{cond:sic}),
  \begin{align*}
    \hu_t(\bal_t, v_t; v_t) - \hu_t(\bal_t, v'_t; v_t) \geq 0.
  \end{align*}

  Therefore we have
  \begin{align*}
    u_t(v_{(1, t)}; v_t) + U_t(v_{(1, t)})
      - \Big(u_t(v_{(1, t - 1)}, v'_t; v_t) + U_t(v_{(1, t - 1)}, v'_t)\Big)
    \geq 0,
  \end{align*}
  which is exactly the IC constraint (\ref{mech:ic}) for $M$, namely,
  (\ref{cond:sic})(\ref{cond:spd}) $\Infer$ (\ref{mech:ic}).
\end{proof}


\begin{proof}[Proof of Lemma \ref{lem:sconir}]

  From the balance update formula (\ref{eq:defbal}), we have
  \begin{align*}
      \sum_{\tau = 1}^T d_\tau(\bal_\tau, v_\tau)
    = \sum_{\tau = 1}^T \Big(\bal_{\tau + 1} - \bal_\tau
                             + s_\tau(\bal_\tau)\Big)
    = \bal_{T + 1} - \bal_1 + \sum_{\tau = 1}^T s_\tau(\bal_\tau).
  \end{align*}

  Then sum up (\ref{cond:ir}) from $1$ to $T$,
  \begin{align*}
      \sum_{\tau = 1}^T \hu_\tau(\bal_\tau, v_\tau; v_\tau)
    \geq \sum_{\tau = 1}^T d_\tau(\bal_\tau, v_\tau)
    = \bal_{T + 1} - \bal_1 + \sum_{\tau = 1}^T s_\tau(\bal_\tau)
    \geq \sum_{\tau = 1}^T s_\tau(\bal_\tau),
  \end{align*}
  where the last inequality is from the fact that $\bal_1 = 0$ (by definition),
  and $\bal_{T + 1} \geq 0$ ($d_t$ is non-negative, and $s_t$ is no more than
  $\bal_t$). Then by (\ref{eq:utlequiv}),
  \begin{align*}
    \forall v_{(1, T)} \in \dV_{(1, T)},~
    \sum_{\tau = 1}^T u_\tau(v_{(1, \tau)}; v_\tau) =
      \sum_{\tau = 1}^T \hu_\tau(\bal_\tau, v_\tau; v_\tau)
         - \sum_{\tau = 1}^T s_\tau(\bal_\tau) \geq 0
    \Infer \text{(\ref{mech:defir})}.
  \end{align*}

  Thus $M$ is IR. In other words, (\ref{cond:ir}) $\Infer$ (\ref{mech:defir}).

\end{proof}


\subsection{Proof of Lemma \ref{lem:mechtrans}}\label{app:lem:mechtrans}

\begin{proof}[Proof of Lemma \ref{lem:mechtrans}]
  Without loss of generality\footnote{For $M$ which is not that case, construct
  another $\overline{M}$ with the same allocation rules, but redefine payment
  rules as: for $t < T$, $\overline{p}_t(v_{(1, t)}) = \overline{x}_t(v_{(1,
  t)}) \cdot v_t$; for $t = T$, $\overline{p}_T(v_{(1, T)}) =
  \overline{x}_T(v_{(1, T)}) \cdot v_T - \Utl(M | v_{(1, T)})$. Note that
  $\overline{M}$ is IC and IR, and has the same allocation, overall utility,
  and overall revenue for any $v_{(1, T)}$.}, suppose that
  \begin{align}\label{eq:asspayb}
    \forall t < T, v_{(1, t)} \in \dV_{(1, t)},~u_t(v_{(1, t)}; v_t) = 0.
  \end{align}

  Consider the following iterative procedure that constructs a sequence of
  direct mechanisms, $M^{(1)}, \ldots, M^{(T)}$, where $M^{(t)} = \left\langle
  x^{(t)}_{(1, T)}, p^{(t)}_{(1, T)} \right\rangle$. We start with $M^{(1)} =
  M$, and end up with $M^{(T)} = M'$ as the desirable symmetric direct
  mechanism.

  Given that we have constructed $M^{(1)}, \ldots, M^{(t)}$, we define the
  equivalence relation $\sim_{(t)}$ between any two histories as follows,
  \begin{align}\label{eq:equivrelt}
    v_{(1, t')} \sim_{(t)} v'_{(1, t')} \iff
        \Utl(M^{(t)} | v_{(1, t')}) = \Utl(M^{(t)} | v'_{(1, t')}),
  \end{align}
  and denote the expected revenue of future stages as
  $\pi^{(t)}_{t'}(v_{(1, t')})$. Formally,
  \begin{align*}
    \pi^{(t)}_{t'}(v_{(1, t')}) = \E_{v_{(t' + 1, T)}} \left[
      \sum_{\tau = t' + 1}^T p^{(t)}_\tau(v_{(1, \tau)})\right].
  \end{align*}

  If $T = 1$, $M = M^{(1)} = M'$.

  For $T > 1$, we construct $M^{(t + 1)}$ from $M^{(t)}$ as follows.
  \begin{itemize}
    \item For any $\tau \leq t$, $v_{(1, \tau)} \in \dV_{(1, \tau)}$,
          $M^{(t + 1)}$ is the same with $M^{(t)}$,
          \begin{align}\label{eq:pfdmech1}
            x^{(t + 1)}_\tau(v_{(1, \tau)}) = x^{(t)}_\tau(v_{(1, \tau)}),~
            p^{(t + 1)}_\tau(v_{(1, \tau)}) = p^{(t)}_\tau(v_{(1, \tau)}).
          \end{align}
    \item For any $\tau \geq t + 1$, $v_{(1, \tau)} \in \dV_{(1, \tau)}$,
          $M^{(t + 1)}$ simulates what $M^{(t)}$ does at history $v^*_{(1, t)}$,
          \begin{align}\label{eq:pfdmech2}
            x^{(t + 1)}_\tau(v_{(1, \tau)})
              = x^{(t)}_\tau(v^*_{(1, t)}, v_{(t + 1, \tau)}),~
            p^{(t + 1)}_\tau(v_{(1, \tau)})
              = p^{(t)}_\tau(v^*_{(1, t)}, v_{(t + 1, \tau)}),
          \end{align}
          where $v^*_{(1, t)} = h^*_t(v_{(1, t)}) \in \dV_{(1, t)}$ is the
          selected representative history for the equivalence class $[v_{(1,
          t)}]_{(t)}$ defined by the equivalence relation
          (\ref{eq:equivrelt}), in order to satisfy
          \begin{itemize}
            \item the representative history $v^*_{(1, t)}$ is also in this
                  equivalence class $[v_{(1, t)}]_{(t)}$,
                  \begin{align}\label{eq:mechtutl}
                    \Utl(M^{(t)} | v^*_{(1, t)}) = \Utl(M^{(t)} | v_{(1, t)});
                  \end{align}
            \item the selection of the representative history is unique within
                  each equivalnce class,
                  \begin{align}\label{eq:mechtsym}
                    v_{(1, t)} \sim_{(t)} v'_{(1, t)} \Infer
                      h^*_t(v_{(1, t)}) = h^*_t(v'_{(1, t)});
                  \end{align}
            \item the seller's revenue at the representative history $v^*_{(1,
                  t)}$ is no less than the expected revenue in this equivalence
                  class,
                  \begin{align}\label{eq:mechtrev}
                    \pi^{(t)}_t(v^*_{(1, t)}) \geq
                      \E_{v'_{(1, t)}}\Big[\pi^{(t)}_t(v'_{(1, t)})
                      \big| v'_{(1, t)} \sim_{(t)} v_{(1, t)}\Big].
                  \end{align}
          \end{itemize}
  \end{itemize}

  We finally verify that the so constructed mechanism $M' = M^{(T)}$ satisfies
  the desirable properties by induction: (i) $M'$ is symmetric; (ii) $M'$
  guarantees the same buyer utility as $M$ does, and generates revenue no less
  than $M$ does.

  For the inductive cases, we need to prove the following two properties, where
  (\ref{itm:induc1}) is identical with property (i) when $t = T$, and property
  (ii) is implied if (\ref{itm:induc2}) holds for $t \leq T$.
  \begin{enumerate}
    \item \label{itm:induc1}
          $M^{(t)}$ is symmetric for any pair of equivalent histories of length
          at most $t - 1$, i.e.,
          \begin{align}\label{eq:inductsymm}
            &\forall \tau \leq t - 1, v_{(1, \tau)}, v'_{(1, \tau)} \in
              \dV_{(1, \tau)},~s.t.~v_{(1, \tau)} \sim_{(t)} v'_{(1, \tau)},
            \nonumber \\
            \Infer~& \forall \tau < t' \leq T,
                             v_{(\tau + 1, t')} \in \dV_{(\tau + 1, t')},~
              \left\{\begin{array}{l}
                x^{(t)}_{t'}(v_{(1, t')})
                  = x^{(t)}_{t'}(v'_{(1, \tau)}, v_{(\tau + 1, t')})  \\
                p^{(t)}_{t'}(v_{(1, t')})
                  = p^{(t)}_{t'}(v'_{(1, \tau)}, v_{(\tau + 1, t')})
              \end{array}\right.,
          \end{align}
    \item \label{itm:induc2}
          $\Utl(M^{(t)}) = \Utl(M^{(t - 1)})$, $\Rev(M^{(t)}) \geq
          \Rev(M^{(t - 1)})$.
  \end{enumerate}

  Base case: for $t = 1$, the properties hold trivially.

  Assume by induction that the above two properties (\ref{itm:induc1}) and
  (\ref{itm:induc2}) hold for $t = \frT < T$, we prove that for $t = \frT + 1$.

  {\bf \noindent Property (\ref{itm:induc2}) holds.} We first show that
  $M^{(\frT + 1)}$ guarantees the same buyer utility as $M^{(\frT)}$ does.

  For any history $v_{(1, \frT)} \in \dV_{(1, \frT)}$, by the definition of the
  conditional expected utility,
  \begin{align}\label{eq:prfmcutmp}
      \Utl(M^{(\frT + 1)} | v_{(1, \frT)})
    &= \E_{v_{(\frT + 1, T)}}\left[\sum_{\tau = 1}^T
                  u^{(\frT + 1)}_\tau(v_{(1, \tau)}; v_\tau)\right]
       \nonumber  \\
    &= \sum_{\tau = 1}^{\frT} u_\tau(v_{(1, \tau)}; v_\tau)
         + \E_{v_{(\frT + 1, T)}}\left[\sum_{\tau = \frT + 1}^T
                      u^{(\frT + 1)}_\tau(v_{(1, \tau)}; v_\tau)\right]
       \nonumber  \\
    &= \sum_{\tau = 1}^{\frT} u^{(\frT + 1)}_\tau(v_{(1, \tau)}; v_\tau)
         + U^{(\frT + 1)}_\frT(v_{(1, \frT)}).
  \end{align}

  Recall that by assumption (\ref{eq:asspayb}), for $t < T$,
  \begin{align*}
    x_t(v_{(1, t)}) \cdot v_t - p_t(v_{(1, t)}) = u_t(v_{(1, t)}; v_t) = 0.
  \end{align*}

  According to construction (\ref{eq:pfdmech1}) and (\ref{eq:pfdmech2}), it is
  not hard to see that for $\tau \leq \frT < T$, there exists some $v'_{(1,
  \tau - 1)} \in \dV_{(1, \tau - 1)}$ such that,
  \begin{align*}
    u^{(\frT + 1)}_\tau(v_{(1, \tau)}; v_\tau)
      &= x^{(\frT + 1)}_\tau(v_{(1, \tau)}) \cdot v_\tau
          - p^{(\frT + 1)}_\tau(v_{(1, \tau)})  \\
      &= x_\tau(v'_{(1, \tau - 1)}, v_\tau) \cdot v_\tau
          - p_\tau(v'_{(1, \tau - 1)}, v_\tau) = 0.
  \end{align*}

  Combining (\ref{eq:prfmcutmp}), we have
  \begin{align}\label{eq:prfmcu}
    \Utl(M^{(\frT + 1)} | v_{(1, \frT)}) = U^{(\frT + 1)}_\frT(v_{(1, \frT)}).
  \end{align}

  By construction (\ref{eq:pfdmech2}), for $\tau \geq \frT + 1$,
  $v^*_{(1, \frT)} = h^*_\frT(v_{(1, \frT)})$ being the representative history,
  \begin{align*}
    u^{(\frT + 1)}_\tau(v_{(1, \tau)}; v_\tau)
      &= x^{(\frT + 1)}_\tau(v_{(1, \tau)}) \cdot v_\tau
          - p^{(\frT + 1)}_\tau(v_{(1, \tau)})  \\
      &= x^{(\frT)}_\tau(v^*_{(1, \frT)}, v_{(\frT + 1, \tau)}) \cdot v_\tau
          - p^{(\frT)}_\tau(v^*_{(1, \frT)}, v_{(\frT + 1, \tau)})
       =  u^{(\frT)}_\tau(v^*_{(1, \frT)}, v_{(\frT + 1, \tau)}; v_\tau).
  \end{align*}

  Hence
  \begin{align*}
      U^{(\frT + 1)}_\frT(v_{(1, \frT)})
    &= \E_{v_{(\frT + 1, T)}}\left[\sum_{\tau = \frT + 1}^T
                 u^{(\frT + 1)}_\tau(v_{(1, \tau)}; v_\tau)\right]  \\
    &= \E_{v_{(\frT + 1, T)}}\left[\sum_{\tau = \frT + 1}^T
          u^{(\frT)}_\tau(v^*_{(1, \frT)}, v_{(\frT + 1, \tau)}; v_\tau)\right]
     = U^{(\frT)}_\frT(v^*_{(1, \frT)}).
  \end{align*}

  Similar with (\ref{eq:prfmcu}), we have
  \begin{align*}
    \Utl(M^{(\frT)} | v^*_{(1, \frT)}) = U^{(\frT)}_\frT(v^*_{(1, \frT)}).
  \end{align*}

  Combine (\ref{eq:prfmcu}) and the two equations above,
  \begin{align*}
      \Utl(M^{(\frT + 1)} | v_{(1, \frT)}) = U^{(\frT + 1)}_\frT(v_{(1, \frT)})
      = U^{(\frT)}_\frT(v^*_{(1, \frT)}) = \Utl(M^{(\frT)} | v^*_{(1, \frT)}).
  \end{align*}

  Furthermore, by construction (\ref{eq:mechtutl}), we conclude that
  \begin{align}\label{eq:prfmechutl}
    \forall v_{(1, \frT)} \in \dV_{(1, \frT)},~
      \Utl(M^{(\frT + 1)} | v_{(1, \frT)}) = \Utl(M^{(\frT)} | v^*_{(1, \frT)})
          = \Utl(M^{(\frT)} | v_{(1, \frT)}).
  \end{align}

  Then (\ref{eq:prfmechutl}) implies $\Utl(M^{(\frT + 1)}) = \Utl(M^{(\frT)})$,
  i.e.,
  \begin{align*}
      \Utl(M^{(\frT + 1)})
    = \E_{v_{(1, \frT)}}\left[\Utl\left(M^{(\frT + 1)}
                                        \big| v_{(1, \frT)}\right)\right]
    = \E_{v_{(1, \frT)}}\left[\Utl\left(M^{(\frT)}
                                        \big| v_{(1, \frT)}\right)\right]
    = \Utl(M^{(\frT)}).
  \end{align*}

  We then show that $M^{(\frT + 1)}$ generates revenue no less than $M^{(\frT)}$
  does.

  Recall that by construction (\ref{eq:pfdmech1}), for $\tau \leq \frT$,
  \begin{align*}
    p^{(\frT + 1)}_\tau(v_{(1, \tau)}) = p^{(\frT)}_\tau(v_{(1, \tau)}),
  \end{align*}
  and by construction (\ref{eq:pfdmech2}), for $\tau \geq \frT + 1$,
  \begin{align*}
    p^{(\frT + 1)}_\tau(v_{(1, \tau)})
      = p^{(\frT)}_\tau(v^*_{(1, \frT)}, v_{(\frT + 1, \tau)}).
  \end{align*}

  Then according to the definition of $\pi$, we have
  \begin{align*}
      \pi^{(\frT + 1)}_\frT(v_{(1, \frT)})
    &= \E_{v_{(\frT + 1, T)}}\left[
         \sum_{\tau = \frT + 1}^T p^{(\frT + 1)}_\tau(v_{(1, \tau)})\right]
         \nonumber  \\
    &= \E_{v_{(\frT + 1, T)}}\left[\sum_{\tau = \frT + 1}^T
         p^{(\frT)}_\tau(v^*_{(1, \frT)}, v_{(\frT + 1, \tau)})\right]
     = \pi^{(\frT)}_\frT(v^*_{(1, \frT)}).
  \end{align*}

  Moreover, by definition,
  \begin{align}\label{eq:prfmechrev}
       \Rev(M^{(\frT + 1)})
    &= \E_{v_{(1, T)}}\left[\sum_{\tau = 1}^T
              p^{(\frT + 1)}_\tau(v_{(1, \tau)})\right] \nonumber  \\
    &= \E_{v_{(1, \frT)}}\left[\sum_{\tau = 1}^\frT
              p^{(\frT + 1)}_\tau(v_{(1, \tau)})
          + \E_{v_{(\frT + 1, T)}}\left[\sum_{\tau = \frT + 1}^T
                  p^{(\frT + 1)}_\tau(v_{(1, \tau)})\right]\right] \nonumber  \\
    &= \E_{v_{(1, \frT)}}\left[\sum_{\tau = 1}^\frT
              p^{(\frT + 1)}_\tau(v_{(1, \tau)})
          + \pi^{(\frT + 1)}_\frT(v_{(1, \frT)})\right] \nonumber  \\
    &= \E_{v_{(1, \frT)}}\left[\sum_{\tau = 1}^\frT
              p^{(\frT)}_\tau(v_{(1, \tau)})
          + \pi^{(\frT)}_\frT\left(h^*_\frT(v_{(1, \frT)})\right)\right]
  \end{align}

  For any fixed $v_{(1, \frT)} \in \dV_{(1, \frT)}$, by the property
  (\ref{eq:mechtrev}) of the representative history $v^*_{(1, \frT)} =
  h^*_\frT\left(v_{(1, \frT)}\right)$, we have
  \begin{align}\label{eq:prfmechrevpig}
       \E_{v_{(1, \frT)}}\left[\pi^{(\frT)}_\frT\left(
                               h^*_\frT(v_{(1, \frT)})\right)\right]
    &\geq \E_{v_{(1, \frT)}}\left[
            \E_{v'_{(1, \frT)}}\left[\pi^{(\frT)}_\frT(v'_{(1, \frT)})
                                     \Big| v'_{(1, \frT)} \sim_{(\frT)}
                                           v_{(1, \frT)}\right]\right]
       \nonumber  \\
    &= \E_{v'_{(1, \frT)}}\left[
            \E_{v_{(1, \frT)}}\left[\pi^{(\frT)}_\frT(v'_{(1, \frT)})
                                     \Big| v'_{(1, \frT)} \sim_{(\frT)}
                                           v_{(1, \frT)}\right]\right]
       \nonumber  \\
    &= \E_{v'_{(1, \frT)}}\left[\pi^{(\frT)}_\frT(v'_{(1, \frT)})\right]
     = \E_{v_{(1, \frT)}}\left[\pi^{(\frT)}_\frT(v_{(1, \frT)})\right].
  \end{align}

  By similar argument with (\ref{eq:prfmechrev}),
  \begin{align*}
    \Rev(M^{(\frT)}) = \E_{v_{(1, \frT)}}\left[\sum_{\tau = 1}^\frT
      p^{(\frT)}_\tau(v_{(1, \tau)}) + \pi^{(\frT)}_\frT(v_{(1, \frT)})\right].
  \end{align*}

  Then combining (\ref{eq:prfmechrev}) and (\ref{eq:prfmechrevpig}),
  \begin{align*}
      \Rev(M^{(\frT + 1)}) - \Rev(M^{(\frT)})
    = \E_{v_{(1, \frT)}}\left[\pi^{(\frT)}_\frT\left(
                            h^*_\frT(v_{(1, \frT)})\right)\right]
      - \E_{v_{(1, \frT)}}\left[\pi^{(\frT)}_\frT(v_{(1, \frT)})\right]
    \geq 0.
  \end{align*}

  In other words, $\Rev(M^{(\frT + 1)}) \geq \Rev(M^{(\frT)})$.

  {\bf \noindent Proterty (\ref{itm:induc1}) holds.}

  Since the goal (\ref{eq:inductsymm}) has mathematically identical formulation
  on $x$ and $p$, we prove (\ref{eq:inductsymm}) for $x$ only.

  Consider the following cases:
  \begin{itemize}
    \item For $t' \leq \frT$, it follows from the construction
          (\ref{eq:pfdmech1}) and the inductive assumption (\ref{eq:inductsymm})
          as follows,
          \begin{align*}
            & \forall \tau < t' \leq \frT, v_{(1, t')} \in \dV_{(1, t')},
                      v'_{(1, \tau)} \sim_{(\frT)} v_{(1, \tau)},  \\
            & x^{(\frT + 1)}_{t'}(v_{(1, t')}) = x^{(\frT)}_{t'}(v_{(1, t')})
                = x^{(\frT)}_{t'}(v'_{(1, \tau)}, v_{(\tau + 1, t')})
                = x^{(\frT + 1)}_{t'}(v'_{(1, \tau)}, v_{(\tau + 1, t')}).
          \end{align*}
          Note that the second ``$=$'' comes from the inductive assumption
          (\ref{eq:inductsymm}) on $t = \frT$, where $\tau \leq t - 1 = \frT -
          1$ is required.
    \item For $t' > \frT$ and $\tau < \frT$, we show that $v_{(1, \tau)}
          \sim_{(\frT + 1)} v'_{(1, \tau)}$ implies
          \begin{align}\label{eq:prfmechcaseredg}
            v_{(1, \frT)} \sim_{(\frT + 1)}
                  \left(v'_{(1, \tau)}, v_{(\tau + 1, \frT)}\right),
          \end{align}
          and it reduces to the following $\tau = \frT$ case then.

          Recall by (\ref{eq:prfmechutl}), $\Utl(M^{(\frT + 1)} | v_{(1, \frT)})
          = \Utl(M^{(\frT)} | v_{(1, \frT)})$. Hence by the definition of
          conditional expected utility,
          \begin{align*}
            \Utl(M^{(\frT + 1)} | v_{(1, \tau)})
              &= \E_{v_{(\tau + 1, \frT)}}\left[
                   \Utl(M^{(\frT + 1)} | v_{(1, \frT)})\right]  \\
              &= \E_{v_{(\tau + 1, \frT)}}\left[
                   \Utl(M^{(\frT)} | v_{(1, \frT)})\right]
               = \Utl(M^{(\frT)} | v_{(1, \tau)}).
          \end{align*}
          Then $v_{(1, \tau)} \sim_{(\frT + 1)} v'_{(1, \tau)}$ implies
          $v_{(1, \tau)} \sim_{(\frT)} v'_{(1, \tau)}$, since
          \begin{align*}
            \Utl(M^{(\frT)} | v_{(1, \tau)})
              &= \Utl(M^{(\frT + 1)} | v_{(1, \tau)})  \\
              &= \Utl(M^{(\frT + 1)} | v'_{(1, \tau)})
               = \Utl(M^{(\frT)} | v'_{(1, \tau)}).
          \end{align*}
          By inductive assumption (\ref{eq:inductsymm}) (on $t = \frT$), the
          submechanisms of $M^{(\frT)}$ at history $v_{(1, \tau)}$ and $v'_{(1,
          \tau)}$ are identical\footnote{Note that the demonstration based on
          the inductive assumption on $t = \frT$ only works for $\tau \leq t -
          1 = \frT - 1$, which is certainly true when the integer $\tau <
          \frT$.}. Therefore\footnote{By assumption (\ref{eq:asspayb}) and
          construction (\ref{eq:pfdmech1}) and (\ref{eq:pfdmech2}),
          $\Utl(M^{(\frT)} | v_{(1, T)}) = u^{(\frT)}_T(v_{(1, T)}; v_T)$,
          which equals to $u^{(\frT)}_T(v'_{(1, \tau)}, v_{(\tau + 1, T)}; v_T)$
          because these two submechanisms are identical.}, for any
          $v_{(\tau + 1, T)} \in \dV_{(\tau + 1, T)}$,
          \begin{align*}
            \Utl(M^{(\frT)} | v_{(1, T)})
              = \Utl(M^{(\frT)} | v'_{(1, \tau)}, v_{(\tau + 1, T)}).
          \end{align*}
          It then suggests that $v_{(1, \frT)}$ and $(v'_{(1, \tau)},
          v_{(\tau + 1, \frT)})$ are equivalent with respect to the equivalence
          relation $\sim_{(\frT)}$, i.e.,
          \begin{align*}
              \Utl(M^{(\frT)} | v_{(1, \frT)})
            &= \E_{v_{(\frT + 1, T)}}\left[\Utl(M^{(\frT)} | v_{(1, T)})\right]
              \\
            &= \E_{v_{(\frT + 1, T)}}\left[\Utl(M^{(\frT)} |
                                      v'_{(1, \tau)}, v_{(\tau + 1, T)})\right]
             = \Utl(M^{(\frT)} | v'_{(1, \tau)}, v_{(\tau + 1, T)}).
          \end{align*}
          Again, by (\ref{eq:prfmechutl}),
          \begin{align*}
            \Utl(M^{(\frT + 1)} | v_{(1, \frT)})
              &= \Utl(M^{(\frT)} | v_{(1, \frT)})  \\
              &= \Utl(M^{(\frT)} | v'_{(1, \tau)}, v_{(\tau + 1, T)})
               = \Utl(M^{(\frT + 1)} | v'_{(1, \tau)}, v_{(\tau + 1, T)}).
          \end{align*}
          Hence we have proved (\ref{eq:prfmechcaseredg}), and it reduces to
          the following $\tau = \frT$ case.
    \item For $t' > \frT$ and $\tau = \frT$, $\forall v_{(1, \frT)}
          \sim_{(\frT + 1)} v'_{(1, \frT)}$, by (\ref{eq:prfmechutl}), we have
          $v_{(1, \frT)} \sim_{(\frT)} v'_{(1, \frT)}$.

          Then by the uniqueness (\ref{eq:mechtsym}) of the representative
          history, we have
          \begin{align*}
            h^*_\frT(v_{(1, \frT)}) = h^*_\frT(v'_{(1, \frT)})
                                    = v^*_{(1, \frT)}.
          \end{align*}
          According to construction (\ref{eq:pfdmech2}), for any $v_{(\frT + 1,
          t')} \in \dV_{(\frT + 1, t')}$,
          \begin{align*}
              x^{(\frT + 1)}_{t'}(v'_{(1, \frT)}, v_{(\frT + 1, t')})
            = x^{(\frT)}_{t'}(v^*_{(1, \frT)}, v_{(\frT + 1, t')})
            = x^{(\frT + 1)}_{t'}(v_{(1, \frT)}, v_{(\frT + 1, t')}).
          \end{align*}
  \end{itemize}

  Therefore, we have finished the proof of the inductive properties
  (\ref{itm:induc1}) and (\ref{itm:induc2}) for $t = \frT + 1$.

  Hence $M^{(\frT + 1)}$ is symmetric for any pair of equivalent histories of
  length at most $\frT$. Therefore, by induction, $M' = M^{(T)}$ is symmetric
  and,
  \begin{align*}
    & \Utl(M') = \Utl(M^{(T)}) = \cdots = \Utl(M^{(1)}) = \Utl(M)  \\
    & \Rev(M') \geq \Rev(M^{(T)}) \geq \cdots \geq \Rev(M^{(1)}) \geq \Rev(M).
  \end{align*}
  
  In particular, by (\ref{eq:pfdmech1}) and (\ref{eq:pfdmech2}), if $M$ is
  deterministic, $M'$ is also deterministic.
\end{proof}

\subsection{Proof of Lemma \ref{lem:symmstrong}}\label{app:lem:symmstrong}

\begin{proof}[Proof of Lemma \ref{lem:symmstrong}]
  %
  To prove that $B^{g, y}$ is a BAM, we must verify the following,
  \begin{itemize}
    \item $B^{g, y}$ is a valid bank account mechanism:
          \begin{itemize}
            \item $B^{g, y} = \langle z_{(1, T)}, q_{(1, T)}, \bal_{(1, T)},
                  d_{(1, T)}, s_{(1, T)} \rangle$ are defined on their domains.
            \item $q_t$ and $d_t$ are always non-negative.
            \item $s_t(\bal_t)$ is independent with $v_t$, and no more than
                  $\bal_t$.
            \item The balance is updated according to formula
                  (\ref{eq:defbal}).
          \end{itemize}
    \item $B^{g, y}$ is a BAM: IC and IR constraints
          (\ref{cond:sic})-(\ref{cond:ir}) are satisfied.
  \end{itemize}

  We now verify these statements one by one.
  \begin{itemize}
    \item Since $M$ is symmetric, then by definition,
          \begin{align*}
            \forall t \in [T], v_{(1, t)}, v'_{(1, t)} \in \dV_{(1, t)},~
            \Utl(M | v_{(1, t - 1)}) = \Utl(M | v'_{(1, t - 1)})
                \Infer x_t(v_{(1, t)}) = x_t(v'_{(1, t - 1)}, v_t).
          \end{align*}

          Combining with the construction of $\bal_t$ and $z_t$ (see
          (\ref{eq:bbal}) and (\ref{eq:ball})), we have,
          \begin{align*}
                  &~\bal_t(v_{(1, t - 1)}) = \bal'_t(v_{(1, t - 1)})  \\
            \Infer&~
              \Utl(M | v_{(1, t - 1)}) = \bal_t(v_{(1, t - 1)}) + \mu_{t - 1}
                                       = \bal_t(v'_{(1, t - 1)}) + \mu_{t - 1}
                                       = \Utl(M | v'_{(1, t - 1)})  \\
            \Infer&~
              z_t\left(\bal_t(v_{(1, t - 1)}), v_t\right)
                  = x_t(v_{(1, t)})
                  = x_t(v'_{(1, t - 1)}, v_t)
                  = z_t\left(\bal_t(v'_{(1, t - 1)}), v_t\right),
          \end{align*}
          which then suggests that $z_t(\bal_t, v_t)$ is a valid function,
          since for any input $(\bal_t, v_t)$, $z_t$ is never assigned two
          different values.

          Then it is also true for $q_t$ and $d_t$, since they are constructed
          based on $z_t$ (see (\ref{eq:bpay}) and (\ref{eq:bdps})).

          Also note that by construction (\ref{eq:bbal}), $\bal_t$ is always
          non-negative. Hence $z_t$, $q_t$, $d_t$, and $s_t$ are defined on
          their domains (the first parameter is the balance --- a non-negative
          real, and the second parameter is the value $v_t$).
    \item By the construction of $z_t$, (\ref{eq:ball}), $z_t(\bal_t, v_t) =
          y_t(v_{(1, t)})$. Then by the statement (\ref{eq:lemsymmstat}) in
          Lemma \ref{lem:symmstrong}, $z_t(\bal_t, v_t) = y_t(v_{(1, t)}) =
          x_t(v_{(1, t)})$. Since the symmetric direct mechanism $M$ satisfies
          IC condition (\ref{mech:ic}), $x_t$ must be (weakly) increasing in
          $v_t$.

          By the construction of $q_t$, (\ref{eq:bpay}), we have
          \begin{align*}
            q_t(\bal_t, v_t) = z_t(\bal_t, v_t) \cdot v_t
                                - \int_{\Zero}^{v_t} z_t(\bal_t, v) \ud v
                             \geq 0.
          \end{align*}

          By the construction of $d_t$, (\ref{eq:bdps}), we have
          \begin{align*}
            d_t(\bal_t, v_t) = \hu_t(\bal_t, v_t; v_t)
                             = \int_{\Zero}^{v_t} z_t(\bal_t, v) \ud v
                             \geq 0.
          \end{align*}

          Hence both of $q_t$ and $d_t$ are always non-negative.
    \item We claim that according to the construction of $B^{g, y}$, $M$ is IC
          implies $\partial s_t(\bal_t) / \partial v_t = \Zero$.

          Formally, from the construction of $s_t$ (\ref{eq:bspd}),
          \begin{align*}
            s_t(\bal_t) = \bal_t + d_t(\bal_t, v_t) - \bal_{t + 1}.
          \end{align*}
          By plugging the construction of $d_t$ and $\bal_{t + 1}$ into the
          formula above, we have
          \begin{align*}
            s_t(\bal_t) &= \bal_t + \hu_t(\bal_t, v_t; v_t)
                             - g_t(v_{(1, t)}) + \mu_t  \\
                        &= \bal_t + \hu_t(\bal_t, v_t; v_t)
                             - \left(\Utl(M | v_{(1, t)}) - \mu_t\right),
          \end{align*}
          where the second ``$=$'' is from the statement (\ref{eq:lemsymmstat}).

          Then by taking partial derivative with respect to $v_t$ on both
          sides, we have
          \begin{align}\label{eq:prfsymmspdpd}
              \frac{\partial s_t(\bal_t)}{\partial v_t}
            = \frac{\partial \bal_t}{\partial v_t}
              + \frac{\partial \hu_t(\bal_t, v_t; v_t)}{\partial v_t}
              - \frac{\partial \left(\Utl(M | v_{(1, t)}) - \mu_t\right)}
                     {\partial v_t}.
          \end{align}

          By the construction of $\bal_t(v_{(1, t - 1)})$, (\ref{eq:bbal}),
          $\bal_t$ is independent with $v_t$, hence $\partial \bal_t / \partial
          v_t = \Zero$.

          Since $\hu_t(\bal_t, v_t; v_t) = \int_{\Zero}^{v_t} z_t(\bal_t, v)
          \ud v$, $\partial \hu_t(\bal_t, v_t; v_t) / \partial v_t =
          z_t(\bal_t, v_t)$.

          Note that for $\tau \leq t - 1$, $u_\tau(\bal_\tau, v_\tau; v_\tau)$
          is independent with $v_t$, and $\mu_t = \inf_{v'_{(1, t)}}
          g_t(v'_{(1, t)})$ is also independent with $v_t$. Hence
          \begin{align*}
              \frac{\partial \left(\Utl(M | v_{(1, t)}) - \mu_t\right)}
                   {\partial v_t}
            &=\frac{\partial \left(\sum_{\tau = 1}^t
                                      u_\tau(\bal_\tau, v_\tau; v_\tau)
                                    + U_t(v_{(1, t)}) - \mu_t\right)}
                   {\partial v_t}  \\
            &=\frac{\partial \left(u_t(\bal_t, v_t; v_t)
                                    + U_t(v_{(1, t)})\right)}{\partial v_t},
          \end{align*}
          which then by (\ref{eq:envic}) equals to $x_t(v_{(1, t)})$.

          Now (\ref{eq:prfsymmspdpd}) reduces to the follows,
          \begin{align*}
              \frac{\partial s_t(\bal_t)}{\partial v_t}
            &=\frac{\partial \bal_t}{\partial v_t}
              + \frac{\partial \hu_t(\bal_t, v_t; v_t)}{\partial v_t}
              - \frac{\partial \left(\Utl(M | v_{(1, t)}) - \mu_t\right)}
                   {\partial v_t}  \\
            &=\Zero + z_t(\bal_t, v_t) - x_t(v_{(1, t)}) = \Zero.
          \end{align*}
          Recall that by construction, $z_t(\bal_t, v_t) = x_t(v_{(1, t)})$.

          Hence $s_t$ constructed as (\ref{eq:bspd}) is a function of $\bal_t$.
          By letting $v_t = \Zero$, we show that $s_t(\bal_t)$ is no more than
          $\bal_t$,
          \begin{align*}
            s_t(\bal_t) = \bal_t + d_t(\bal_t, \Zero)
                              - \bal_{t + 1}(\bal_t, \Zero)
                        \leq \bal_t + d_t(\bal_t, \Zero) = \bal_t,
          \end{align*}
          where $d_t(\bal_t, \Zero) = \hu_t(\bal_t, \Zero; \Zero) = 0$.
    \item The balance update formula (\ref{eq:defbal}) is directly implied by
          the construction of $s_t$, (\ref{eq:bspd}).
    \item As we mentioned previously, $z_t(\bal_t, v_t) = x_t(v_{(1, t)})$ is
          weakly increasing in $v_t$. Then by Envelope theorem, the construction
          of $q_t$, (\ref{eq:bpay}), implies the in-stage IC condition
          (\ref{cond:sic}).
    \item According to the construction of $B^{g, y}$, the fact that
          $s_t(\bal_t)$ is independent with $v_t$ implies another IC condition
          (\ref{cond:spd}). Formally, by adopting the construction of $s_t$,
          $\bal_t$, and $d_t$, we have
          \begin{align*}
               s_t(\bal_t)
            &= \bal_t + d_t(\bal_t, v_t) - \bal_{t + 1}  \\
            &= \Utl(M | v_{(1, t - 1)}) - \mu_{t - 1} + \hu_t(\bal_t, v_t; v_t)
               - \left(\Utl(M | v_{(1, t)}) - \mu_t\right)  \\
            &= \Utl(M | v_{(1, t - 1)}) - \Utl(M | v_{(1, t)})
               + \hu_t(\bal_t, v_t; v_t) - \mu_{t - 1} + \mu_t.
          \end{align*}

          Taking expectation over $v_t$ on both sides, we have
          \begin{align*}
               s_t(\bal_t)
            &= \E_{v_t}\left[s_t(\bal_t)\right]  \\
            &= \E_{v_t} \left[\Utl(M | v_{(1, t - 1)}) - \Utl(M | v_{(1, t)})
                        + \hu_t(\bal_t, v_t; v_t) - \mu_{t - 1} + \mu_t\right]
               \\
            &= \E_{v_t} \big[\hu_t(\bal_t, v_t; v_t)\big] - \mu_{t - 1} + \mu_t,
          \end{align*}
          which implies the IC condition (\ref{cond:spd}), because $\mu_{t - 1}$
          and $\mu_t$ are two constants with respect to $\bal_t$ and
          $v_{(1, t)}$.

          Note that the first ``$=$'' comes from the fact that $s_t(\bal_t)$ is
          independent with $v_t$. For the second ``$=$'', we use the fact that
          \begin{align*}
               \E_{v_t}\left[\Utl(M | v_{(1, t)})\right]
            &= \E_{v_t}\left[\E_{v_{(t + 1, T)}}
                              \left[\Utl(M | v_{(1, T)})\right]\right]  \\
            &= \E_{v_{(t, T)}}\left[\Utl(M | v_{(1, T)})\right]  \\
            &= \Utl(M | v_{(1, t - 1)}).
          \end{align*}
    \item Finally, the construction of $d_t$, (\ref{eq:bdps}), directly implies
          the IR condition (\ref{cond:ir}).
  \end{itemize}

  Note that $M$ and $B^{g, y}$ are equivalent in allocation rule,
  $x_t(v_{(1, t)}) = z_t(\bal_t, v_t)$. Hence $\Eff(B^{g, y}) = \Eff(M)$.
  Then we show that they are also equivalent on buyer utility to conclude that
  they are also equivalent on revenue.

  Recall that by definition,
  \begin{align*}
      \Utl(B^{g, y} | v_{(1, T)})
    &=\sum_{\tau = 1}^T z_\tau(\bal_\tau, v_\tau) \cdot v_\tau
        - \sum_{\tau = 1}^T q_t(\bal_\tau, v_\tau)
        - \sum_{\tau = 1}^T s_\tau(\bal_\tau)  \\
    &=\sum_{\tau = 1}^T \Big(\big(z_\tau(\bal_\tau, v_\tau) \cdot v_\tau
                                  - q_t(\bal_\tau, v_\tau)\big)
                             - s_\tau(\bal_\tau)\Big)  \\
    &=\sum_{\tau = 1}^T \big(\hu_\tau(\bal_\tau, v_\tau; v_\tau)
                             - s_\tau(\bal_\tau)\big).
  \end{align*}

  Furthermore, by the construction of $d_t$ and $s_t$ (see (\ref{eq:bdps}) and
  (\ref{eq:bspd})), we have
  \begin{align*}
      \Utl(B^{g, y} | v_{(1, T)})
    &=\sum_{\tau = 1}^T \big(\hu_\tau(\bal_\tau, v_\tau; v_\tau)
                             - s_\tau(\bal_\tau)\big)  \\
    &=\sum_{\tau = 1}^T \Big(d_\tau(\bal_\tau, v_\tau)
                             - \big(\bal_\tau + d_\tau(\bal_\tau, v_\tau)
                                    - \bal_{\tau + 1}\big)\Big)  \\
    &=\sum_{\tau = 1}^T \big(\bal_{\tau + 1} - \bal_\tau\big)  \\
    &=\bal_{T + 1} - \bal_1.
  \end{align*}

  Note that by definition, $\bal_1 = 0$, and
  \begin{align*}
    \bal_{T + 1} = \Utl(M | v_{(1, T)}) - \mu_T
                 = \Utl(M | v_{(1, T)})
                   - \min\left\{0, \inf_{v'_{(1, T)}}
                                      \Utl(M | v'_{(1, T)})\right\}
                 = \Utl(M | v_{(1, T)}),
  \end{align*}
  where $\mu_T = 0$, because $\Utl(M | v'_{(1, T)}) \geq 0$ by the IR condition
  (\ref{mech:defir}).

  Therefore
  \begin{align*}
    \Utl(B^{g, y} | v_{(1, T)}) = \bal_{T + 1} - \bal_1 = \Utl(M | v_{(1, T)}),
  \end{align*}
  which then implies
  \begin{align*}
    & \Utl(B^{g, y}) = \Utl(M),  \\
    & \Rev(B^{g, y}) = \Rev(M).
  \end{align*}
\end{proof}
  %

\subsection{Proof of Theorem \ref{thm:corebam}}\label{app:thm:corebam}

Before conducting the proof, we first give some intuitions behind the theorem
as follows,
\begin{itemize}
  \item $g_t$ is convex in $v_t$ and weakly increasing in $v_t$: then
        $g_t(v_{(1, t - 1)}, v_t)$ is the utility function of some IC and IR
        stage mechanism for stage $v_t$, by the Envelope theorem.
  \item $g_t$ is consistent: then $g_t(v_{(1, t)})$ can be considered as the
        conditional utility $\Utl(M | v_{(1, t)})$ plus a constant for some
        direct mechanism $M$, i.e.,
        \begin{align*}
          g_t(v_{(1, t)}) = \Utl(M | v_{(1, t)}) + \mathrm{constant}.
        \end{align*}
  \item If $g_t$ is further symmetric: then the direct mechanism $M$ above is
        a symmetric direct mechanism.
  \item Finally, if $y_t$ is the sub-gradient of $g_t$ with respect to $v_t$
        and its range is $\dX_t$: then $y_t$ is exactly the allocation function
        of the $M$.
\end{itemize}

In other words, $B^{g, y}$ is a core BAM, if and only if there is a symmetric
direct mechanism $M$ such that $g$ equals to its conditional utility plus some
constants (one for each stage), and $y$ is its allocation function.

The proof of Theorem \ref{thm:corebam} is similar to the proof of Lemma
\ref{lem:symmstrong}. 
The following corollary summarizes the important conclusions we have until the
end of Section \ref{sec:bam}.
\begin{corollary}\label{cor:wlog}
  Let $M$ be a revenue-optimal direct mechanism, subject to IC, IR, and with
  fixed buyer utility. One can construct a symmetric (Lemma
  \ref{lem:mechtrans}), stage-wise IC and stage-wise IR (Lemma \ref{lem:swic}
  to \ref{lem:swirconst}) mechanism $M'$ that yields the same revenue and
  guarantees the same buyer utility. Furthermore, $M'$ can be induced by a Core
  BAM (Theorem \ref{thm:rep}, Lemma \ref{lem:sconic} and \ref{lem:sconir}).
\end{corollary}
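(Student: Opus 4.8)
The plan is to chain together the transformation lemmas already established in Section \ref{sec:bam}, verifying at each step that the two quantities we must preserve — the overall buyer utility and membership in the feasible class (IC together with the appropriate flavour of IR) — are indeed preserved, and then to use the optimality hypothesis on $M$ to pin down the revenue.

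First I would apply Lemma \ref{lem:swirconst} to $M$ to obtain $M_1 = \langle x^{(1)}_{(1, T)}, p^{(1)}_{(1, T)} \rangle$ with the same allocation rule and the same total payment on every history. By properties \ref{lem:swicir:p1} and \ref{lem:swicir:p2} this forces $\Utl(M_1) = \Utl(M)$ and $\Rev(M_1) = \Rev(M)$; by property \ref{lem:swicir:p3}, $M_1$ is stage-wise IR; and by property \ref{lem:swicir:p4} together with Lemma \ref{lem:swic}, $M_1$ is IC, hence stage-wise IC. Next I would feed $M_1$ into Lemma \ref{lem:mechtrans} to obtain a symmetric direct mechanism $M'$ with $\Utl(M') = \Utl(M_1) = \Utl(M)$ and $\Rev(M') \geq \Rev(M_1) = \Rev(M)$. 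Here I would make explicit the one point the lemma statement leaves implicit: the symmetrization replaces, stage by stage, each submechanism by the submechanism at its representative history, and since every such submechanism is IC and stage-wise IR in $M_1$, the same holds in $M'$; thus $M'$ is again IC and stage-wise IR, in particular IR by Lemma \ref{lem:swir}.

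Now comes the single genuine use of a hypothesis rather than a black-box lemma. Since $M'$ is an IC, IR direct mechanism with $\Utl(M') = \Utl(M)$, it lies in exactly the class of mechanisms over which $M$ was assumed revenue-optimal, so $\Rev(M') \leq \Rev(M)$; combined with $\Rev(M') \geq \Rev(M)$ from the previous step we conclude $\Rev(M') = \Rev(M)$. Finally I would invoke Lemma \ref{lem:symm} — or, to get the explicit construction, Lemma \ref{lem:symmstrong} with $g_t(h_t) = \Utl(M' \mid h_t)$ and $y_t(h_t) = x'_t(h_t)$ — to convert the symmetric mechanism $M'$ into a Core BAM $B = B^{g, y}$ with $\Utl(B) = \Utl(M')$ and $\Rev(B) = \Rev(M')$; the same lemma, via Lemmas \ref{lem:sconic} and \ref{lem:sconir}, certifies that $B$ satisfies the IC and IR conditions (\ref{cond:sic})--(\ref{cond:ir}). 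Stringing the equalities together gives a Core BAM inducing $M'$ with $\Rev(B) = \Rev(M)$ and $\Utl(B) = \Utl(M)$, which is the assertion.

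I expect the only real obstacle to be the bookkeeping in the middle step: confirming that the symmetrization of Lemma \ref{lem:mechtrans} does not leave the feasible class — i.e., that $M'$ is still IC and stage-wise IR — because otherwise the optimality of $M$ could not be used to upgrade $\Rev(M') \geq \Rev(M)$ to an equality. This is routine given the explicit representative-history construction in the proof of that lemma (the replaced submechanisms are themselves IC and stage-wise IR), but it is the step that has to be argued rather than merely quoted.
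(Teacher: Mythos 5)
Your proof is correct and follows essentially the same route as the paper, which offers no separate argument for this corollary beyond the cited chain Lemma \ref{lem:swirconst} $\to$ Lemma \ref{lem:mechtrans} $\to$ Lemma \ref{lem:symmstrong}. The two details you make explicit --- that the representative-history replacement preserves IC and stage-wise IR (needed so that $M'$ stays in the feasible class), and that the optimality of $M$ is what upgrades $\Rev(M')\geq\Rev(M)$ to an equality --- are exactly the points the paper leaves implicit, and you handle both correctly.
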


\begin{proof}[Proof of Theorem \ref{thm:corebam}]
  ``$\Infer$'': $y$ is the sub-gradient of $g$ and $g$ is consistent,
  symmetric, and convex and increasing in $v_t$, $\Infer$ $B^{g, y}$ is a Core
  BAM.

  By Definition \ref{def:corebam}, we need to show that $B^{g, y}$ is a valid
  BAM. The following proof is almost the same with the proof of Lemma
  \ref{lem:symmstrong}, so we only outline the differences here.
  \begin{itemize}
    \item Since $g$ is symmetric,
          \begin{align*}
            g_{t - 1}(v_{(1, t - 1)}) = g_{t - 1}(v'_{(1, t - 1)})
              \Infer g_t(v_{(1, t)}) = g_t(v'_{(1, t - 1)}, v_t),
          \end{align*}
          and $y_t$ is the sub-gradient of $g_t$ with respect to $v_t$, so by
          the construction (\ref{eq:bbal}) of $\bal_t$
          \begin{align*}
            \bal_t(v_{(1, t - 1)}) = \bal_t(v'_{(1, t - 1)})
              & \Infer g_{t - 1}(v_{(1, t - 1)}) = g_{t - 1}(v'_{(1, t - 1)})
                \\
              & \Infer y_t(v_{(1, t)}) = \frac{\partial g_t(v_{(1, t)})}
                                       {\partial v_t}
                                     = \frac{\partial g_t(v'_{(1, t - 1)}, v_t)}
                                       {\partial v_t}
                                     = y_t(v'_{(1, t - 1)}, v_t),
          \end{align*}
          which then implies $z_t(\bal_t, v_t) = y_t(v_{(1, t)})$ is a valid
          function on input $(\bal_t, v_t)$. Also note that the range of $y_t$
          (hence the range of $z_t(\bal_t, v_t)$) is $\dX_t$ by definition.
    \item $z_t$, $q_t$, $d_t$, and $s_t$ are defined on their domains; $q_t$
          and $d_t$ are always non-negative: the same with the proof of Lemma
          \ref{lem:symmstrong}.
    \item By the construction of $s_t$, $\bal_t$, $d_t$, and $z_t$ (see
          (\ref{eq:bspd}), (\ref{eq:bbal}), (\ref{eq:bdps}) and
          (\ref{eq:ball})),
          \begin{align*}
            s_t(\bal_t) &= \bal_t + d_t(\bal_t, v_t) - \bal_{t + 1}  \\
                        &= g_{t - 1}(v_{(1, t - 1)}) - \mu_{t - 1}
                           + \int_{\Zero}^{v_t} z_t(\bal_t, v) \ud v
                           - g_t(v_{(1, t)}) + \mu_t  \\
                        &= g_{t - 1}(v_{(1, t - 1)})
                           + \int_{\Zero}^{v_t} y_t(v_{(1, t - 1)}, v) \ud v
                           - g_t(v_{(1, t)}) - \mu_{t - 1} + \mu_t.
          \end{align*}
          Taking partial derivative with respect to $v_t$, we have
          \begin{align*}
            \frac{\partial s_t(\bal_t)}{v_t}
                = \Zero + y_t(v_{(1, t)})
                  - \frac{\partial g(v_{(1, t)})}{\partial v_t} - \Zero + \Zero.
          \end{align*}
          Since $y_t(v_{(1, t)}) = \partial g_t(v_{(1, t)}) / \partial v_t$, we
          conclude that $\partial s_t(\bal_t) / \partial v_t = \Zero$. Hence
          $s_t$ constructed as (\ref{eq:bspd}) is a function of $\bal_t$.
    \item $s_t(\bal_t) \leq \bal_t$; the balance update formula holds: the same
          with the proof of Lemma \ref{lem:symmstrong}.
    \item Since $z_t(\bal_t, v_t) = y_t(v_{(1, t)})$ is the sub-gradient of
          $g_t(v_{(1, t)})$ with respect to $v_t$, which is also convex and
          weakly increasing in $v_t$, then by the Envelope theorem, the
          construction (\ref{eq:bpay}) of $q_t$ implies the in-stage IC
          condition (\ref{cond:sic}).
    \item $g_t$ is consistent implies that $s_t(\bal_t) - \E_{v_t}[\hu_t(\bal_t,
          v_t; v_t)]$ is a constant, i.e.,
          \begin{align*}
              s_t(\bal_t) - \E_{v_t}\left[\hu_t(\bal_t, v_t; v_t)\right]
            &=\E_{v_t}\left[s_t(\bal_t) - d_t(\bal_t, v_t)\right]  \\
            &=\E_{v_t}\left[\big(\bal_t + d_t(\bal_t, v_t) - \bal_{t + 1}\big)
                            - d_t(\bal_t, v_t)\right]  \\
            &=\E_{v_t}\left[\bal_t - \bal_{t + 1}\right]  \\
            &=\E_{v_t}\left[g_{t - 1}(v_{(1, t - 1)}) - \mu_{t - 1}
                            - g_t(v_{(1, t)}) + \mu_t\right]  \\
            &=g_{t - 1}(v_{(1, t - 1)}) - \E_{v_t}\left[g_t(v_{(1, t)})\right]
              - \mu_{t - 1} + \mu_t  \\
            &=\textrm{constant},
          \end{align*}
          where the last ``$=$'' is from the consistency of $g_t$ and the fact
          that $\mu_{t - 1}$ and $\mu_t$ are constants.

          It then implies the IC condition (\ref{cond:spd}).
    \item The IR condition (\ref{cond:ir}) holds: the same with the proof of
          Lemma \ref{lem:symmstrong}.
  \end{itemize}

  ``$\Reduce$'': $B^{g, y}$ is a core BAM $\Infer$ $y$ is the sub-gradient
  of $g$ and $g$ is consistent, symmetric, and convex and increasing in $v_t$.

  By definition, the Core BAM $B^{g, y}$ is a valid BAM. Most of the arguments
  for the sufficient direction also apply to the necessary direction.
  \begin{itemize}
    \item By the construction (\ref{eq:bbal})-(\ref{eq:bspd}),
          \begin{align*}
            \frac{\partial s_t(\bal_t)}{v_t}
                = y_t(v_{(1, t)}) - \frac{\partial g(v_{(1, t)})}{\partial v_t},
          \end{align*}
          which must be $\Zero$, since $s_t(\bal_t)$ is invariant with respect
          to $v_t$. Hence $y_t$ is the sub-gradient of $g_t$ with respect to
          $v_t$, i.e.,
          \begin{align*}
            y_t(v_{(1, t)}) = \frac{\partial g(v_{(1, t)})}{\partial v_t}.
          \end{align*}
    \item Since $s_t(\bal_t) = \bal_t + d_t(\bal_t, v_t) - \bal_{t + 1}$ is a
          valid function on input $(\bal_t)$, hence $\forall v_{(1, t - 1)},
          v'_{(1, t - 1)} \in \dV_{(1, t - 1)}, v_t \in \dV_t,$
          \begin{align*}
            \bal_t(v_{(1, t - 1)}) = \bal_t(v'_{(1, t - 1)})
              \Infer s_t\left(\bal_t(v_{(1, t - 1)})\right)
                        = s_t\left(\bal_t(v'_{(1, t - 1)})\right).
          \end{align*}

          By substituting $s_t(\bal_t)$, we have
          \begin{align*}
            \bal_t(v_{(1, t - 1)}) + d_t(\bal_t, v_t) - \bal_{t + 1}(v_{(1, t)})
              = \bal_t(v'_{(1, t - 1)}) + d_t(\bal_t, v_t)
                  - \bal_{t + 1}(v'_{(1, t - 1)}, v_t),
          \end{align*}
          which then implies
          \begin{align*}
            \bal_{t + 1}(v_{(1, t)}) = \bal_{t + 1}(v'_{(1, t - 1)}, v_t),
          \end{align*}
          since $\bal_t(v_{(1, t - 1)}) = \bal_t(v'_{(1, t - 1)})$.


          Note that by the construction of $\bal_t$,
          \begin{align*}
            \bal_t(v_{(1, t - 1)}) = \bal_t(v'_{(1, t - 1)})
              \iff g_{t - 1}(v_{(1, t - 1)}) = g_{t - 1}(v'_{(1, t - 1)}),
          \end{align*}
          and
          \begin{align*}
            \bal_{t + 1}(v_{(1, t)}) = \bal_{t + 1}(v'_{(1, t - 1)}, v_t)
              \iff g_t(v_{(1, t)}) = g_t(v'_{(1, t - 1)}, v_t).
          \end{align*}
          %
          %

          Hence $g_t$ must be symmetric, i.e., $\forall v_{(1, t - 1)},
          v'_{(1, t - 1)} \in \dV_{(1, t - 1)}, v_t \in \dV_t,$
          \begin{align*}
            g_{t - 1}(v_{(1, t - 1)}) = g_{t - 1}(v'_{(1, t - 1)})
              &\Infer \bal_t(v_{(1, t - 1)}) = \bal_t(v'_{(1, t - 1)})  \\
              &\Infer \bal_{t + 1}(v_{(1, t)})
                        = \bal_{t + 1}(v'_{(1, t - 1)}, v_t)  \\
              &\Infer g_t(v_{(1, t)}) = g_t(v'_{(1, t - 1)}, v_t).
          \end{align*}

    \item Since the range of $z_t(\bal_t, v_t)$ must be $\dX_t$, so is the
          range of $y_t(v_{(1, t)})$.
    \item By the in-stage IC condition (\ref{cond:sic}) and the Envelope
          theorem, $y_t(v_{(1, t)}$ must be weakly increasing in $v_t$, hence
          $g_t$, being the integration of $y_t$, must be convex and increasing
          in $v_t$.
    \item Finally, the another IC condition (\ref{cond:spd}) implies that
          $s_t(\bal_t) - \E_{v_t}\big[\hu_t(\bal_t, v_t; v_t)\big]$ must be a
          constant (with respect to $\bal_t$). Hence,
          \begin{align*}
              g_{t - 1}(v_{(1, t - 1)}) - \E_{v_t}\left[g_t(v_{(1, t)})\right]
            &=\bal_t + \mu_{t - 1} - \E_{v_t}\left[\bal_{t + 1} + \mu_t\right]
              \\
            &=s_t(\bal_t) - \E_{v_t}\big[d_t(\bal_t, v_t; v_t)\big]
              + \mu_{t - 1} - \mu_t  \\
            &=s_t(\bal_t) - \E_{v_t}\big[\hu_t(\bal_t, v_t; v_t)\big]
              + \mu_{t - 1} - \mu_t  \\
            &= \text{constant}.
          \end{align*}
          Therefore $g_t$ is consistent.
  \end{itemize}
\end{proof}

\subsection{Proof of Theorem \ref{thm:bl}}\label{app:thm:bl}

\begin{proof}[Proof of Theorem \ref{thm:bl}]
  Assume by contradiction that there exists a core BAM, $B$, such that,
  \begin{align}\label{eq:assump}
      \Rev(B) > \Rev(\SMA) + \E_{v_{(1, T)}}\bigg[
                  \sum_{\tau = 1}^T s_\tau(\bal_\tau)\bigg].
  \end{align}
  Then we construct a mechanism $M$ that operates independently on each stage
  but generates strictly higher revenue than the optimal separate mechanism in
  at least one stage.

  To ensure that the constructed mechanism is history-independent and IC-IR
  within each single stage. We create a fake bank account to mimic the behavior
  of the real bank account in $B$. One can think of the balance $\bal^\dag_t$
  of the fake bank account as simply a number, satisfying that the distribution
  of $\bal^\dag_t$ is the same as the distribution of the real balance (as a
  random-valued function of previous types).

  Consider the following construction of $M$ based on $B$.
  \begin{align*}
    x_t(v_t) = z_t(\bal^\dag_t, v_t),~p_t(v_t) = q_t(\bal^\dag_t, v_t),
  \end{align*}
  where in each stage $t$, $\bal^\dag_t$ is a random variable defined by the
  following equations for $\tau$ from $1$ to $t - 1$.
  \begin{align*}
    \bal^\dag_{\tau + 1} = \bal^\dag_\tau - s_\tau(\bal^\dag_\tau)
                           + d_\tau(\bal^\dag_\tau, v^\dag_\tau),~
    v^\dag_\tau \sim \F_\tau.
  \end{align*}
  Thus $\bal^\dag_t$ is independent of the buyer's real types, and has the
  same distribution as $\bal_t$. Note that $v_t$ is independent of the history,
  hence independent of $\bal_t$ as well. Then
  \begin{align}
    & \E_{v_{(1, t - 1)}} [q_t(\bal_t, v_t)]
        = \E_{\bal^\dag_t} [q_t(\bal^\dag_t, v_t)]
        = \E_{\bal^\dag_t} [p_t(v_t)],
      \nonumber  \\
    \Infer~&
      \Rev(M) = \sum_{\tau = 1}^T \E_{\bal^\dag_\tau, v_\tau} \big[
                p_\tau(v_\tau)\big]
              = \Rev(B) - \E_{v_{(1, T)}} \bigg[\sum_{\tau = 1}^T
                  s_\tau(\bal_\tau)\bigg]
              > \Rev(\SMA)
      \nonumber  \\
    \Infer~& \exists \tau,~s.t.~\E_{\bal^\dag_\tau, v_\tau}
        \big[p_\tau(v_\tau)\big] > \rho_\tau,
      \label{eq:contr}
  \end{align}
  namely, at least in one stage $\tau$, $M$ generates strictly higher revenue
  than the optimal revenue for stage $\tau$, $\rho_\tau$.

  By construction, $M$ is history-independent, and by
  (\ref{cond:sic})(\ref{cond:ir}) and $q_t \geq 0$ (by the construction of
  core BAM), $M$ is also IC and IR within each single stage. Thus
  (\ref{eq:contr}) contradicts the optimality of $\SMA$.
\end{proof}

\subsection{Proof of Lemma \ref{lem:3appspddpsupb}}\label{app:lem:3appspddpsupb}

\begin{proof}[Proof of Lemma \ref{lem:3appspddpsupb}]
  From previous definition and analysis, we have two upper bounds on
  $s_t(\bal_t)$.
  \begin{enumerate}
    \item By Definition \ref{def:bam}, $s_t(\bal_t) \leq \bal_t$.
    \item By IC constraint (\ref{cond:spd}) of BAM, we have
          \begin{align*}
            s_t(\bal_t) = \E_{v_t}[\hu_t(\bal_t, v_t; v_t)
                                   - \hu(0, v_t; v_t)] + s_t(0)
                        \leq \E_{v_t}[\hu_t(\bal_t, v_t; v_t)]
                        = \Val_t.
          \end{align*}
  \end{enumerate}

  For any $\bal_t$, the minimum of the above two bounds can be achieved if
  \begin{align*}
    & \E_{v_t}\left[\hu(\bal_t, v_t; v_t)\right]
          = \min\left\{\bal_t, \Val_t\right\}  \\
    & \hu(0, v_t; v_t) = s_t(0) = 0.
  \end{align*}

  Furthermore, by IR constraint (\ref{cond:ir}), we have,
  \begin{align*}
    d_t(\bal_t, v_t) \leq \hu_t(\bal_t, v_t; v_t) \leq \Val_t.
  \end{align*}

  %
\end{proof}

\subsection{Proof of Lemma \ref{lem:sbd}}\label{app:lem:sbd}

We prove the following stronger version of Lemma \ref{lem:sbd}. Lemma
\ref{lem:sbd} is then directly implied by Lemma \ref{lem:3appspddpsupb} and
Lemma \ref{lem:sbdstrong}.

\begin{lemma}\label{lem:sbdstrong}
  For deposit sequence $\frd_{(1, t)}$ and spend sequence $\frs_{(1, t)}$
  with given bounds on each of the deposits and spends,
  \begin{align*}
    \forall \tau \in [t],~
    0 \leq \frd_\tau \leq \One \cdot v_\tau,~\frs_\tau \leq \Val_\tau,
  \end{align*}
  and the corresponding balance defined with respect to the balance update
  formula,
  \begin{align*}
    \bal_1 = 0,~\bal_{\tau + 1} = \bal_\tau + \frd_\tau - \frs_\tau,~
    \frs_\tau \leq \bal_\tau, \forall \tau \in [t].
  \end{align*}

  We have the following upper bound on the cumulated spends,
  \begin{align}\label{eq:sbd}
    \forall v_{(1, t)} \in \dV_{(1, t)},~
    \sum_{\tau = 1}^t \frs_\tau \leq
      \sum_{\tau = 1}^t s^*_\tau(\bal^*_\tau).
  \end{align}
\end{lemma}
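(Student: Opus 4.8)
The plan is to prove the bound in two stages. First I would reduce to the case where every deposit is set to its maximal value $\One\bcdot v_\tau$, and then I would observe that under fixed (maximal) deposits the cumulative spend through any stage $t$ equals $\sum_{\tau=1}^{t}\One\bcdot v_\tau-\bal_{t+1}$, so maximizing the running sum of spends is the same as minimizing the residual balance, and the greedy rule $s^*_\tau(\bal^*_\tau)=\min\{\bal^*_\tau,\Val_\tau\}$ does this at every stage simultaneously.

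For the first stage: given a feasible pair $(\frd_{(1,t)},\frs_{(1,t)})$ with balance sequence $\bal_\tau$, consider the pair $(\One\bcdot v_{(1,t)},\frs_{(1,t)})$ obtained by keeping the same spends but raising each deposit to its cap, and let $\bal'_\tau$ be the resulting balance. I would show $\bal'_\tau\ge\bal_\tau$ by induction on $\tau$: the base case is $\bal'_1=\bal_1=0$, and in the inductive step $\bal'_{\tau+1}=\bal'_\tau-\frs_\tau+\One\bcdot v_\tau\ge\bal_\tau-\frs_\tau+\frd_\tau=\bal_{\tau+1}$ since $\One\bcdot v_\tau\ge\frd_\tau$; moreover $\frs_\tau\le\bal_\tau\le\bal'_\tau$, so the spend-no-more-than-balance constraint is preserved and $(\One\bcdot v_{(1,t)},\frs_{(1,t)})$ is itself a feasible sequence with the \emph{same} cumulative spends $\sum_{\tau=1}^{t}\frs_\tau$. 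Hence it suffices to prove (\ref{eq:sbd}) assuming $\frd_\tau=\One\bcdot v_\tau$ for all $\tau$.

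For the second stage, with $\frd_\tau=\One\bcdot v_\tau$ I would telescope the balance recursion: $\sum_{\tau=1}^{t}\frs_\tau=\sum_{\tau=1}^{t}(\bal_\tau+\One\bcdot v_\tau-\bal_{\tau+1})=\sum_{\tau=1}^{t}\One\bcdot v_\tau-\bal_{t+1}$ using $\bal_1=0$, and likewise $\sum_{\tau=1}^{t}s^*_\tau(\bal^*_\tau)=\sum_{\tau=1}^{t}\One\bcdot v_\tau-\bal^*_{t+1}$. So (\ref{eq:sbd}) is equivalent to $\bal_{t+1}\ge\bal^*_{t+1}$ for every $t$, which I would again prove by induction. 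At stage $\tau$ the constraints $\frs_\tau\le\bal_\tau$ and $\frs_\tau\le\Val_\tau$ give $\bal_\tau-\frs_\tau\ge\max\{0,\bal_\tau-\Val_\tau\}$; by the inductive hypothesis $\bal_\tau\ge\bal^*_\tau$ and monotonicity of $x\mapsto\max\{0,x-\Val_\tau\}$ this is $\ge\max\{0,\bal^*_\tau-\Val_\tau\}=\bal^*_\tau-s^*_\tau(\bal^*_\tau)$; adding $\One\bcdot v_\tau$ to both ends yields $\bal_{\tau+1}\ge\bal^*_{\tau+1}$, and combined with the telescoping identities this finishes the proof.

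The argument is entirely elementary; the only real idea is spotting the telescoping identity that turns ``maximize the running sum of spends'' into ``minimize the running balance,'' after which the optimality of greedily spending $\min\{\bal^*_\tau,\Val_\tau\}$ at each stage is transparent. The one place that needs a moment of care is the deposit-lifting step, where one must check that raising deposits to their caps cannot violate the constraint $\frs_\tau\le\bal_\tau$ — which it cannot, since balances only increase.
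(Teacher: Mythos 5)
Your proof is correct, but it takes a genuinely different route from the one in the paper's appendix. The paper's formal proof identifies a \emph{critical stage} $t'$ --- the last stage at which the greedy mechanism $B^*$ empties its balance ($s^*_{t'}(\bal^*_{t'})=\bal^*_{t'}$) --- then telescopes the balance recursion over stages $1,\dots,t'-1$ and bounds the remaining terms ($\frs_{t'}\le\bal_{t'}$, $\frs_\tau\le\Val_\tau$ for $\tau>t'$, $\frd_\tau\le\One\cdot v_\tau$) to arrive at the explicit expression $\sum_{\tau=t'+1}^{t}\Val_\tau+\sum_{\tau=1}^{t'-1}\One\cdot v_\tau$, and finally verifies that $B^*$ attains this quantity exactly. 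Your proof instead proceeds by a two-stage reduction: first lift all deposits to their caps (showing, by a monotonicity induction, that this only raises balances and so preserves feasibility of the fixed spends); then telescope globally to rewrite cumulative spend as $\sum_\tau\One\cdot v_\tau-\bal_{t+1}$, reducing the claim to the pointwise balance dominance $\bal_{\tau}\ge\bal^*_{\tau}$, which you establish by a second induction using $\bal_\tau-\frs_\tau\ge\max\{0,\bal_\tau-\Val_\tau\}$. Interestingly, your argument formalizes precisely the two-step \emph{intuition} that the paper states in the body (before Lemma \ref{lem:sbd}) but does not actually carry out in the appendix; the paper's formal proof is a more monolithic computation around $t'$. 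Your route is more modular and yields the stronger intermediate invariant $\bal_\tau\ge\bal^*_\tau$ for all $\tau$, at the minor cost of two separate inductions; the paper's is shorter once one has guessed the right stage $t'$. Both are sound (note in particular that your inequality $\bal_\tau-\frs_\tau\ge\max\{0,\bal_\tau-\Val_\tau\}$ remains valid even though $\frs_\tau$ is allowed to be negative, since a negative spend only increases the left side).
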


\begin{proof}[Proof of Lemma \ref{lem:sbdstrong}]
  Let $t' \leq t$ be the largest element such that $s^*_{t'}(\bal^*_{t'}) =
  \bal^*_{t'}$. Then according to the balance update formula,
  \begin{align*}
      \sum_{\tau = 1}^t \frs_\tau
    &= \frs_{t'} + \sum_{\tau = t' + 1}^t \frs_\tau
        + \sum_{\tau = 1}^{t' - 1}(\bal_\tau + \frd_\tau - \bal_{\tau + 1})  \\
    &= \frs_{t'} + \sum_{\tau = t' + 1}^t \frs_\tau
        + \bal_1 - \bal_{t'} + \sum_{\tau = 1}^{t' - 1} \frd_\tau.
  \end{align*}

  Remember that by definition, $\bal_1 = \bal^*_1 = 0$, and the upper bounds on
  $\frd_\tau$ and $\frs_\tau$,
  \begin{align*}
    \forall \tau \in [t],~
        \frd_\tau \leq \One \cdot v_\tau,
        \frs_\tau \leq \bal_\tau, \frs_\tau \leq \Val_\tau.
  \end{align*}

  Hence
  \begin{align}\label{eq:sbdupb}
       \sum_{\tau = 1}^t \frs_\tau
    &= \frs_{t'} + \sum_{\tau = t' + 1}^t \frs_\tau
          + \bal_1 - \bal_{t'} + \sum_{\tau = 1}^{t' - 1} \frd_\tau
       \nonumber  \\
    &\leq \bal_{t'} + \sum_{\tau = t' + 1}^t \Val_\tau
          + \bal_1 - \bal_{t'} + \sum_{\tau = 1}^{t' - 1} \One \cdot v_\tau
       \nonumber  \\
    & = \sum_{\tau = t' + 1}^t \Val_\tau
          + \sum_{\tau = 1}^{t' - 1} \One \cdot v_\tau.
  \end{align}

  Recall the construction of $d^*$ and $s^*$,
  \begin{align*}
    \forall \tau \in [t],~
        d^*_\tau(\bal^*_\tau, v_\tau) = \One \cdot v_\tau,
        s^*_\tau(\bal^*_\tau) = \min\{\bal^*_\tau, \Val_\tau\}.
  \end{align*}

  Combining the definition of $t'$, we have
  \begin{align*}
    \forall \tau > t',~s^*_\tau(\bal^*_\tau) = \Val_\tau.
  \end{align*}

  Therefore, apply the balance update formula again,
  \begin{align*}
       \sum_{\tau = 1}^t s^*_\tau(\bal^*_\tau)
    &= s^*_{t'}(\bal^*_{t'}) + \sum_{\tau = t' + }^t s^*_\tau(\bal^*_\tau)
        + \sum_{\tau = 1}^{t' - 1}\big(\bal^*_\tau
                                       + d^*_\tau(\bal^*_\tau, v_\tau)
                                       - \bal^*_{\tau + 1}\big)  \\
    &= \bal^*_{t'} + \sum_{\tau = t' + 1}^t \Val_\tau
        + \bal^*_1 - \bal^*_{t'} + \sum_{\tau = 1}^{t' - 1} \One \cdot v_\tau  \\
    &= \sum_{\tau = t' + 1}^t \Val_\tau
          + \sum_{\tau = 1}^{t' - 1} \One \cdot v_\tau,
  \end{align*}
  which, by (\ref{eq:sbdupb}), is the upper bound on $\sum_{\tau = 1}^t
  \frs_\tau$.

  Thus we conclude that
  \begin{align*}
    \sum_{\tau = 1}^t s^*_\tau(\bal^*_\tau) \geq \sum_{\tau = 1}^t \frs_\tau.
  \end{align*}
\end{proof}

\subsection{Proof of Lemma \ref{lem:3appbam}}\label{app:lem:3appbam}

\begin{proof}[Proof of Lemma \ref{lem:3appbam}]

  We prove that the constructed $B$ (by Mechanism \ref{mech:3app}) is a BAM.

  With parameter $\theta = 3s_t(\bal_t) = 3\min\{\bal_t, \Val_t / 3\} \leq
  \Val_t$, $\langle x_t^\PM(\theta, \bcdot), p_t^\PM(\theta, \bcdot) \rangle$
  is the posted-price-for-the-grand-bundle mechanism for stage $t$ with
  non-negative posted-price by definition. Hence $z_t, q_t, d_t, s_t$, are
  defined on the domains required by the definition of bank account mechanism.

  Within each single stage and for any balance $\bal_t$, stage mechanism
  $\langle z_t(\bal_t, \bcdot), q_t(\bal_t, \bcdot) \rangle$ is truthful,
  because it is a randomization of three truthful stage mechanisms: the optimal
  history-independent mechanism, the give-for-free mechanism, and the
  posted-price-for-the-grand-bundle mechanism with a parameter independent with
  the reported type on current stage, $v_t$. Hence $B$ satisfies the IC
  constraint (\ref{cond:sic}).

  Another IC constraint (\ref{cond:spd}) is verified as follows,
  \begin{align*}
      & \E_{v_t}\big[\hu_t(\bal_t, v_t; v_t) - \hu_t(\bal'_t, v_t; v_t)\big]  \\
    =~& \frac13 \E_{v_t}\Big[
            \big(z_t(\bal_t, v_t) \cdot v_t - q_t(\bal_t, v_t)\big)
            - \big(z_t(\bal'_t, v_t) \cdot v_t - q_t(\bal'_t, v_t)\big)\Big]  \\
    =~& \frac13 \E_{v_t}\bigg[
                  \Big(x^{\SM}_t(v_t) + \One
                       + x^{\PM}_t\big(3s_t(\bal_t), v_t\big)\Big) \cdot v_t
                  - \Big(p^{\SM}_t(v_t)
                         + p^{\PM}_t\big(3s_t(\bal_t), v_t\big)\Big)\bigg]  \\
      & - \frac13 \E_{v_t}\bigg[
                  \Big(x^{\SM}_t(v_t) + \One
                       + x^{\PM}_t\big(3s_t(\bal'_t), v_t\big)\Big) \cdot v_t
                  - \Big(p^{\SM}_t(v_t)
                         + p^{\PM}_t\big(3s_t(\bal'_t), v_t\big)\Big)\bigg]  \\
    =~& \frac13 \Big(\E_{v_t}\big[u^\PM_t(3s_t(\bal_t), v_t; v_t)\big]
          - \E_{v_t}\big[u^\PM_t(3s_t(\bal'_t), v_t; v_t)\big]\Big).
  \end{align*}

  By the definition of the posted-price-for-the-grand-bundle mechanism with
  parameter $\theta$, the buyer utility $u^\PM_t(\theta, v_t; v_t) = \theta$
  (see (\ref{eq:pppu})), hence
  \begin{align*}
    & \E_{v_t}\big[\hu_t(\bal_t, v_t; v_t) - \hu_t(\bal'_t, v_t; v_t)\big]
      = \frac13 \Big(\E_{v_t}\big[u^\PM_t(3s_t(\bal_t), v_t; v_t)\big]
          - \E_{v_t}\big[u^\PM_t(3s_t(\bal'_t), v_t; v_t)\big]\Big)  \\
    =~& \frac13 \big(3s_t(\bal_t) - 3s_t(\bal'_t)\big)
      = s_t(\bal_t) - s_t(\bal'_t) \Infer \text{(\ref{cond:spd})}.
  \end{align*}

  By Lemma \ref{lem:sconic}, conditions (\ref{cond:sic}) and (\ref{cond:spd})
  are sufficient conditions of stage-wise IC, so we have proved the IC property
  of $B$.

  Finally, by the construction of $z_t$, $q_t$, and $d_t$, (see (\ref{eq:3all}),
  (\ref{eq:3pay}), and (\ref{eq:3dps}) respectively), we can prove the IR
  constraint (\ref{cond:ir}) as follows. Hence $B$ is a BAM.
  \begin{align*}
    \hu_t(\bal_t, v_t; v_t)
      = \frac13 \Big(u^\SM_t(v_t) + \One \cdot v_t
                     + u^\PM_t\big(3s_t(\bal_t), v_t; v_t\big)\Big)
      \geq \frac13 v_t \cdot \One = d_t(\bal_t, v_t)
    \Infer \text{(\ref{cond:ir})}.
  \end{align*}
\end{proof}

\subsection{Proof of Lemma \ref{lem:pairb}}\label{app:lem:pairb}

\begin{proof}[Proof of Lemma \ref{lem:pairb}]
  Since $s^\sgm_t(\bal^\sgm_t) = 0$ if $\sgm_t = 0$, we cannot prove the lemma
  by arguing $s^\sgm_t(\bal^\sgm_t) \geq s^*_t(\bal^*_t) / 4$. Even if when
  $\sgm_t = 1$, it is still nearly impossible to argue for similar lower bounds
  of $s^\sgm_t(\bal^\sgm_t)$, because $\bal^\sgm_t$ --- the upper bound of
  $s^\sgm_t(\bal^\sgm_t)$ --- depends on the first $t - 1$ bits of $\sgm$,
  which have $2^{t - 1}$ different cases.

  The trick here is to consider a pair of $\sgm$'s, say $\sgm$ and $\sgp(t)$,
  and to argue that if $\sgm_t = 1$, then $s^\sgm_t(\bal^\sgm_t) +
  s^{\sgp(t)}_t(\bal^{\sgp(t)}_t) \geq s^*_t(\bal^*_t)$.

  In particular, for any $\sgm \in \{0, 1\}^T$ and $t \in [T]$, let $\sgp(t) =
  \osg_1 \cdots \osg_{t - 1} \sgm_t \cdots \sgm_T$, which is different from
  $\sgm$ in the first $t - 1$ bits, and the same with $\sgm$ in the rest of the
  bits.

  In order to use Lemma \ref{lem:sbdstrong}, consider the following deposit
  sequence, $\frd_{(1, t - 1)}$, and spend sequence, $\frs_{(1, t - 1)}$,
  \begin{align}\label{eq:dappsds}
    \forall \tau < t,~
    \frd_\tau = d^\sgm_\tau(\bal^\sgm_\tau, v_\tau)
                + d^{\sgp(t)}_\tau(\bal^{\sgp(t)}, v_\tau),
    \frs_\tau = s^\sgm_\tau(\bal^\sgm_\tau) + s^{\sgp(t)}_\tau(\bal^{\sgp(t)}).
  \end{align}

  Recall the definitions of $d^\sgm_\tau$ and $s^\sgm_\tau$, i.e.,
  \begin{align*}
    \text{if}~\sgm_\tau = 0:~&
      d^\sgm_\tau(\bal^\sgm_\tau, v_\tau) = \One \cdot v_\tau,~
      s^\sgm_\tau(\bal^\sgm_\tau)         = 0;  \\
    \text{if}~\sgm_\tau = 1:~&
      d^\sgm_\tau(\bal^\sgm_\tau, v_\tau) = 0,~
      s^\sgm_\tau(\bal^\sgm_\tau)         = \min\{\bal^\sgm_\tau, \Val_\tau\}.
  \end{align*}

  Hence
  \begin{align}\label{eq:dappsdsv}
    \forall \tau < t,~
    \frd_\tau = \One \cdot v_\tau,
    \frs_\tau = \left\{\begin{array}{ll}
      \min\{\bal^\sgm_\tau, \Val_\tau\}, & \text{if}~\sgm_\tau = 1  \\
      \min\{\bal^{\sgp(t)}_\tau, \Val_\tau\}, & \text{if}~\sgm_\tau = 0
    \end{array}\right..
  \end{align}

  Consider the following balance $\bal_{(1, t)}$ defined by $\frd_{(1, t - 1)}$
  and $\frs_{(1, t - 1)}$,
  \begin{align*}
    \bal_1 = 0,~\bal_{\tau + 1} = \bal_\tau + \frd_\tau - \frs_\tau, \forall
      \tau \in [t - 1].
  \end{align*}

  By (\ref{eq:dappsds}), it is not hard to demonstrate the fact below through
  induction from $\tau = 1$ to $\tau = t$,
  \begin{align}\label{eq:dappeqbal}
    \forall \tau \in [t],~\bal_\tau = \bal^\sgm_\tau + \bal^{\sgp(t)}_\tau.
  \end{align}

  Also note that from (\ref{eq:dappsdsv}), we have
  \begin{align*}
    \forall \tau < t,~\frd_\tau = \One \cdot v_\tau, \frs_\tau \leq \Val_\tau.
  \end{align*}

  Hence the following inequality is from Lemma \ref{lem:sbdstrong},
  \begin{align}\label{eq:virsbd}
    \sum_{\tau = 1}^{t - 1} \frs_\tau
      \leq \sum_{\tau = 1}^{t - 1} s^*_\tau(\bal^*_\tau).
  \end{align}

  Apply the balance update formula (\ref{eq:defbal}) recursively,
  \begin{align*}
    \bal_t = \bal_1 + \sum_{\tau = 1}^{t - 1} \frd_\tau
                    - \sum_{\tau = 1}^{t - 1} \frs_\tau.
  \end{align*}

  Combining $\bal_1 = 0$ (by definition), (\ref{eq:dappsdsv}),
  (\ref{eq:dappeqbal}), and (\ref{eq:virsbd}), we have
  \begin{align*}
      \bal^\sgm_t + \bal^{\sgp(t)}_t
    = \bal_t
    \geq 0 + \sum_{\tau = 1}^{t - 1} \One \cdot v_\tau
           - \sum_{\tau = 1}^{t - 1} s^*_\tau(\bal^*_\tau),
  \end{align*}
  where by the definition of $d^*_{(1, T)}$ and $s^*_{(1, T)}$,
  \begin{align*}
      0 + \sum_{\tau = 1}^{t - 1} \One \cdot v_\tau
        - \sum_{\tau = 1}^{t - 1} s^*_\tau(\bal^*_\tau)
    = \bal^*_1 + \sum_{\tau = 1}^{t - 1} d^*_\tau(\bal^*_\tau, v_\tau)
               - \sum_{\tau = 1}^{t - 1} s^*_\tau(\bal^*_\tau)
    = \bal^*_t,
  \end{align*}
  hence
  \begin{align}\label{eq:dappbaltb}
    \forall v_{(1, t)} \in \dV_{(1, t)},~
      \bal^\sgm_t + \bal^{\sgp(t)}_t \geq \bal^*_t.
  \end{align}

  Note that $\sgm$ agrees with $\sgp(t)$ on the $t$-th bit, namely, if $\sgm_t
  = 1$, then by construction of $B^\sgm$, $\forall v_{(1, t)} \in \dV_{(1, t)}$,
  \begin{align}\label{eq:dapplemgoal}
      s^\sgm_t(\bal^\sgm_t) + s^{\sgp(t)}_t(\bal^{\sgp(t)}_t)
    & = \min\{\bal^\sgm_t, \Val_t\} + \min\{\bal^{\sgp(t)}_t, \Val_t\}
      \nonumber  \\
    & \geq \min\{\bal^\sgm_t + \bal^{\sgp(t)}_t, \Val_t\} \nonumber  \\
    & \geq \min\{\bal^*_t, \Val_t\} = s^*_t(\bal^*_t),
  \end{align}
  where the last inequality is from (\ref{eq:dappbaltb}).

  Now we are ready to finish the proof. $\forall t \in [T], v_{(1, t)} \in
  \dV_{(1, t)}$,
  \begin{align*}
      \E_\sgm \Big[s^\sgm_t(\bal^\sgm_t)\Big]
    & = \E_\sgm \Big[s^\sgm_t(\bal^\sgm_t) \Big| \sgm_t = 1\Big]
            \cdot \Pr[\sgm_t = 1]
        + \E_\sgm \Big[s^\sgm_t(\bal^\sgm_t) \Big| \sgm_t = 0\Big]
            \cdot \Pr[\sgm_t = 0]  \\
    & = \frac12 \E_\sgm \Big[s^\sgm_t(\bal^\sgm_t) \Big| \sgm_t = 1\Big] + 0  \\
    & = \frac14 \E_\sgm \Big[s^\sgm_t(\bal^\sgm_t)
                    + s^{\sgp(t)}_t(\bal^{\sgp(t)}_t) \Big| \sgm_t = 1\Big]  \\
    & \geq \frac14 s^*_t(\bal^*_t),
  \end{align*}
  where the last inequality is from (\ref{eq:dapplemgoal}).
\end{proof}

\subsection{Proof of Theorem \ref{thm:dp}}\label{app:thm:dp}

\begin{proof}[Proof of Theorem \ref{thm:dp}]
  By Theorem \ref{thm:rep}, it is WLOG to assume that the optimal mechanism is
  a core BAM $B^{g, y}$. Moreover, by Theorem \ref{thm:corebam} and Lemma
  \ref{lem:symmstrong}, it is WLOG to assume that $g(h) = \Utl(B | h)$. Since
  $y$ is the sub-gradient of $g$ and $g$ is symmetric, we denote
  $z_t(\bal_t, v_t)$ by $y_t\big(g(v_{(1, t - 1)}), v_t\big)$ throughout the
  proof.

  Let $\phi_t(\xi)$ be defined as follows,
  \begin{align}\label{eq:dpstat}
    \phi_t(\xi) = \max_{B : g(v_{(1, t)}) = \xi} \E_{v(t + 1, T)}\bigg[
      \sum_{\tau = t + 1}^T y_t\big(g(v_{(1, \tau - 1)}), v_\tau\big)
                              \cdot v_\tau  - \bal_{T + 1}\bigg].
  \end{align}
  Consider the following program.
  \begin{align}
    \mathrm{maximize}   \quad & \E_{v_t}\Big[
      y_t(\xi, v_t) \cdot v_t + \phi_t\big(g(\xi, v_t)\big)\Big] \label{dp:obj}
        \\
    \mathrm{subject~to} \quad & \hu_t(\xi, v_t; v_t) \geq
      \hu_t(\xi, v'_t; v_t),~\forall v_t \in \dV_t \label{dp:sic}  \\
    & g(\xi, v_t) = \xi + \hu_t(\xi, v_t; v_t)
        - \E_{v'_t}\big[\hu_t(\xi, v'_t; v'_t)\big] \label{dp:spd}  \\
    & g(\xi, v_t) \geq 0,~\forall v_t \in \dV_t \label{dp:ir}
  \end{align}
  where
  \begin{lemma}\label{lem:dpgen}
    $\max_{\xi \geq 0} \phi_0(\xi)$ is the maximum expected revenue, and
    $\phi_{t - 1}(\xi)$ is computed based on $\phi_t$ according to the above
    program (\ref{dp:obj}), where the entire algorithm is essentially a dynamic
    program starting with $\phi_T(\xi) = -\bal_{T + 1} = -\xi$.\footnote{Recall
    that $\mu_T = 0$, hence $\bal_{T + 1} = \Utl(B | v_{(1, T)})$.}
  \end{lemma}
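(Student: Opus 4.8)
The plan is to verify three things and then assemble them: (i) a revenue identity showing that the quantity defining $\phi_0$ in (\ref{eq:dpstat}) is exactly the revenue of the underlying core BAM, so that $\max_{\xi\ge 0}\phi_0(\xi)=\max_B\Rev(B)$; (ii) the Bellman recursion computing $\phi_{t-1}$ from $\phi_t$ via program (\ref{dp:obj})--(\ref{dp:ir}); and (iii) the terminal value $\phi_T(\xi)=-\xi$.

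For (i), by Theorem~\ref{thm:rep} and Corollary~\ref{cor:wlog} it suffices to optimize over core BAMs $B^{g,y}$ with $g_t(h_t)=\Utl(B\mid h_t)$; for such a mechanism the computation in the proof of Lemma~\ref{lem:symmstrong} gives $\bal_1=0$, $\mu_T=0$, and $\Utl(B\mid v_{(1,T)})=\bal_{T+1}$. Since total welfare minus total buyer utility equals total payment, on every realized history $\sum_{\tau=1}^T q_\tau(\bal_\tau,v_\tau)+s_\tau(\bal_\tau)=\sum_{\tau=1}^T z_\tau(\bal_\tau,v_\tau)\cdot v_\tau-\bal_{T+1}$. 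Taking expectations reproduces the objective in (\ref{eq:dpstat}) at $t=0$, and because $g_0(\emptyset)=\Utl(B)\ge 0$ by IR, maximizing over feasible promised utilities $\xi=g_0(\emptyset)\ge 0$ yields $\max_{\xi\ge 0}\phi_0(\xi)=\max_B\Rev(B)$.

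For (ii), fix $t$ and a state value $\xi$ identified with $g_{t-1}(v_{(1,t-1)})=\xi$; by symmetry this determines the submechanism after $v_{(1,t-1)}$ as a function of $\xi$ alone. By Theorem~\ref{thm:corebam}, specifying stages $t,\dots,T$ of a core BAM compatible with $g_{t-1}=\xi$ amounts to: choosing a stage-$t$ single-shot mechanism $\langle z_t(\xi,\cdot),q_t(\xi,\cdot)\rangle$ with $z_t(\xi,\cdot)=y_t(\xi,\cdot)$ a $\dX_t$-valued subgradient of a convex increasing function and $q_t$ given by the Envelope rule (\ref{eq:bpay}) --- equivalently satisfying the in-stage IC (\ref{dp:sic}) --- which fixes $\hu_t(\xi,\cdot;\cdot)$, hence $d_t$, hence (through consistency, i.e. (\ref{cond:spd})) both $s_t$ and the next state $g(\xi,v_t)$ as in (\ref{dp:spd}), required to be nonnegative by (\ref{dp:ir}); and then choosing, for each value attained by $g(\xi,\cdot)$, a core BAM on stages $t+1,\dots,T$ for that state. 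Since a core BAM is determined by its behavior as a function of the running balance, these continuations may be chosen independently of which $v_t$ produced a given state, and the hypotheses of Theorem~\ref{thm:corebam} for the glued mechanism decouple into the hypotheses for the stage-$t$ piece and those for each continuation piece. Hence
\begin{align*}
  \phi_{t-1}(\xi)=\max\ \E_{v_t}\Big[\,y_t(\xi,v_t)\cdot v_t+\phi_t\big(g(\xi,v_t)\big)\Big]
\end{align*}
over stage-$t$ mechanisms obeying (\ref{dp:sic})--(\ref{dp:ir}), which is exactly program (\ref{dp:obj}). For (iii), at $t=T$ the allocation sum is empty and, with $g_T(v_{(1,T)})=\xi\ge 0$, $\mu_T=\min\{0,\xi\}=0$, so $\bal_{T+1}=g_T-\mu_T=\xi$ and $\phi_T(\xi)=-\xi$; backward induction then computes $\phi_0$, and with (i) the lemma follows.

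The main obstacle is the achievability (``$\ge$'') direction of the recursion in (ii): one must check that gluing an optimal stage-$t$ mechanism to optimal continuation core BAMs selected per state value yields a \emph{genuine} core BAM --- that symmetry, consistency, convexity and monotonicity of the composite $(g,y)$ follow from Theorem~\ref{thm:corebam} applied to the pieces --- and that the glued mechanism attains the stated value; the converse inequality is routine conditional-expectation bookkeeping. A minor technical point worth flagging is measurable selection of the continuation as a function of the state together with existence of maximizers, which is immediate in the discrete-type setting underlying the FPTAS.
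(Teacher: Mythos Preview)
Your proposal is correct and follows essentially the same approach as the paper. Both arguments derive the revenue identity $\sum_\tau z_\tau\cdot v_\tau-\bal_{T+1}=\sum_\tau(q_\tau+s_\tau)$ (you via ``welfare minus utility equals payment'' using $\Utl(B\mid v_{(1,T)})=\bal_{T+1}$, the paper via $z_\tau\cdot v_\tau=\hu_\tau+q_\tau$, $\hu_\tau=d_\tau$, and telescoping the balance update), and both reduce feasibility of the stage program (\ref{dp:sic})--(\ref{dp:ir}) to the characterization of core BAMs in Theorem~\ref{thm:corebam}. The only notable difference is that you spell out the Bellman recursion and the gluing of the stage-$t$ mechanism to an optimal continuation per state value, while the paper leaves this implicit by simply noting that global feasibility of $(g,y)$ is equivalent to stage-by-stage feasibility; your remarks on achievability and measurable selection are extra care the paper does not mention.
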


  \begin{proof}[Proof of Lemma \ref{lem:dpgen}]
    According to the construction of Lemma \ref{lem:symmstrong},
    \begin{align*}
      \max_{\xi \geq 0} \phi_0(\xi) &= \max_{B : \Utl(B) \geq 0}
        \E_{v_{(1, T)}}\bigg[\sum_{\tau = 1}^T
          y_t\big(\Utl(B | v_{(1, \tau - 1)}), v_\tau\big) \cdot v_\tau
            - \bal_{T + 1}\bigg]  \\
      &= \max_B \E_{v_{(1, T)}}\bigg[\sum_{\tau = 1}^T \Big(
          \hu_\tau(\bal_\tau, v_\tau; v_\tau) + q_\tau(\bal_\tau, v_\tau)\Big)
            - \bal_{T + 1}\bigg]  \\
      &= \max_B \E_{v_{(1, T)}}\bigg[\sum_{\tau = 1}^T d_\tau(\bal_\tau, v_\tau)
            - \bal_{T + 1} + \sum_{\tau = 1}^T q_\tau(\bal_\tau, v_\tau)\bigg]
            \\
      &= \max_B \E_{v_{(1, T)}}\bigg[\sum_{\tau = 1}^T
           \Big(s_\tau(\bal_\tau) + q_\tau(\bal_\tau, v_\tau)\Big) - \bal_1
           \bigg]  \\
      &= \max_B \Rev(B).
    \end{align*}
    Hence  $\max_{\xi \geq 0} \phi_0(\xi)$ is the maximum expected revenue.

    In the meanwhile,
    \begin{itemize}
      \item (\ref{dp:sic}) $\iff$ $g$ is convex in $v_t$;
      \item (\ref{dp:spd}) $\iff$ $g$ is consistent;
      \item the definition of $\hu_t$ and (\ref{dp:spd}) $\iff$ $y_t$ is the
            sub-gradient of $g(\xi, v_t)$;
      \item finally, by the assumption that $g(h) = \Utl(B | h)$, $g$ must be
            symmetric, and non-negative, while (\ref{dp:ir}) guarantees that
            $\phi_t(g(\xi, v_t))$ is valid, since $\phi_t$ is not defined on
            negative inputs.
    \end{itemize}
    In summary, by Theorem \ref{thm:corebam}, $B^{g, y}$ is a core BAM (with
    assuming $g(h) = \Utl(B | h)$), if and only if $\forall t \in [T]$, $(g, y)$
    is feasible in program (\ref{dp:obj}).
  \end{proof}

  Particularly, for one item per stage case, $\forall t \in [T],~\dV_t=\Real_+$,
  the buyer has discrete stage type distribution for each stage, i.e.,
  \begin{align*}
    \forall t \in [T], \exists v^1_t < \cdots < v^{k_t}_t \in \dV_t,~s.t.~
    \sum_{i = 1}^{k_t}\Pr_{v \sim \F_t}[v = v^i_t] = 1.
  \end{align*}
  Then program (\ref{dp:obj}) reduces to the following forms, i.e.,
  \begin{align*}
    \mathrm{maximize}   \quad & \E_{v_t}\Big[
      y_t(\xi, v_t) \cdot v_t + \phi_t\big(g(\xi, v_t)\big)\Big]  \\
    \mathrm{subject~to} \quad & y_t(\xi, v_t) = \int_0^{v_t} w_t(\xi, v) \ud v
      \\
    & \int_{\dV_t} w_t(\xi, v) \ud v = 1  \\
    & g(\xi, v_t) = \xi + \hu_t(\xi, v_t; v_t)
        - \E_{v'_t}\big[\hu_t(\xi, v'_t; v'_t)\big]  \\
    & g(\xi, v_t) \geq 0, w_t(\xi, v_t) \geq 0,~\forall v_t \in \dV_t.
  \end{align*}
  \begin{lemma}\label{lem:icallo}
    For one item per stage case, the allocation rule must be a randomization of
    posted-price mechanisms.

    Particularly, for discrete valuation distribution cases, it is WLOG to
    restrict to the convex combinations of posted-price mechanisms with the
    prices only from the set of discrete types, $v^1_t, \ldots, v^{k_t}_t$.
  \end{lemma}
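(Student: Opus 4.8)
The plan is to read off both claims from the one–dimensional (single–item) structure of each per–stage mechanism. \textbf{Part 1.} Fix a stage $t$ and a balance $\xi=\bal_t\ge 0$. Condition (\ref{cond:sic}) together with the core–BAM payment rule (\ref{eq:bpay}) says that $\langle z_t(\xi,\cdot),q_t(\xi,\cdot)\rangle$ is a truthful single–parameter mechanism with Myersonian payment; since $\dV_t=\Real_+$ this is equivalent to $z_t(\xi,\cdot):\Real_+\to[0,1]$ being weakly increasing (this is the ``$y_t$ is the subgradient of a convex, increasing $g_t$'' half of Theorem~\ref{thm:corebam}). Every weakly increasing right–continuous map $z:\Real_+\to[0,1]$ has the Lebesgue–Stieltjes representation $z(v)=\int_{[0,\infty]}\I[v\ge r]\,\ud\mu(r)$ for a probability measure $\mu$ on $[0,\infty]$ (an atom of mass $z(0)$ at $r=0$, the measure $\ud z$ on $(0,\infty)$, and an atom of mass $1-z(\infty^-)$ at $r=\infty$, the ``never sell'' option). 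Substituting this into (\ref{eq:bpay}) and using $\I[v\ge r]\,v-\int_\Zero^v\I[w\ge r]\,\ud w=r\,\I[v\ge r]$ gives $q_t(\xi,v)=\int r\,\I[v\ge r]\,\ud\mu(r)$, the matching mixture of posted–price payments. Hence the stage mechanism at balance $\xi$ is exactly the $\mu$–mixture of posted–price mechanisms, which is the first assertion.

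\textbf{Part 2, the reduction to support prices.} Now let $\F_t$ be supported on $v^1_t<\cdots<v^{k_t}_t$ and let $\mu$ be the price distribution of the stage mechanism at balance $\xi$ (within the LP of the statement the constraint $\int w_t=1$ already fixes full allocation to the top type, so we may take $\mu$ supported on prices $\le v^{k_t}_t$). For a price $r$ with $v^{i-1}_t< r\le v^i_t$ set $\lceil r\rceil=v^i_t$, and replace $\mu$ by its push–forward $\mu'$ under $r\mapsto\lceil r\rceil$, which is supported on $\{v^1_t,\dots,v^{k_t}_t\}$. Since $v^j_t\ge r\iff v^j_t\ge\lceil r\rceil$ for every type $v^j_t$, rounding changes nothing at the support points: $z'_t(\xi,v^j_t)=z_t(\xi,v^j_t)=:a_j$ for all $j$, so the value term $\E_{v_t}[z_t(\xi,v_t)\cdot v_t]=\sum_j\Pr[v_t=v^j_t]\,a_j v^j_t$ of the stage program (\ref{dp:obj}) is unchanged. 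On the other hand, moving one atom from $r\in(v^{i-1}_t,v^i_t)$ up to $v^i_t$ decreases the information–rent integral $H_j:=\hu_t(\xi,v^j_t;v^j_t)=\int_\Zero^{v^j_t}z_t(\xi,v)\,\ud v$ by $\mu(\{r\})(v^i_t-r)$ for $j\ge i$ and by $0$ for $j<i$. Iterating, the post–rounding values satisfy $H'_j=H_j-\delta_j$ with $\delta_0=0$, $(\delta_j)_j$ weakly increasing, and $(H'_j)_j$ still weakly increasing (as $H'_j-H'_{j-1}=\int_{v^{j-1}_t}^{v^j_t}z'_t\ge 0$).

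\textbf{Part 2, no revenue loss.} Feasibility of the program (IC and $g(\xi,\cdot)\ge 0$) is preserved, since $z'_t$ is still weakly increasing and the consistency identity $g(\xi,v_t)=\xi+\hu_t(\xi,v_t;v_t)-\E_{v'_t}[\hu_t(\xi,v'_t;v'_t)]$ still holds, so it only remains to check the objective of (\ref{dp:obj}) does not drop. Since the value term is unchanged and, as noted in the main text, $\phi_t$ is concave on its domain, it suffices to show $\sum_j\Pr[v_t=v^j_t]\,\phi_t(\xi+H'_j-\bar H')\ge\sum_j\Pr[v_t=v^j_t]\,\phi_t(\xi+H_j-\bar H)$, where $\bar H:=\sum_j\Pr[v_t=v^j_t]H_j$ and $\bar H'$ is defined likewise. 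Both families $\{\xi+H_j-\bar H\}_j$ and $\{\xi+H'_j-\bar H'\}_j$ have the same mean $\xi$, so by concavity of $\phi_t$ it is enough that the primed family is a mean–preserving contraction of the unprimed one. Peeling $\delta$ off one nonnegative increment at a time from the top, one reduces to $H'_j=H_j-e\,\I[j\ge i]$ with $e\ge 0$ and $H_i-e\ge H_{i-1}$: this raises the ``bottom block'' $\{j<i\}$ by $e\Pr[v_t\ge v^i_t]$, lowers the ``top block'' $\{j\ge i\}$ by $e\Pr[v_t< v^i_t]$, does so mean–preservingly, and keeps the two blocks from crossing (exactly what $H_i-e\ge H_{i-1}$ says, using that $H_\cdot$ is increasing) — a mean–preserving contraction. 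Thus the objective weakly increases, one may assume every price lies in $\{v^1_t,\dots,v^{k_t}_t\}$, and the stage program becomes the finite LP displayed just before the statement.

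The main obstacle is this last step: arguing that rounding the posted prices \emph{up} to the nearest realised type does not lose revenue \emph{inside the dynamic program}, not merely that it preserves the allocation. The rounded mechanism collects more in–stage payment but deposits less into the bank account, so one genuinely needs the concavity of $\phi_t$ together with the majorization (mean–preserving–contraction) structure of how the rents $H_j$ change; Part~1 and the invariance $z'_t(\xi,v^j_t)=z_t(\xi,v^j_t)$ are routine.
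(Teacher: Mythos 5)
Your Part~1 matches the paper: both extract the mixture-of-posted-prices representation from the fact that a truthful single-item stage mechanism is described by a weakly increasing allocation $z_t(\xi,\cdot)$ that is the sub-gradient of a convex $\hu_t$, and both read off the posted-price payment from the envelope integral (\ref{eq:bpay}). Your Part~2, however, takes a genuinely different and substantially heavier route than the paper's. The paper does not round. It takes a price $v^*\in(v^i_t,v^{i+1}_t)$ carrying mass $w_t(\xi,v^*)$ and splits it barycentrically onto the two adjacent support points with weights $\frac{v^{i+1}_t-v^*}{v^{i+1}_t-v^i_t}$ and $\frac{v^*-v^i_t}{v^{i+1}_t-v^i_t}$. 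A short Fubini computation shows this leaves $\hu_t(\xi,v^j_t;v^j_t)$ \emph{unchanged for every support type} (and hence leaves every $g(\xi,v^j_t)$, every deposit, and every downstream balance unchanged), while strictly increasing $y_t(\xi,v^i_t)$. Thus the $\phi_t$-term in (\ref{dp:obj}) is literally untouched and the revenue term only goes up: the objective weakly improves by a pointwise argument that uses nothing about $\phi_t$. Your version instead rounds prices up to the next support point, which preserves the allocation at support points but strictly lowers the rents $H_j$, so the deposits and balances do change; you then have to invoke the concavity of $\phi_t$ (Lemma~\ref{lem:hprop}) together with the mean-preserving-contraction structure of the rent changes to argue the $\phi_t$-term cannot drop. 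Both arguments are correct — your iterated no-crossing check $H_i - e_i\ge H_{i-1}$ does hold because $e_i$ is exactly the part of $\int_{v^{i-1}_t}^{v^i_t}z_t\,\ud v$ attributable to prices in $(v^{i-1}_t,v^i_t)$ — but the paper's local interpolation is cleaner and avoids the dependency on the (independently established but logically separate) concavity of the value function. One small inaccuracy in your write-up: the constraint $\int w_t=1$ in the displayed program is exactly what forces $\mu$ to be supported on $[0,v^{k_t}_t]$ and the top type to be allocated with probability one; phrasing this as something you "may take" understates that it is already built into the program you are manipulating.
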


  While for discrete valuation distributions, WLOG, suppose that $v_t^1 = 0$,
  then the program further reduces to the following form, where $\Pr^i_t$
  denotes the probability of type $v^i_t$ on stage $t$, i.e., $\Pr^i_t =
  \Pr_{v \sim \F_t}[v = v^i_t]$.
  \begin{align*}
    \mathrm{maximize} \quad & \sum_{i = 1}^{k_t}
        {\Pr}^i_t \cdot \Big(y^i_t \cdot v^i_t + \phi_t\big(g^i_t\big)\Big)  \\
    \mathrm{subject~to} \quad & y^i_t = \sum_{j = 1}^{i} w^j_t   \\
    & \sum_{i = 1}^{k_t} w^i_t = 1  \\
    & \hu^i_t = \sum_{j = 1}^i w^j_t (v^i_t - v^j_t)  \\
    & g^i_t = \xi + \hu^i_t - \sum_{j = 1}^{k_t} {\Pr}^i_t \cdot \hu^j_t  \\
    & g^i_t \geq 0,w^i_t \geq 0,~\forall i = 1, \ldots, k_t.
  \end{align*}
  Denote the value of the program by $\val(\xi, \phi_t, w_t)$, and the optimal
  value by $\Prog(\xi, \phi_t) = \max_{w_t} \val(\xi, \phi_t, w_t)$, where
  $\xi$ and $\phi_t$ are parameters ($\phi_t$ is a function) of the program,
  and $w_t = (w^1_t, \ldots, w^{k_t}_t)$ are the variables (others are all
  determined by $w_t$). Note that $w_t$ takes different values for different
  parameters.

  This program is a linear program, if $\phi_t$ is a concave and piece-wise
  linear function. Moreover, the LP is of polynomial size, if $\phi_t$ has at
  most polynomially many pieces.

  Then the following lemma completes our proof that for any $\epsilon > 0$, the
  LP-based algorithm achieves an $\epsilon$-approximation by solving at most
  $O(TN / \epsilon)$ many LPs, and each LP of size at most $O(k_t N/\epsilon)$.
  \begin{lemma}\label{lem:hprop}
    $\phi_t(\xi)$ is concave, and can be $\kappa$-approximated by two concave
    piece-wise linear functions, $\up_t(\xi)$ and $\op_t(\xi)$. Formally,
    \begin{align*}
      \forall \xi, \up_t(\xi) \leq \phi_t(\xi) \leq \op_t(\xi),~
        \op_t(\xi) - \up_t(\xi) \leq \kappa \max_\zeta \phi_t(\zeta),
    \end{align*}
    where the number of pieces of $\up_t$ and $\op_t$ is bounded by
    $O(N + 1 / \kappa)$.
  \end{lemma}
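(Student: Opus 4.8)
The plan is to establish the two claims of Lemma~\ref{lem:hprop} in order — first concavity of $\phi_t$, then the sandwiching piecewise‑linear bounds — by backward induction on $t$ running alongside the dynamic program of Lemma~\ref{lem:dpgen}. For concavity, the base case $\phi_T(\xi)=-\xi$ is affine. For the inductive step, recall from Lemma~\ref{lem:dpgen} that $\phi_{t-1}(\xi)=\max_{w_t}\val(\xi,\phi_t,w_t)$, where the allocation weights $w_t=(w^1_t,\dots,w^{k_t}_t)$ are the decision variables, the auxiliary quantities $y^i_t$, $\hu^i_t$, $g^i_t$ are all \emph{affine} in $(\xi,w_t)$, and the feasibility constraints ($w^i_t\ge 0$, $\sum_i w^i_t=1$, and the IR inequalities $g^i_t\ge 0$) are linear in $(\xi,w_t)$; hence the feasible set is convex. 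The objective $\sum_i \Pr^i_t\big(y^i_t v^i_t+\phi_t(g^i_t)\big)$ is concave in $(\xi,w_t)$, since the terms $y^i_t v^i_t$ are affine and each $\phi_t(g^i_t)$ is the concave function $\phi_t$ (by induction) composed with the affine map $g^i_t(\xi,w_t)$. Partial maximization of a jointly concave function over a convex set is concave in the remaining variable, so $\phi_{t-1}$ is concave. To keep this clean when $g^i_t$ could fall outside the feasible promised‑utility range, I will extend $\phi_t$ by $-\infty$ outside $[0,V_t]$ with $V_t:=\sum_{\tau>t}\Val_\tau$, which preserves concavity and changes no maximizer.

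Next I pin down the range of $\phi_t$. The upper bound is immediate: $\phi_t(\xi)=\E\big[\sum_{\tau>t}y_\tau\cdot v_\tau-\bal_{T+1}\big]\le \E\big[\sum_{\tau>t}\One\cdot v_\tau\big]=V_t$, because $\bal_{T+1}\ge 0$ and $y_\tau\in\dX_\tau$. For the lower bound, I will show $\phi_t(\xi)\ge 0$ for every $\xi\in[0,V_t]$ by exhibiting a feasible BAM with promised utility exactly $\xi$ and non‑negative revenue: the give‑for‑free mechanism on stages $t+1,\dots,T$ alone realizes promised utility $V_t$ with revenue $\ge 0$, while concatenations of give‑for‑free with the posted‑price‑for‑the‑grand‑bundle mechanisms have a utility parameter that sweeps continuously down to $0$, so by the intermediate value theorem some such mechanism attains promised utility $\xi$, and every ingredient has non‑negative revenue. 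Consequently $\phi_t$ takes values in $[0,M_t]$, where $M_t:=\max_\zeta\phi_t(\zeta)\in[0,V_t]$.

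For the approximation, fix $\kappa>0$ and consider the $\lceil 1/\kappa\rceil$ horizontal slabs $[j\kappa M_t,(j+1)\kappa M_t]$ covering $[0,M_t]$. Let $\Xi$ be the finite set of break abscissae comprising: the endpoints $0$ and $V_t$; for each $j$, the (at most two, by concavity of $\phi_t$) values of $\xi$ with $\phi_t(\xi)=j\kappa M_t$; and the $O(N)$ abscissae induced by the supports of the discrete stage distributions (so the output is faithfully usable in the next stage's LP of Lemma~\ref{lem:icallo}). Then $|\Xi|=O(N+1/\kappa)$. Let $\up_t$ be the piecewise‑linear function interpolating $\phi_t$ at the points of $\Xi$: concavity of $\phi_t$ makes each chord lie below $\phi_t$, so $\up_t\le\phi_t$, and makes the chord slopes non‑increasing from left to right, so $\up_t$ is concave. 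Put $\op_t:=\up_t+\kappa M_t$, which is concave, piecewise linear with the same breakpoints, and satisfies $\op_t-\up_t=\kappa M_t=\kappa\max_\zeta\phi_t(\zeta)$. It remains to check $\op_t\ge\phi_t$, i.e.\ $\phi_t-\up_t\le\kappa M_t$: on a sub‑interval $[\xi_i,\xi_{i+1}]$ between consecutive points of $\Xi$, $\phi_t$ crosses no slab boundary in the interior, hence stays in one slab $[j\kappa M_t,(j+1)\kappa M_t]$; in particular $\phi_t(\xi_i),\phi_t(\xi_{i+1})\ge j\kappa M_t$, so $\up_t\ge j\kappa M_t$ there, while $\phi_t\le(j+1)\kappa M_t$, and subtracting gives the bound. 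This yields $\up_t\le\phi_t\le\op_t$ with gap $\kappa M_t$ and $O(N+1/\kappa)$ pieces, as required.

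The step I expect to be the main obstacle is the range lower bound $\phi_t\ge 0$ in the second paragraph: it is precisely what turns the slab count into $O(1/\kappa)$ rather than $O(V_t/(\kappa M_t))$, and making it rigorous requires constructing, for each target promised‑utility value in $[0,V_t]$, an explicit feasible BAM of non‑negative revenue and invoking continuity of the promised utility along the give‑for‑free / posted‑price‑for‑the‑grand‑bundle family to cover the whole interval. The concavity induction and the sandwich construction are then routine.
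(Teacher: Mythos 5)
Your proof proposal takes a genuinely different route from the paper's, and it is worth comparing them carefully.

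For concavity, you argue via partial maximization of a jointly concave objective over a convex feasible region of the LP from Lemma~\ref{lem:dpgen}; the paper instead forms a convex combination of two sub-BAMs, notes the result may fail to be symmetric (hence may not be a BAM), and appeals to Theorem~\ref{thm:rep} to repair this. Your route is cleaner and more direct, and it matches the observation already made in the proof sketch of Theorem~\ref{thm:dp} that ``$\phi_t(\xi)$ is a concave function.'' For the piecewise-linear sandwich, you use a horizontal-slab argument: partition the range $[0,M_t]$ into $\lceil 1/\kappa\rceil$ slabs, take breakpoints at the slab crossings (at most two per level by concavity), interpolate to get $\up_t$, and shift by $\kappa M_t$ to get $\op_t$. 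The paper instead invokes Lemma~\ref{lem:concaveapprox}, an adaptive slope-tracking construction whose piece count involves a logarithm of the endpoint slopes, which they bound by $O(N)$. Your slab count is $O(1/\kappa)$, which is in fact tighter than the paper's $O(N+1/\kappa)$ for the pure existence claim. One small simplification you could make: the fact that $\phi_t\geq0$ on the relevant interval $[0,V_t]$ does not need the intermediate-value construction you describe as the ``main obstacle''; it follows immediately from concavity once you establish $\phi_t(0)\geq0$ and $\phi_t(V_t)\geq0$, and the paper computes both endpoint values explicitly in (\ref{eq:philval}) and (\ref{eq:phirupb}).

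The substantive gap in your proposal is that it is an existence proof, whereas the paper's proof is deliberately query-based and constructive because this lemma is the engine of the FPTAS. Your breakpoint set $\Xi$ requires locating the level crossings of $\phi_t$ itself; but in the dynamic program $\phi_t$ is never available exactly --- at stage $t-1$ one has only the piecewise-linear bounds $\up_t$ and $\op_t$, giving access to $\Prog(\xi,\up_t)$ and $\Prog(\xi,\op_t)$ that sandwich $\phi_{t-1}$. The paper's Lemma~\ref{lem:concaveapprox} is designed to operate through an evaluation oracle and to compose with the error propagation $\op_{t-1}-\up_{t-1}\leq 2(T-t+1)\kappa$; the $O(N)$ additive term captures the number of queries needed to locate breakpoints when the endpoint slopes $\beta_a,\beta_b$ are exponential in the input. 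Your slab argument would need to be rephrased as an approximation of the already-computed piecewise-linear surrogate $\Prog(\xi,\up_t)$, with the surrogate-vs-$\phi_{t-1}$ gap tracked separately, before it could replace the paper's argument in Theorem~\ref{thm:dp}. A secondary imprecision: your extension of $\phi_t$ by $-\infty$ outside $[0,V_t]$ is asserted to ``change no maximizer,'' but $\phi_t$ is actually finite (equal to $V_t-\xi$) for $\xi>V_t$, and the LP constraints do not by themselves forbid $g^i_t>V_t$; one must argue separately that the optimum never uses such values, which your writeup leaves unjustified.
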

  \begin{proof}[Proof of Lemma \ref{lem:hprop}]
    We first show that $\phi_t(\xi)$ is concave. Let $B$ and $B'$ be the
    sub-BAMs from stage $t + 1$ to $T$ that reach $\phi_t(\xi)$ and
    $\phi_t(\xi')$ respectively. Note that $\E_{v_{(t + 1, T)}}[\bal_{T + 1}] =
    \E_{v_{(t + 1, T)}}[\Utl(B | v_{(1, T)})] = \Utl(B | v_{(1, t)}) = \xi$.
    Then for any $\eta \in [0, 1]$, $\xi'' = \eta \xi + (1 - \eta)\xi'$,
    \begin{align*}
      \phi_t(\xi'') &\geq \E_{v(t + 1, T)}\bigg[\sum_{\tau = t + 1}^T
                            y''_t \cdot v_\tau - \bal''_{T + 1}\bigg]  \\
                    &= \E_{v(t + 1, T)}\bigg[\sum_{\tau = t + 1}^T
                            \big(\eta y_t + (1 - \eta) y'_t\big) \cdot
                                v_\tau\bigg] - \xi''  \\
                    &= \eta \phi_t(\xi) + (1 - \eta) \phi_t(\xi'),
    \end{align*}
    where $y''_t$ is a convex combination of $y_t$ and $y'_t$. Note that in
    general, the convex combination of two BAMs ($B$ and $B'$) is not a BAM,
    because it might not be symmetric. However, by Theorem \ref{thm:rep}, there
    is a BAM that admits the same expected utility, $\xi''$, and better
    expected efficiency. In other words, $y''_t$ might indicate a mechanism
    that is not a BAM, but there is a BAM that generates higher revenue than
    this mechanism. Hence $\phi_t(\xi'')$ is even larger.

    Then by the following lemma, we could efficiently compute $\up_{t - 1}$ and
    $\op_{t - 1}$ from given $\up_t$ and $\op_t$. Hence we get an LP-based
    algorithm to achieve an arbitrarily close approximation to the optimal
    mechanism.
    \begin{lemma}\label{lem:concaveapprox}
      For any concave function $\varphi$ defined on interval $[a, b]$, if
      \begin{itemize}
        \item $\varphi(a)$ and $\varphi(b)$ are given;
        \item there exist $|\beta_a| \leq +\infty$ and $|\beta_b| \leq +\infty$
              such that
              \begin{align*}
                \varphi(\xi) \leq \beta_a(\xi - a) + \varphi(a),~
                \varphi(\xi) \leq \beta_b(\xi - b) + \varphi(b);
              \end{align*}
      \end{itemize}
      then for any $\kappa > 0$, a pair of lower and upper bounds, $\uvp$ and
      $\ovp$ can be computed via $O(n)$ queries to the evaluation oracle of
      $\varphi$, such that
      \begin{itemize}
        \item let $\beta = (\varphi(b) - \varphi(a)) / (b - a)$, then
              \begin{align*}
                n \leq \frac4\kappa + \log\frac{(\beta_a - \beta_b)^2}
                                      {(\beta_a - \beta)(\beta_b - \beta)};
              \end{align*}
        \item both of $\uvp$ and $\ovp$ are concave and piece-wise linear and
              have at most $O(n)$ pieces;
        \item the gap between $\uvp$ and $\ovp$ is $\delta = \big(\max_\zeta
              \varphi(\zeta) - \min\{\varphi(a), \varphi(b)\}\big)\kappa$.
      \end{itemize}
    \end{lemma}
    Suppose we have lower and upper bounds for $\phi_t$, i.e., $\up_t \leq
    \phi_t \leq \op_t$. Then the functions $\up^*_{t - 1}(\xi) = \Prog(\xi,
    \up_t)$ and $\op^*_{t - 1}(\xi) = \Prog(\xi, \op_t)$ are lower and upper
    bounds for $\phi_{t - 1}$, respectively, i.e.,
    \begin{align*}
      \Prog(\xi, \up_t) &= \val(\xi, \up_t, \underline{w}^*_t)
        \leq \val(\xi, \phi_t, \underline{w}^*_t)
        \leq \val(\xi, \phi_t, w^*_t) \leq \Prog(\xi, \phi_t)
        = \phi_{t - 1}(\xi),  \\
      \Prog(\xi, \op_t) &= \val(\xi, \op_t, \overline{w}^*_t)
        \geq \val(\xi, \op_t, w^*_t) \geq \val(\xi, \phi_t, w^*_t)
        \geq \Prog(\xi, \phi_t)
        = \phi_{t - 1}(\xi).
    \end{align*}
    Let $\up_{t - 1}$ be the lower bound of $\up^*_{t - 1}$ and $\op_{t - 1}$
    be the upper bound of $\op^*_{t - 1}$. So $\up_{t - 1} \leq \phi_{t - 1}
    \leq \op_{t - 1}$, and the gap is bounded as follows,
    \begin{align*}
      \op_{t - 1} - \up_{t - 1} &= (\op_{t - 1} - \op^*_{t - 1}) +
          (\op^*_{t - 1} - \up^*_{t - 1}) + (\up^*_{t - 1} - \up_{t - 1})  \\
        &\leq \kappa + 2(T - t)\kappa + \kappa = 2(T - t + 1)\kappa,
    \end{align*}
    where
    \begin{align*}
      \op^*_{t - 1} - \up^*_{t - 1}
      & = \val(\xi, \op_{t - 1}, \overline{w}^*_{t - 1})
          - \val(\xi, \up_{t - 1}, \underline{w}^*_{t - 1})  \\
      & \leq \val(\xi, \op_{t - 1}, \overline{w}^*_{t - 1})
          - \val(\xi, \up_{t - 1}, \overline{w}^*_{t - 1})  \\
      & = \sum_{i = 1}^{k_{t - 1}} {\Pr}^i_{t - 1} \cdot \Big(
            \op_{t - 1}(g^i_{t - 1}(\overline{w}^*_{t - 1}))
              - \up_{t - 1}(g^i_{t - 1}(\overline{w}^*_{t - 1}))\Big)  \\
      & \leq \max_{\zeta \geq 0}
                \big(\op_{t - 1}(\zeta) - \up_{t - 1}(\zeta)\big)
        \leq 2(T - t)\kappa.
    \end{align*}
    Hence the gap between $\up_0$ and $\op_0$ is bounded by $T\kappa$.

    Let $N$ be the input size of the problem. Then we prove that
    \begin{itemize}
      \item Lemma \ref{lem:concaveapprox} can be used to approximate
            $\up^*_{t - 1}$ and $\op^*_{t - 1}$ with $\up_t$ and $\op_t$ being
            given;
      \item the multiplicative $\epsilon$-approximation can be achieved by
            letting $\kappa = 4\epsilon / T$;
      \item the entire algorithm needs to solve $O\big((N + T/\epsilon)T\big)$
            many LPs, each of size at most $O\big((T / \epsilon + N)k_t\big)$.
    \end{itemize}

    We verify that $\phi_t(\xi)$ satisfies the requirements for Lemma
    \ref{lem:concaveapprox}. Consider the boundary estimation of $\phi_t(\xi)$.
    Note that by assumption, $v^1_t = 0$, hence $\hu^1_t = 0$. Then
    \begin{align*}
      g^1_t \geq 0 \Infer \sum_{j = 1}^{k_t} {\Pr}^i_t \cdot \hu^j_t \leq \xi.
    \end{align*}
    Meanwhile, for $i < k_t$,
    \begin{align*}
      & y^i_t = \sum_{j = 1}^i w^j_t
        \leq \max_{i' \leq i} \frac1{v^{i + 1}_t - v^{i'}_t} \cdot
              \sum_{j = 1}^i (v^{i + 1}_t - v^j_t)w^j_t
        \leq \frac1{v^{i + 1}_t - v^i_t} \cdot \hu^{i + 1}_t \\
      \Infer & \sum_{i = 1}^{k_t}{\Pr}^i_t \cdot y^i_t \cdot v^i_t
      \leq {\Pr}^{k_t}_t \cdot y^{k_t}_t \cdot v^{k_t}_t
            + \sum_{i = 1}^{k_t - 1} {\Pr}^i_t \cdot \hu^{i + 1}_t \cdot
              \frac{v^{i + 1}_t}{v^{i + 1}_t - v^i_t}  \\
      & \quad \leq {\Pr}^{k_t}_t \cdot v^{k_t}_t + \max_{i < k_t}
            \frac{{\Pr}^i_t v^{i + 1}_t}{{\Pr}^{i + 1}_t(v^{i + 1}_t - v^i_t)}
            \cdot \sum_{i = 1}^{k_t} {\Pr}^i_t \cdot \hu^i_t
      \leq {\Pr}^{k_t}_t \cdot v^{k_t}_t + c_t \cdot \xi,
    \end{align*}
    where $c_t = \max_{i < k_t} ({\Pr}^i_t v^i_t) /
    ((v^i_t - v^{i - 1}_t){\Pr}^{i + 1}_t) < +\infty$. Combining with the
    concavity of $\phi_{t + 1}$, we conclude that
    \begin{align}\label{eq:philupb}
      \phi_t(\xi) \leq {\Pr}^{k_t}_t \cdot v^{k_t}_t
                        + c_t \cdot \xi + \phi_{t + 1}(\xi) \leq \cdots
      \leq \xi\sum_{\tau = t}^T c_\tau
           + \sum_{\tau = t}^T {\Pr}^{k_\tau}_\tau \cdot v^{k_\tau}_\tau.
    \end{align}
    Meanwhile, when $\xi = 0$, $\hu^i_t \leq 0 \Infer w^1_t = \cdots =
    w^{k_t - 1}_t = 0,~w^{k_t}_t = 1$. Hence
    \begin{align}\label{eq:philval}
      \phi_t(0) = {\Pr}^{k_t}_t \cdot v^{k_t}_t + \phi_{t + 1}(0) = \cdots
      = \sum_{\tau = t}^T {\Pr}^{k_\tau}_\tau \cdot v^{k_\tau}_\tau,
    \end{align}
    which meets the upper bound (\ref{eq:philupb}) of $\phi_t$ at $\xi = 0$.

    By letting $y^1_t = \cdots = y^{k_t}_t = 1$, we know that when $\xi \geq
    \sum_{\tau = t}^T \Val_\tau$, $g^i_t \geq \sum_{\tau = t + 1}^T \Val_\tau$.
    Then by backward induction from $T$, for large enough $\xi \geq
    \sum_{\tau = t}^T \Val_\tau$, $\phi_t(\xi) = \sum_{\tau = t}^T \Val_\tau -
    \xi$. Since $\phi_t$ is concave, it is further upper bounded as follows,
    \begin{align}\label{eq:phirupb}
      \phi_t(\xi) \leq \sum_{\tau = t}^T \Val_\tau - \xi.
    \end{align}

    In summary, $\phi_t$ is a concave function on interval $\big[0,
    \sum_{\tau = t}^T \Val_\tau\big]$, and the function values on two ends are
    $\sum_{\tau = t}^T {\Pr}^{k_\tau}_\tau \cdot v^{k_\tau}_\tau$ and $0$,
    respectively. Moreover, it also has two upper bounds,
    (\ref{eq:philupb})(\ref{eq:phirupb}), that go through the two ends
    respectively. Hence, $\beta_a = \sum_{\tau = t}^T$, and $\beta_b = -1$. So
    \begin{align*}
      \log\frac{(\beta_a - \beta_b)^2}{(\beta_a - \beta)(\beta_b - \beta)}
        = O(N).
    \end{align*}

    Then by Lemma \ref{lem:concaveapprox}, $\phi_t(\xi)$ can be
    $2(T - t)\kappa$-approximated by $\up_t(\xi)$ and $\op_t(\xi)$ with at most
    $O(\frac1{\kappa} + N)$ many queries to $\up_{t + 1}(\xi)$ and
    $\op_{t + 1}(\xi)$.
  \end{proof}
  Therefore, there is an LP-based FPTAS via dynamic program to achieve a
  multiplicative $\epsilon$-approximation.
\end{proof}

\subsection{Proof of Lemma \ref{lem:icallo}}\label{app:lem:icallo}
\begin{proof}[Proof of Lemma \ref{lem:icallo}]
  Note that for one item per stage case, the mechanism is truthful within any
  stage $t$, if and only if $\hu_t(\xi, v_t; v_t)$ is increasing and convex in
  $v_t$, and $z_t(\xi, v_t)$ is the sub-gradient of $\hu_t$ with respective to
  $v_t$. It is equivalent to $z_t(\xi, v_t)$ being non-negative and increasing
  in $v_t$, and $\hu_t$ being the integration of $z_t(\xi, v_t)$ over $v_t$.

  $z_t(\xi, v_t)$ is non-negative and increasing in $v_t$, if and only if
  $z_t(\xi, v_t)$ is a convex combination of step functions, i.e.,
  \begin{align*}
    z_t(\xi, v_t) = \int_0^{+\infty} w_t(\xi, v) \cdot \I[v_t \geq v] \ud v
                  = \int_0^{v_t} w_t(\xi, v) \ud v,
  \end{align*}
  where $\int_0^{+\infty}w_t(\xi, v) \ud v = 1$ and $w_t(\xi, v) \geq 0$.

  Particularly, for discrete types, to compute the optimal mechanism, we could
  assume that $w_t(\xi, v)$ is non-zero only at points $v^1_t, \ldots,
  v^{k_t}_t$.

  To prove this, for any $v^*$ such that $v^i_t < v^* < v^{i + 1}_t$, we show
  that if $w_t(\xi, v^*) > 0$, we can weakly improve the objective value without
  either breaking any constraint, or changing the values of $g(\xi, v^i_t)$.

  Let $w'_t(\xi, v) = w_t(\xi, v)$ except for three points, i.e.,
  \begin{align*}
    w'_t(\xi, v^i_t)       &= w_t(\xi, v^i_t) + \frac{v^{i + 1}_t - v^*}
                              {v^{i + 1}_t - v^i_t}w_t(\xi, v^*),  \\
    w'_t(\xi, v^{i + 1}_t) &= w_t(\xi, v^i_t) + \frac{v^* - v^i_t}
                              {v^{i + 1}_t - v^i_t}w_t(\xi, v^*),  \\
    w'_t(\xi, v^*)         &= 0.
  \end{align*}
  Then it is straightforward to verify the following facts.
  \begin{align*}
    & w'_t(\xi, v) \geq 0,  \\
    & \int_0^{+\infty} w'_t(\xi, v) \ud v
        = \int_0^{+\infty} w_t(\xi, v) \ud v = 1,  \\
    & y'_t(\xi, v^j_t) = \int_0^{v^j_t} w'_t(\xi, v) \ud v
                       = \int_0^{v^j_t} w_t(\xi, v) \ud v
                       = y_t(\xi, v^j_t), \forall j \neq i  \\
    & y'_t(\xi, v^i_t) = \int_0^{v^i_t} w'_t(\xi, v) \ud v
                       = \int_0^{v^i_t} w_t(\xi, v) \ud v
                         + \frac{v^{i + 1}_t - v^*}{v^{i + 1}_t - v^i_t}
                           w_t(\xi, v^*)
                       > y_t(\xi, v^i_t),  \\
    & \hu'_t(\xi, v^j_t; v^j_t)
                       = \int_0^{v^j_t} y'_t(\xi, v) \ud v
                       = \int_0^{v^j_t} y_t(\xi, v) \ud v
                       = \hu_t(\xi, v^j_t; v^j_t), \forall j \leq i \\
    & \hu'_t(\xi, v^j_t; v^j_t)
                       = \int_0^{v^j_t} y'_t(\xi, v) \ud v
                       = \int_0^{v^j_t} y_t(\xi, v) \ud v
                         + \int_{v^i_t}^{v^{i + 1}_t}
                            \Big(y'_t(\xi, v) - y_t(\xi, v)\Big) \ud v  \\
    & \qquad \qquad    = \hu_t(\xi, v^j_t; v^j_t), \forall j \geq i + 1  \\
    \Infer &
      g'(\xi, v^j_t) = g(\xi, v^j_t), \forall j  \\
  \end{align*}
  where
  \begin{align*}
     & \int_{v^i_t}^{v^{i + 1}_t} \Big(y'_t(\xi, v) - y_t(\xi, v)\Big) \ud v
       = \int_{v^i_t}^{v^{i + 1}_t} \int_0^v
          \Big(w'_t(\xi, \nu) - w_t(\xi, \nu)\Big) \ud \nu \ud v  \\
    =& \int_{v^i_t}^{v^{i + 1}_t} \int_{v^i_t}^v
          \Big(w'_t(\xi, \nu) - w_t(\xi, \nu)\Big) \ud \nu \ud v
       = \int_{v^i_t}^{v^{i + 1}_t} \int_{\nu}^{v^{i + 1}_t}
          \Big(w'_t(\xi, \nu) - w_t(\xi, \nu)\Big) \ud v \ud \nu   \\
    =& \big(v^{i + 1}_t - v^i_t\big) \cdot
         \big(w'_t(\xi, v^i_t) - w_t(\xi, v^i_t)\big) +
         \big(v^{i + 1}_t - v^*\big) \cdot
         \big(w'_t(\xi, v^*) - w_t(\xi, v^*)\big)  \\
     &+\big(v^{i + 1}_t - v^{i + 1}_t\big) \cdot
         \big(w'_t(\xi, v^{i + 1}_t) - w_t(\xi, v^{i + 1}_t)\big) \\
    =& \big(v^{i + 1}_t - v^i_t\big) \cdot
          \frac{v^{i + 1}_t - v^*}{v^{i + 1}_t - v^i_t}w_t(\xi, v^*)
        + \big(v^{i + 1}_t - v^*\big) \cdot
          \big(-w_t(\xi, v^*)\big)
        + 0 = 0.
  \end{align*}
  As desired, no constraint gets violated, but objective value gets improved.
\end{proof}

\subsection{Proof of Lemma \ref{lem:concaveapprox}}
\label{app:lem:concaveapprox}
\begin{proof}[Proof of Lemma \ref{lem:concaveapprox}]
  Since $\varphi$ is concave, hence $\beta_a \geq \beta_b$. If they coincide,
  i.e., $\beta_a = \beta_b$, then we are done, and $\varphi$ must be the line
  segment that connects $(a, \varphi(a))$ and $(b, \varphi(b))$.

  Otherwise, $\beta_a - \beta_b > 0$, and since $|\beta_a|, |\beta_b| <
  +\infty$, $\beta_a - \beta_b < +\infty$.

  By adding a linear offset to $\varphi$, we assume WLOG that $\varphi(a) =
  \varphi(b) = 0$. Hence $\beta_a > 0 > \beta_b$. Note that $\max_\xi
  \varphi(\xi)\min\{\varphi(a), \varphi(b)\}$ is not increased after adding the
  linear offset.

  Then for any $\delta > 0$, we make $n$ queries to the evaluation oracle of
  $\varphi$ at $l_1, \ldots, l_m, l_{m + 1}, \ldots, l_n$, where $l_m$ is the
  last point such that $\varphi(l_{m - 1}) \leq \varphi(l_m) >
  \varphi(l_{m + 1})$. Since $\varphi$ is concave, such $m$ is unique.

  Then points $l_1, \ldots, l_m$ are adaptively chosen according to the
  following equations, until $\varphi$ starts to decrease.
  \begin{align*}
    & l_0 = a, l_i = \min\{l_{i - 1} + \delta / \beta_{i - 1}, b\},  \\
    & \beta_0 = \beta_a,~
      \beta_i = \frac{\varphi(l_i) - \varphi(l_{i - 1})}{l_i - l_{i - 1}}
                - \frac{\varphi(l_i)}{l_i - b}.
  \end{align*}
  Since $\varphi$ is concave, $\beta_i \geq 0$, and $\beta_i \leq \beta_a -
  \beta_b < +\infty$. Hence $l_i > l_{i - 1}$, as long as $l_{i - 1} < b$. Thus
  $m < +\infty$.

  For the last half of points, $l_{m + 1}, \ldots, l_n$, are chosen similarly
  but in the reverse order, starting from $l_{n + 1} = b$ until the next point
  is less than $l_m$.

  In the remainder of the proof, we show that $m \leq 2\max_\xi\varphi(\xi) /
  \delta + \log(\beta_a - \beta_b)/(-\beta_b)$, which then implies
  that $n \leq 4\max_\xi\varphi(\xi) / \delta +
  \log(\beta_a - \beta_b)^2/(-\beta_a\beta_b)$.

  Note that to achieve a given multiplicative approximation ratio $\kappa$,
  we can let $\delta = \kappa \cdot \varphi(\frac{a + b}2)$, because
  $2\varphi(\frac{a + b}2) \geq \max_{\xi} \varphi(\xi) \geq
  \varphi(\frac{a + b}2)$.

  Then by connecting points $(a, \varphi(a)), (l_1, \varphi(l_1)), \ldots,
  (l_n, \varphi(l_n)), (b, \varphi(b))$, we get a lower bound $\uvp$ for
  $\varphi$, which is concave and piece-wise linear, and has at most $n + 1$
  pieces. The upper bound is simply $\ovp(\xi) = \uvp(\xi) + \delta$.

  From now on, we only consider $0 \leq i \leq m$. For any $l_i \leq l \leq
  l_{i + 1}$, $\varphi(l)$ is $\delta$-approximated by the line segment that
  connects $(l_i, \varphi(l_i))$ and $(l_{i + 1}, \varphi(l_{i + 1}))$, i.e.,
  by the concavity of $\varphi$,
  \begin{align*}
    \varphi(l) \geq \frac{\varphi(l_i)(b - l)}{b - l_i},\quad
    \varphi(l) \leq \frac{\varphi(l_{i - 1})(l - l_i)}{l_{i - 1} - l_i}
                     + \frac{\varphi(l_i)(l_{i - 1} - l)}{l_{i - 1} - l_i}.
  \end{align*}
  The gap between the upper and lower bounds of $\varphi(l)$ is,
  \begin{align*}
    & \frac{\varphi(l_{i - 1})(l - l_i)}{l_{i - 1} - l_i}
    + \frac{\varphi(l_i)(l_{i - 1} - l)}{l_{i - 1} - l_i}
    - \frac{\varphi(l_i)(b - l)}{b - l_i}
    \\
    &= \bigg(\frac{\varphi(l_i) - \varphi(l_{i - 1})}{l_i - l_{i - 1}}
            - \frac{\varphi(l_i)}{l_i - b}\bigg)l +
       \bigg(\frac{\varphi(l_i)l_{i - 1} - \varphi(l_{i - 1})l_i}
                  {l_{i - 1} - l_i}
             - \frac{\varphi(l_i)b}{b - l_i}\bigg) \\
    &= \beta_i l + \bigg(\frac{\varphi(l_i)l_{i - 1} - \varphi(l_{i - 1})l_i}
                              {l_{i - 1} - l_i}
                         - \frac{\varphi(l_i)b}{b - l_i}\bigg)
    \\
    &\leq \beta_i (l_i + \delta / \beta_i) + \bigg(\frac{\varphi(l_i)l_{i - 1}
          - \varphi(l_{i - 1})l_i}{l_{i - 1} - l_i}
          - \frac{\varphi(l_i)b}{b - l_i}\bigg)  \\
    &= \delta + \varphi(l_i) - \varphi(l_i) = \delta.
  \end{align*}

  Then, we show that $m$ is small, so the $\delta$-approximation is achieved
  by a few of queries. Let $\beta^*_i$ be defined as follows,
  \begin{align*}
    \beta^*_i = \beta_i + \frac{\varphi(l_i)}{l_i - b} - \beta_b
      \geq \beta_i + \frac{\beta_b(l_i - b)}{l_i - b} - \beta_b = \beta_i.
  \end{align*}
  Note that $\varphi(l_m) > 0 = \varphi(b)$, hence $l_m < b$, then for
  $i \leq m - 1$,
  \begin{align*}
    & \beta_i = \frac{\varphi(l_i) - \varphi(l_{i - 1})}{l_i - l_{i - 1}}
                - \frac{\varphi(l_i)}{l_i - b}
              \geq \frac{\varphi(l_i) - \varphi(l_{i - 1})}{l_i - l_{i - 1}}
              = \beta^*_i - \beta^*_{i + 1},  \\
    & \beta^*_i(l_{i + 1} - l_i) \geq \beta_i(l_{i + 1} - l_i)
      = \beta_i \cdot \frac{\delta}{\beta_i} = \delta,  \\
    \Infer~&
      \delta - (\beta^*_i - \beta^*_{i + 1} + \beta_b)(l_{i + 1} - l_i) \leq
      (\beta^*_{i + 1} - \beta_b)(l_{i + 1} - l_i)
      = \varphi(l_{i + 1}) - \varphi(l_i),  \\
    \Infer~& \delta \leq
      (\beta^*_i - \beta^*_{i + 1} + \beta_b)(l_{i + 1} - l_i)
      + \varphi(l_{i + 1}) - \varphi(l_i)  \\
    &~~= \frac{(\beta^*_i - \beta^*_{i + 1} + \beta_b)\delta}{\beta_i}
            + \varphi(l_{i + 1}) - \varphi(l_i),  \\
    \Infer~& m\delta \leq
      \sum_{i = 0}^{m - 1} \bigg(
            \frac{(\beta^*_i - \beta^*_{i + 1} + \beta_b)\delta}{\beta_i}
            + \varphi(l_{i + 1}) - \varphi(l_i)\bigg)  \\
    &~~=    \varphi(l_m) + \delta \sum_{i = 0}^{m - 1}
            \frac{\beta^*_i - \beta^*_{i + 1} + \beta_b}{\beta_i}  \\
    &~~\leq \max_{\xi} \varphi(\xi) + \frac{(m + k)\delta}2  \\
    \Infer~& m \leq \frac{2\max_{\xi} \varphi(\xi)}\delta + k,
  \end{align*}
  where there are at most $k$ different $i$'s such that
  $(\beta^*_i - \beta^*_{i + 1} + \beta_b) / \beta_i \geq 1/2$. Because
  \begin{align*}
    \frac{\beta^*_i - \beta^*_{i + 1} + \beta_b}{\beta_i} \geq \frac12
    \Infer \beta^*_i \geq (\beta_i - \beta^*_i) + 2\beta^*_{i + 1} - 2\beta_b
           \geq 2\beta^*_{i + 1}.
  \end{align*}
  Since for any $i \leq m - 1$, $\beta^*_i \leq \beta_a - \beta_b$,
  $\beta^*_{i + 1} \geq - \beta_b$, and $\beta^*_i \geq \beta^*_{i + 1}$, $k$
  must be no more than $\log (\beta_a - \beta_b) / (-\beta_b)$.

  Combining with the last half of points, we have
  \begin{align*}
    n \leq \frac{4}{\delta}\max_{\xi}\varphi(\xi)
      + \log \frac{(\beta_a - \beta_b)^2}{-\beta_a\beta_b}.
  \end{align*}
\end{proof}


\end{document}